\documentclass[12pt,onecolumn,twoside]{IEEEtran}

\usepackage{calc}

\usepackage{multirow}
\usepackage{cite}
\usepackage{graphicx,subfigure}
\usepackage{psfrag}
\usepackage{amsmath,amssymb,amsthm}
\usepackage{color}
\usepackage{tikz} 
\usepackage{bm}
\usepackage{bbm}
\usepackage{verbatim}
\usepackage[T1]{fontenc}

\usepackage{cases}
\usepackage[noend]{algpseudocode}

\usepackage{xcolor}
\usepackage[most]{tcolorbox}
\usepackage{bm}

\usepackage{tabu}
\usepackage{mathtools}
\usepackage{xifthen}
\usepackage{booktabs}
\usepackage{algorithm}
\usepackage{algpseudocode}
\usepackage{arydshln}

\interdisplaylinepenalty=2500

\makeatletter
\def\BState{\State\hskip-\ALG@thistlm}
\makeatother







\newtheorem{theorem}{Theorem}

\newtheorem{lemma}{Lemma}
\newtheorem{definition}{Definition}










\newcommand{\bit}{\begin{itemize}}
\newcommand{\eit}{\end{itemize}}

\newcommand{\bc}{\begin{center}}
\newcommand{\ec}{\end{center}}

\newcommand{\ba}{\begin{array}}
\newcommand{\ea}{\end{array}}

\newcommand{\beq}{\begin{equation}}
\newcommand{\eeq}{\end{equation}}

\newcommand{\beqn}{\begin{equation*}}
\newcommand{\eeqn}{\end{equation*}}

\newcommand{\bean}{\begin{eqnarray*}}
\newcommand{\eean}{\end{eqnarray*}}
\newcommand{\bea}{\begin{eqnarray}}
\newcommand{\eea}{\end{eqnarray}}



\def\hv{\boldsymbol{h}}

\def\wv{\boldsymbol{w}}
\def\xv{\boldsymbol{x}}


\newcommand{\Ac}{{\mathcal A}}
\newcommand{\Bc}{{\mathcal B}}

\newcommand{\Fc}{{\mathcal F}}

\newcommand{\Sc}{{\mathcal S}}

\newcommand{\T}{{\scriptscriptstyle\mathsf{T}}}




\algnewcommand{\IfThenElse}[3]{  \State \algorithmicif\ #1\ \algorithmicthen\ #2\ \algorithmicelse\ #3}

\newcommand{\non}{\nonumber}

\newcommand{\Lh}{\ell}
\newcommand{\Me}{\wv}
\newcommand{\Ry}{\mathrm{s}}

\newcommand{\Vr}{\mathrm{v}} 
\newcommand{\gn}{\eta}

\newcommand{\cb}{c} 
\newcommand{\Ss}{\mathbb{S}}

\newcommand{\Lin}{l}
\newcommand{\Lk}{\mathrm{u}}

\newcommand{\rtuple}{)} 
\newcommand{\ltuple}{(}

\newcommand{\AND}{\land}    
\newcommand{\OR}{\lor}

\newcommand{\MVBAInputMsg}{\wv}

\newcommand{\EncodedSymbol}{y}

\newcommand{\Field}{\mathbb{F}}

\newcommand{\ABBA}{\mathsf{ABBA}}

\newcommand{\thisnodeindex}{i}                       
\newcommand{\send}{\textbf{send}}

\newcommand{\ABAoutput}{b}    
                    
\newcommand{\Output}{\textbf{output}}                      
\newcommand{\terminate}{\textbf{terminate}}

\newcommand{\wait}{\textbf{wait}}                             
                        
\newcommand{\ECC}{\mathrm{ECC}}

\newcommand{\ECCEnc}{\mathrm{ECCEnc}}   
\newcommand{\ECCDec}{\mathrm{ECCDec}} 
\newcommand{\Alphabet}{\Field_{\alphabetsize}}      
        
\newcommand{\MVBAOutputMsg}{\hat{\wv}}

\newcommand{\OECsymbolset}{\mathbb{Z}_{\mathrm{oec}}}   
\newcommand{\Lkset}{\mathbb{U}}   
\newcommand{\SIone}{\mathsf{SI1}}                         
\newcommand{\SItwo}{\mathsf{SI2}}

\newcommand{\defaultvalue}{\bot}      
\newcommand{\ECCEncindicator}{I_\mathrm{ecc}}

\newcommand{\Phthreeindicator}{I_\mathrm{3}}

\newcommand{\OECCorrectSymbolSet}{\mathbb{Y}_{\mathrm{oec}}}

\newcommand{\OEC}{\mathrm{OEC}}

\newcommand{\Pass}{\textbf{pass}}

\newcommand{\DM}{\mathrm{DM}}

\newcommand{\networksizen}{n}                                           
\newcommand{\kencode}{k}                                            
\newcommand{\networkfaultsizet}{t}

\newcommand{\successindicator}{\mathrm{s}}

\newcommand{\IDCOOL}{\mathrm{ID}}

\newcommand{\BA}{\mathsf{BA}}

\newcommand{\BBA}{\mathsf{BBA}}

\newcommand{\OciorRBC}{\mathsf{OciorRBC}}

\newcommand{\ABA}{\mathrm{ABA}}

\newcommand{\BUA}{\mathrm{UA}}

\newcommand{\RBA}{\mathsf{RBA}}      
\newcommand{\BRBA}{\mathsf{BRBA}}       
        
\newcommand{\HMDM}{\mathrm{HMDM}}

\newcommand{\RBC}{\mathrm{RBC}}

\newcommand{\OECsymbol}{z}                                                                            
\newcommand{\OECsymbolsetInitial}{\mathbb{Z}_{\mathrm{oec}}}     
\newcommand{\OECSI}{I_\mathrm{oec}}                                                                        
     
\newcommand{\OECSIFinal}{I_\mathrm{oecfinal}}                                                                              
         
\newcommand{\VrOutput}{\mathrm{v}^\diamond} 
\newcommand{\OciorRBA}{\mathsf{OciorRBA}}            
              
\newcommand{\deliver}{\textbf{deliver}}                     
\newcommand{\SHMDM}{\mathsf{SHMDM}}

\newcommand{\alphabetsize}{q}

\newcommand{\OciorACOOL}{\mathsf{OciorACOOL}}           
\newcommand{\COOLBUA}{\mathsf{COOL}\text{-}\mathsf{UA}}    
\newcommand{\COOLHMDM}{\mathsf{COOL}\text{-}\mathsf{HMDM}}                                                                      
\newcommand{\SYMBOL}{\mathsf{SYMBOL}}                         
\newcommand{\defeqnew}{:=}                         
\newcommand{\READY}{\mathsf{READY}}                            
\newcommand{\CORRECTSYMBOL}{\mathsf{CORRECT}}                         
\newcommand{\LEADER}{\mathsf{LEADER}}                            
\newcommand{\INITIAL}{\mathsf{INITIAL}}

\newcommand{\MeNew}{\bar{\Me}}
\newcommand{\RyNew}{\bar{\Ry}}
\newcommand{\VrNew}{\bar{\Vr}}  
\newcommand{\ysymbolNew}{\bar{y}}  
\newcommand{\OECCorrectSymbolSetNew}{\bar{\mathbb{Y}}_{\mathrm{oec}}}    
\newcommand{\ystrongmajority}{\check{y}}  
\newcommand{\SsNew}{\bar{\Ss}}
\newcommand{\MeSecond}{\tilde{\Me}}
\newcommand{\RySecond}{\tilde{\Ry}}
\newcommand{\VrSecond}{\tilde{\Vr}}  
\newcommand{\ysymbolSecond}{\tilde{y}}  
\newcommand{\SsSecond}{\tilde{\Ss}}
\newcommand{\ABBAOutput}{\Vr^{\star}}   
\newcommand{\NEWSYMBOL}{\mathsf{NewSYMBOL}}    
   
\newcommand{\ySymbolSet}{\mathbb{M}}   
\newcommand{\OciorCOOL}{\mathsf{OciorCOOL}}             
\newcommand{\COOL}{\mathsf{COOL}}                   
                   
\newcommand{\ECCEncodingFunction}{f}                   
\newcommand{\Meg}{\bar{\wv}}   
\newcommand{\AsetNOzero}{\ddot{\Ac}}   
\newcommand{\BsetNOone}{\ddot{\Bc}}    
\newcommand{\gnNozero}{\ddot{\eta}}  
\newcommand{\iprime}{i_1}  
\newcommand{\iprimeprime}{i_2}  
\newcommand{\ithree}{i_3}  
\newcommand{\LksetHonest}{\mathbb{V}}   
   
\newcommand{\nbar}{\bar{n}}   
             
\newcommand{\OciorACOOLsmallt}{\mathsf{OciorACOOL}^\star}            
\newcommand{\LEADERMESSAGE}{\mathsf{MESSAGE}}


\algdef{SE}[SUBALG]{Indent}{EndIndent}{}{\algorithmicend\ }%
\algtext*{Indent}
\algtext*{EndIndent}

\pagestyle{empty}

\begin{document}
\sloppy
\title{COOL Is Optimal in Error-Free Asynchronous Byzantine Agreement}

\author{Jinyuan Chen 
}

\maketitle
\pagestyle{headings}

\begin{abstract}
$\COOL$ (Chen'21) is an error-free, information-theoretically secure Byzantine agreement ($\BA$) protocol proven to achieve $\BA$ consensus in the synchronous setting for an $\Lh$-bit message, with a total communication complexity of $O(\max\{n\Lh, nt \log \alphabetsize\})$ bits, four communication rounds in the worst case, and a single invocation of a binary $\BA$, under the optimal resilience assumption $n \geq 3t + 1$ in a network of $n$ nodes, where up to $t$ nodes may behave dishonestly. Here, $\alphabetsize$ denotes the alphabet size of the error correction code used in the protocol.

In this work, we present an adaptive variant of $\COOL$, called $\OciorACOOL$, which achieves error-free,  information-theoretically secure $\BA$ consensus in the asynchronous setting with total $O(\max\{n\Lh, n t \log \alphabetsize \})$  communication bits, $O(1)$  rounds, and a single invocation of an asynchronous binary $\BA$ protocol, still under the optimal resilience assumption $n \geq 3t + 1$. Moreover, $\OciorACOOL$ retains the same low-complexity, traditional $(n, k)$ error-correction encoding and decoding as $\COOL$, with $k=t/3$.  
\end{abstract}


 \section{Introduction}

Byzantine agreement ($\BA$) has been extensively studied for over forty years~\cite{PSL:80}. 
In a $\BA$ problem, $n$ consensus nodes aim to agree on a common $\Lh$-bit message, where up to $t$ of the nodes may be dishonest. 
$\BA$ is widely considered as a fundamental building block of Byzantine fault-tolerant distributed systems and cryptographic protocols~\cite{PSL:80, LSP:82, ChenDISC:21, Chen:2020arxiv, ChenOciorCOOL:24, ChenOciorMVBA:24, ChenOciorABA:25, ChenOcior:25, LCabaISIT:21, ZLC:23, FH:06, LV:11, GP:20, LDK:20, NRSVX:20, Patra:11, CT:05,CDG+:24}.    

In the study of multi-valued, error-free $\BA$, significant efforts have been made to improve communication complexity, round complexity, and resilience~\cite{LV:11, GP:20, LDK:20, NRSVX:20, ChenDISC:21, Chen:2020arxiv, ChenOciorCOOL:24}. 
In this direction, Chen proposed the $\COOL$ (or $\OciorCOOL$) protocol, which achieves multi-valued, error-free, information-theoretically secure $\BA$ consensus in the synchronous setting, with a communication complexity of $O(\max\{n\Lh, nt \log \alphabetsize\})$ bits and with four communication rounds in the worst case and a single invocation of a binary $\BA$, under the optimal resilience condition $n \geq 3t + 1$~\cite{ChenDISC:21, Chen:2020arxiv, ChenOciorCOOL:24}.  
Here, $\alphabetsize$ denotes the alphabet size of the error correction code used in the protocol.

The $\COOL$ protocol introduced two key primitives: \emph{Unique Agreement} ($\BUA$) and \emph{Honest-Majority Distributed Multicast} ($\HMDM$).

\emph{Unique Agreement:}   $\BUA$ is a variant of $\BA$. In $\BUA$, each node~$i$ inputs an initial value $\MVBAInputMsg_i$ and seeks to produce an output of the form $(\MVBAInputMsg, \successindicator, \Vr)$, where $\successindicator \in \{0,1\}$ is a success indicator and $\Vr \in \{0,1\}$ is a vote. A $\BUA$ protocol guarantees three properties. One property of $\BUA$ is that if two honest nodes output $(\wv', 1, *)$ and $(\wv'', 1, *)$, respectively, then $\wv' = \wv''$ (\emph{Unique Agreement}). Additionally, if an honest node outputs $(*, *, 1)$, then at least $t+1$ honest nodes eventually output $(\wv, 1, *)$ for the same $\wv$ (\emph{Majority Unique Agreement}). Furthermore, if all honest nodes start with the same input value $\wv$, then all honest nodes eventually output $(\wv, 1, 1)$ (\emph{Validity}).

\emph{Honest-Majority Distributed Multicast:} 
In the $\HMDM$ problem, at least $t+1$ honest nodes act as senders, each multicasting a message to all $n$ nodes. 
The $\HMDM$ property ensures that if all honest senders input the same message $\wv$, then every honest node eventually outputs $\wv$.

In this work, we present an adaptive variant of $\COOL$, called $\OciorACOOL$, which achieves $\BA$ consensus in the asynchronous setting with total communication 
$O(\max\{ n\Lh, n t \log \alphabetsize \})$ bits, $O(1)$ rounds, and a single invocation of an asynchronous  binary $\BA$ protocol,   under the optimal resilience assumption $n \geq 3t + 1$. 
Moreover, $\OciorACOOL$ retains the same low-complexity, traditional $(n, k)$ error-correction encoding and decoding as $\COOL$, with $k = t/3$.

 As shown in Fig.~\ref{fig:OciorCOOLProtocol}, the $\COOL$ protocol is composed of three main components: $\COOLBUA$, a binary $\BA$ ($\BBA$), and $\COOLHMDM$. 
$\OciorACOOL$ extends $\COOL$ to the asynchronous setting. 
As shown in Fig.~\ref{fig:OciorACOOLProtocol}, the $\OciorACOOL$ protocol consists of $\COOLBUA[1]$, $\COOLBUA[2]$, an asynchronous binary $\BA$ ($\ABBA$), a binary reliable Byzantine agreement ($\BRBA$), and $\COOLHMDM[2]$. 
Some existing $\ABBA$ protocols require each honest node to have an input value before producing an output. 
In such cases, $\BRBA$ is employed to ensure the \emph{Totality} property: if one honest node outputs a value, then every honest node eventually outputs a value. 
If the invoked $\ABBA$ protocol already guarantees the \emph{Totality} property, then $\BRBA$ is not required in $\OciorACOOL$.

 \begin{figure}
\centering
\includegraphics[width=9cm]{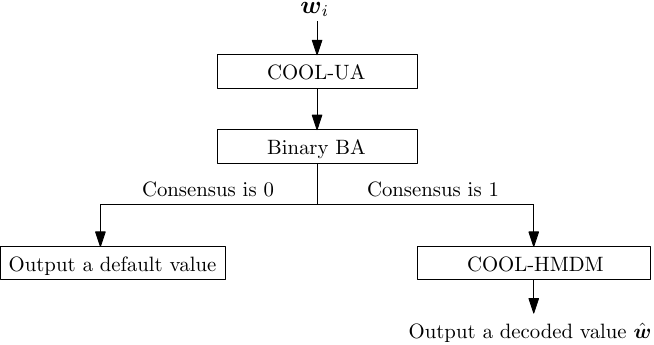}
\caption{A block diagram of the   $\COOL$ protocol, which consists of the $\COOLBUA$, binary $\BA$,  and $\COOLHMDM$ algorithms. 
}
\label{fig:OciorCOOLProtocol}
\end{figure}

 \begin{figure} 
\centering
\includegraphics[width=9cm]{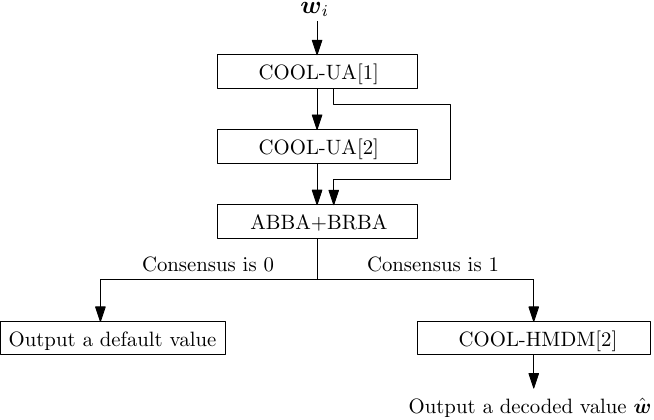}
\caption{A block diagram of the   $\OciorACOOL$ protocol, which consists of the  $\COOLBUA[1]$, $\COOLBUA[2]$, asynchronous  binary $\BA$  ($\ABBA$),   binary reliable Byzantine agreement  ($\BRBA$),  and $\COOLHMDM[2]$ algorithms.  If an invoked $\ABBA$ algorithm  already guarantees  the \emph{Totality} property, then $\BRBA$  is not required  in $\OciorACOOL$. 
}
\label{fig:OciorACOOLProtocol}
\end{figure}

In the direction of error-free asynchronous $\BA$ ($\ABA$), Li and Chen proposed an $\ABA$ protocol built on $\COOL$ that achieves communication complexity $O(\max\{ n\Lh, n t \log \alphabetsize \})$ and expected $O(1)$ rounds, but with a weaker resilience requirement $n \geq 5t + 1$ \cite{LCabaISIT:21}. 
Chen later proposed a multi-valued validated Byzantine agreement protocol with expected communication complexity $O(n\Lh \log n + n^2 \log \alphabetsize)$ bits,   and expected $O(\log n)$ rounds under the optimal resilience   $n \geq 3t + 1$~\cite{ChenOciorMVBA:24}. 
Another $\ABA$ protocol by Chen  achieves expected communication complexity $O(n\Lh + n^3 \log n)$ bits and expected round complexity $O(1)$,   under optimal resilience $n \geq 3t + 1$ \cite{ChenOciorABA:25}. 
Erbes and Wattenhofer proposed an $\ABA$ protocol with communication complexity 
$O\!\left(\frac{n\Lh}{\min\{1, \epsilon^2\}} + n^2 \max\{1, \log \frac{1}{\epsilon}\}\right)$ bits
and near-optimal resilience $t < \frac{n}{3 + \epsilon}$ for some $\epsilon > 0$ \cite{EW:25}.

Recently, Abraham and Asharov~\cite{AA:25} built upon $\COOL$ by incorporating list decoding \cite{Sudan:97,GW:13} for an $(n, k')$ error correction code with $k' = t/7$, achieving asynchronous $\BA$ consensus with a communication complexity of $O(\max\{ n\Lh, n^2 \log n \})$ bits and a single invocation of a binary $\BA$ protocol. 
However, compared to traditional unique decoding of error correction codes, list decoding is less mature in terms of practical implementations. Therefore, off-the-shelf error correction code decoders cannot be directly used for the protocol in \cite{AA:25}.

In contrast, $\OciorACOOL$ achieves error-free $\ABA$ without relying on list decoding. 
It retains the same low-complexity, traditional $(n, k)$ error-correction encoding and decoding as $\COOL$, with $k = t/3$.  
Compared to the protocol in~\cite{AA:25}, $\OciorACOOL$ achieves a smaller communication cost, providing at least a $57\%$ reduction in total communication.

\subsection{Primitives}

 \vspace{.1 in}

  \noindent {\bf Information-Theoretic (IT) Protocol.} 
A protocol that guarantees all required properties without relying on any cryptographic assumptions--such as digital signatures or hash functions--except for the common coin or binary $\BA$ assumptions, is said to be \emph{information-theoretically secure}.  

\medskip
\noindent {\bf Error-Free Protocol.} 
Under the common coin or binary $\BA$ assumptions, a protocol that guarantees all required properties in \emph{every} execution is said to be \emph{error-free}.

\vspace{.1 in} 

\noindent {\bf Error Correction Code ($\ECC$).} 
An $(n, k)$ error correction code consists of an encoding function 
\(\ECCEnc: \Alphabet^{k} \to \Alphabet^{n}\) and a decoding function 
\(\ECCDec: \Alphabet^{n^\diamond} \to \Alphabet^{k}\), where \(\Alphabet\) denotes the alphabet of each symbol, and \(n^\diamond \le n\) represents the number of symbols available for decoding.  
Specifically, the encoding  $[\EncodedSymbol_1, \EncodedSymbol_2, \ldots, \EncodedSymbol_n] \gets \ECCEnc(n, k, \MVBAInputMsg)$
outputs $n$ encoded symbols, where $\EncodedSymbol_j(\MVBAInputMsg)$ denoted  the $j$-th symbol encoded from the message $\MVBAInputMsg$.   
An $(n, k)$ Reed-Solomon (RS) code can correct up to $t$ Byzantine (arbitrary) errors in $n^\diamond$ observed symbols, provided that $2t+  k \leq   n^{\diamond}$  and  $n^\diamond \leq  n$. 
The RS code operates over a  finite field $\Alphabet$, subject to the constraint $n \leq \alphabetsize  -1$ (cf.~\cite{RS:60}), where  $\alphabetsize$ denotes the alphabet size.  
 Since RS codes are limited by the constraint $n \leq \alphabetsize - 1$, alternative error correction codes with constant alphabet size, such as expander codes~\cite{SS:96}, can be used instead.

\vspace{.1 in} 

 \noindent  {\bf Online Error  Correction  ($\OEC$).}     
 Online error correction is particularly useful for decoding messages in asynchronous settings~\cite{BCG:93}. 
A node may not be able to decode  the message from $n^\diamond$ symbol observations,   when the actual number of Byzantine errors among these observations, denoted by $t^\diamond$, satisfies 
$2t^\diamond + k > n^\diamond$.  
In such cases, the node waits for an additional symbol before attempting to decode again. This procedure repeats until the message is successfully reconstructed. 
In the worst case, $\OEC$ may perform up to $t$ such trials before decoding the message.

\begin{definition}[{\bf Byzantine Agreement ($\BA$)}]
In a $\BA$ protocol, consensus nodes aim to reach an agreement on a common value. 
It guarantees three properties:
\begin{itemize}
\item {\bf Termination:} If all honest nodes have received their inputs, then every honest node eventually produces an output and terminates.
\item {\bf Consistency:} If any honest node outputs a value $\wv$, then all honest nodes eventually output the same value $\wv$.
\item {\bf Validity:} If all honest nodes start with the same input value $\wv$, then every honest node eventually outputs $\wv$.
\end{itemize}
\end{definition}

 \begin{definition} [{\bf Reliable Byzantine Agreement ($\RBA$)}]
$\RBA$  is a  variant    of both the Byzantine agreement  and reliable broadcast problems. 
An $\RBA$ protocol guarantees the following  three properties:
\begin{itemize}
\item  {\bf Consistency:} If any two honest nodes output $\wv'$ and $\wv''$, respectively, then  $\wv'=\wv''$.
\item   {\bf Validity:} Same as in $\BA$.   
\item  {\bf Totality:}  If one  honest node outputs a value, then every honest node  eventually outputs a value.  
\end{itemize} 
\end{definition}  
The proposed $\OciorACOOL$ uses a binary $\RBA$ ($\BRBA$), which is described in  Lines~\ref{line:OciorACOOLRBAbegin}-\ref{line:OciorACOOLph3} of Algorithm~\ref{algm:OciorACOOL}. 
In Algorithm~\ref{algm:OciorRBA}, we provide an optimal, error-free, multi-valued $\RBA$ protocol, denoted as $\OciorRBA$, which is an updated version of Algorithm~4 in~\cite{ChenOciorMVBA:24}.
The $\OciorRBA$ protocol achieves error-free multi-valued $\RBA$ consensus with a total communication complexity of $O(\max\{n\Lh, n^2 \log \alphabetsize \})$ bits, requiring at most \emph{five} asynchronous communication rounds in the worst case (\emph{four} rounds in the good case, when all honest nodes have the  same input message), under the optimal resilience condition $n \geq 3t + 1$. 
The proof of $\OciorRBA$ is similar to that of the $\OciorRBC$ protocol (see the proof of $\OciorRBC$ in~\cite{ChenOciorCOOL:24}).

 \begin{definition}[{\bf Reliable Broadcast ($\RBC$)}]
In the reliable broadcast problem, a designated leader disseminates an input message to $n$ nodes. 
A protocol implementing $\RBC$ must satisfy the following properties:
\begin{itemize}
\item {\bf Consistency:} Same as in $\RBA$.     
\item {\bf Validity:} If the leader is honest and broadcasts a value $\wv$,   every honest node eventually outputs $\wv$.
\item {\bf Totality:} Same as in $\RBA$.   
\end{itemize}
\end{definition}
In Algorithm~\ref{algm:OciorRBC}, we present an optimal, error-free, multi-valued, balanced $\RBC$ protocol, denoted as $\OciorRBC$, which is an updated version of Algorithm~3 in~\cite{ChenOciorCOOL:24}.
The $\OciorRBC$ protocol achieves error-free multi-valued $\RBC$ consensus with a total communication complexity of $O(\max\{n\Lh, n^2 \log \alphabetsize\})$ bits (a per-node communication complexity of $O(\max\{\Lh, n \log \alphabetsize\})$ bits), requiring at most \emph{six} asynchronous communication rounds in the worst case (or \emph{five} rounds in the good case, without balanced communication), under the optimal resilience condition $n \geq 3t + 1$ (see the proof of $\OciorRBC$ in~\cite{ChenOciorCOOL:24}).

 \begin{definition}[{\bf Honest-Majority Distributed Multicast ($\HMDM$)} \cite{Chen:2020arxiv, ChenDISC:21, ChenOciorCOOL:24}] \label{def:DM}
In the \emph{distributed multicast} ($\DM$) problem, a subset of nodes acts as senders that multicast messages to all $n$ nodes, where up to $t$ nodes may be dishonest. Each sender node has its own input message.  
A protocol is called a $\DM$ protocol if it satisfies the following property:
\begin{itemize}
\item {\bf Validity:} If all honest senders input the same message $\wv$, then every honest node eventually outputs $\wv$.
\end{itemize}
The $\DM$ problem is referred to as an \emph{honest-majority distributed multicast} ($\HMDM$) if at least $t + 1$ of the senders are honest. The $\HMDM$ abstraction was used as a building block in the $\COOL$ protocol~\cite{Chen:2020arxiv, ChenDISC:21, ChenOciorCOOL:24}.
\end{definition}

\begin{definition}[{\bf Strongly-Honest-Majority Distributed Multicast ($\SHMDM$)} \cite{ChenOciorMVBA:24}]  \label{def:SHMDM}
A distributed multicast problem is called a {\em strongly-honest-majority distributed multicast}  if the set of senders is fixed and denoted by $\Sc \subseteq [n]$, where $|\Sc| \ge 3t + 1$. 
A $\SHMDM$ protocol satisfies the following property:
\begin{itemize}
    \item {\bf Validity:} If all honest senders in $\Sc$ input the same message $\wv$,  then every honest node eventually outputs $\wv$.
\end{itemize}
\end{definition}

\begin{definition}[{\bf Unique Agreement ($\BUA$)} \cite{Chen:2020arxiv, ChenDISC:21, ChenOciorCOOL:24}] \label{def:BUA}
In a $\BUA$ protocol, each node begins with an initial input value and aims to produce an output of the form $(\MVBAInputMsg, \successindicator, \Vr)$, where $\successindicator \in \{0,1\}$ denotes a success indicator and $\Vr \in \{0,1\}$ represents a vote. 
A $\BUA$ protocol must satisfy the following properties:
\begin{itemize}
    \item {\bf Unique Agreement:} If two honest nodes output $(\wv', 1, *)$ and $(\wv'', 1, *)$, respectively, then $\wv' = \wv''$.
    \item {\bf Majority Unique Agreement:} If an honest node outputs $(*, *, 1)$, then at least $t + 1$ honest nodes eventually output $(\wv, 1, *)$ for the same $\wv$.
    \item {\bf Validity:} If all honest nodes begin with the same input value $\wv$, then all honest nodes eventually output $(\wv, 1, 1)$.
\end{itemize}
\end{definition}

\vspace{3pt}
\emph{Notations}: In an asynchronous setting, when measuring communication rounds, we use \emph{asynchronous rounds}, where each round need not be synchronized across the distributed nodes. 
Let $:=$ denote ``is defined as.'' 
Let $\defaultvalue$ denote a default or empty value.  
Let $[b] := \{1, 2, \dotsc, b\}$ and $[a,b] := \{a, a+1, a+2, \dotsc, b\}$ for integers $a$ and $b$. 
We use $f(x) = O(g(x))$ to denote that $\limsup_{x \to \infty} \frac{|f(x)|}{g(x)} < \infty$. 
We use $f(x) = \Omega(g(x))$ to denote that $\liminf_{x \to \infty} \frac{f(x)}{g(x)} > 0$. 
We use $f(x) = \tilde{O}(g(x))$ to mean that $\limsup_{x \to \infty} \frac{|f(x)|}{(\log x)^a g(x)} < \infty$ for some constant $a \ge 0$.

\begin{algorithm}
\caption{$\OciorACOOL$ protocol with identifier $ \IDCOOL $.  Code is shown for    Node~$\thisnodeindex$ for $ \thisnodeindex \in [n]$. }  \label{algm:OciorACOOL}
\begin{algorithmic}[1]

\footnotesize

  \Statex   \emph{//   ** We use   $\RyNew_i^{[1]}, \RyNew_i^{[2]}, \SsNew_1^{[1]}, \SsNew_0^{[1]}, \SsNew_1^{[2]}, \SsNew_0^{[2]}, \VrNew_i, \MeNew^{(i)}, \{(\ysymbolNew_i^{(j)}, \ysymbolNew_j^{(j)})\}_j$   to denote the corresponding values delivered from  $\COOLBUA[1]$ **}

  \Statex   \emph{//   ** We use  $\RySecond_i^{[1]}, \RySecond_i^{[2]}, \SsSecond_1^{[1]}, \SsSecond_0^{[1]}, \SsSecond_1^{[2]}, \SsSecond_0^{[2]}, \VrSecond_i, \MeSecond^{(i)}, \{(\ysymbolSecond_i^{(j)}, \ysymbolSecond_j^{(j)})\}_j$   to denote the corresponding values delivered from  $\COOLBUA[2]$ **}

\State  Initially set   $\kencode \gets      \networkfaultsizet / 3; \MeSecond_i\gets \defaultvalue; \Me^{(i)}\gets \defaultvalue;  \OECSI=0; \OECSIFinal\gets 0;  \OECCorrectSymbolSetNew\gets \{\}; \OECCorrectSymbolSet\gets \{\};   \ySymbolSet\gets \{\};  \Phthreeindicator\gets 0; \ystrongmajority_i \gets \defaultvalue$    	 
\State  $\wait$ until the delivery of   $\RyNew_i^{[1]}$, $\RyNew_i^{[2]}$, $\SsNew_1^{[1]}$, $\SsNew_0^{[1]}$, $\SsNew_1^{[2]}$, $\SsNew_0^{[2]}$     from $\COOLBUA[1]$ and    $\RySecond_i^{[1]}$, $\RySecond_i^{[2]}$, $\SsSecond_1^{[1]}$, $\SsSecond_0^{[1]}$, $\SsSecond_1^{[2]}$, $\SsSecond_0^{[2]}$     from $\COOLBUA[2]$

 \vspace{0pt}

  \Statex   \emph{//   ********************  $\COOLBUA[1]$********************} 
	
\State {\bf upon} receiving a non-empty  message  input $\Me_{i}$ {\bf do}:

\Indent  

 \State    $\Pass$ $\Me_{i}$ into  $\COOLBUA[1]$    as an input value
 
\EndIndent 

 \vspace{0pt}
 
  \Statex   \emph{//   ******************** Set the   Input  Value $\MeSecond_i$ for  $\COOLBUA[2]$********************}

\State {\bf upon} receiving   $\ltuple\NEWSYMBOL, \IDCOOL, \ystrongmajority_j \rtuple$  from  Node~$j$ for the first time,   and $j\notin \OECCorrectSymbolSetNew$   {\bf do}:      \label{line:BUA2InputYiMajoritySi0Cond} 	
\Indent  
	\State $\OECCorrectSymbolSetNew[j] \gets \ystrongmajority_j$         \label{line:BUA2InputYiMajorityyj}
 	\If  { $|\OECCorrectSymbolSetNew|\geq  \kencode + \networkfaultsizet $ and $\OECSI=0$}     \label{line:BUA2OECbegin}   \quad\quad \quad \quad\quad\quad \quad   \quad\quad \quad \quad\quad\quad \quad   \quad\quad \quad \quad\quad\quad \quad    \emph{//   online error correcting  (OEC)  }  
			\State   $\MVBAOutputMsg  \gets \ECCDec(n,  \kencode , \OECCorrectSymbolSetNew)$	      \label{line:BUA1OECdec}	  
			\State  $[y_{1}, y_{2}, \cdots, y_{n}] \gets \ECCEnc (n,  \kencode, \MVBAOutputMsg)$ 
    			\If {at least $\kencode + \networkfaultsizet$ symbols in $[y_{1}, y_{2}, \cdots, y_{n}]$ match with  those in $\OECCorrectSymbolSetNew$}
 				  set $ \MeSecond_i \gets \MVBAOutputMsg$ and  $\OECSI \gets 1$     \label{line:BUA2OECend}	 
    			\EndIf    

	\EndIf   
		     
\EndIndent

 \State {\bf upon} $\COOLBUA[1]$ having  delivered  $\ltuple\SYMBOL, 1, (*, \ysymbolNew_j^{(j)})\rtuple$  and $\SsNew_1^{[1]}$  such that   $j\in \SsNew_1^{[1]}$, and $j\notin \OECCorrectSymbolSetNew$    {\bf do}:        \label{line:BUA2InputYiMajoritySi1Cond} 	
\Indent  
	\State $\OECCorrectSymbolSetNew[j] \gets \ysymbolNew_j^{(j)}$            \label{line:BUA2InputYiMajoritySi1} 	
	\State run the OEC steps as in Lines~\ref{line:BUA2OECbegin}-\ref{line:BUA2OECend}           \label{line:BUA2InputYiMajoritySi1OEC} 
\EndIndent

 \State {\bf upon} delivery of $\ltuple\SYMBOL, 1, (\ysymbolNew_i^{(j)},  *) \rtuple$   from   $\COOLBUA[1]$  {\bf do}:      \label{line:BUA2InputSymbolDeliverCond}         
  \Indent  
	\IfThenElse {$\ysymbolNew_i^{(j)} \in \ySymbolSet$}  {$\ySymbolSet[ \ysymbolNew_i^{(j)}] \gets \ySymbolSet[\ysymbolNew_i^{(j)}]\cup \{j\}$} {$\ySymbolSet[\ysymbolNew_i^{(j)}] \gets \{j\}$}         \label{line:BUA2InputSymbolDeliver}

\EndIndent

  \State {\bf upon}   $(|\ySymbolSet[y^{\star}] \cup \SsNew_0^{[2]}| \geq n-t)\AND (|\ySymbolSet[y^{\star}]| \geq n-2t)\AND(\RyNew_i^{[1]} \neq 1 )\AND (\ystrongmajority_i =\defaultvalue)$  for some $y^{\star}$    {\bf do}:       \label{line:BUA2InputYiMajorityCond} 	   \emph{//   $\RyNew_i^{[1]} ,  \SsNew_0^{[2]}$  updated from $\COOLBUA[1]$}          
    \Indent  
			\State set $\ystrongmajority_i\gets  y^{\star}$; and     
			  $\send$  $\ltuple\NEWSYMBOL, \IDCOOL, \ystrongmajority_i \rtuple$  to  all nodes	    \label{line:BUA2InputYiMajoritySend} 
\EndIndent

 \vspace{0pt}
 
  \Statex   \emph{//   ********************  $\COOLBUA[2]$ ********************} 
 
 \State {\bf upon} delivery of $\RyNew_i^{[2]} = 1$   from   $\COOLBUA[1]$,  and  $\COOLBUA[2]$ has no input message  yet {\bf do}:   \label{line:BUA2Input1Cond}   
\Indent  
	\State  set    $\MeSecond_i \gets  \Me_{i}$            \label{line:BUA2Input1Setw} 
 	\State $\Pass$  $\MeSecond_i$ into $\COOLBUA[2]$ as an input message       \label{line:BUA2Input1} 	
 
\EndIndent

\State {\bf upon} $\MeSecond_i \neq \defaultvalue$, and  $\COOLBUA[2]$ has no input message  yet  {\bf do}:        \label{line:BUA2Input2Cond}
\Indent  

	\State  $\Pass$  $\MeSecond_i$ into $\COOLBUA[2]$ as an input message          \label{line:BUA2Input2} 
\EndIndent

 \vspace{0pt}
 
  \Statex   \emph{//   ********************  $\ABBA$ ********************} 

\State {\bf upon} delivery of $[\MeSecond^{(i)}, \RySecond_{i}^{[2]},   \VrSecond_i ]$    from   $\COOLBUA[2]$, and  $\ABBA$ has no  input yet  {\bf do}:          \label{line:OciorACOOLABBAinputCond} 
\Indent  
	\State  $\Pass$  $\VrSecond_i$ into $\ABBA$ as an input message              \label{line:OciorACOOLABBAinput}     \quad \quad  \quad\quad \quad \quad \quad  \quad\quad \emph{//     an asynchronous  binary $\BA$  ($\ABBA$) protocol  } 
\EndIndent

\State {\bf upon} delivery of $[\MeNew^{(i)}, \RyNew_{i}^{[2]},   \VrNew_i =0 ]$ or   $\RyNew_i^{[2]} = 0$  from   $\COOLBUA[1]$, and  $\ABBA$ has no  input yet {\bf do}:  \label{line:OciorACOOLABBAinputBUA1Cond} 
 
\Indent  
	\State  $\Pass$  $0$ into $\ABBA$ as an input message              \label{line:OciorACOOLABBAinput1}     
\EndIndent

  \Statex   \emph{//   ******************** Binary  $\RBA$  ($\BRBA$)********************}

\State {\bf upon}  $\ABBA$  outputting $\ABBAOutput$   {\bf do}:  	      \label{line:OciorACOOLRBAbegin}  
\Indent  
		\State $\send$ $\ltuple\READY, \IDCOOL, \ABBAOutput \rtuple$ to  all nodes  	   \label{line:OciorACOOLRBAReadyABBA}  
\EndIndent

\State {\bf upon} receiving   $\networkfaultsizet+1$  $\ltuple \READY, \IDCOOL, \Vr  \rtuple$ messages  from different   nodes for the same $\Vr$, and $\ltuple\READY, \IDCOOL, * \rtuple$  not yet sent {\bf do}:        \label{line:OciorACOOLRelialbeBegin}   
\Indent  
		\State $\send$ $\ltuple\READY, \IDCOOL, \Vr \rtuple$ to  all nodes	    \label{line:OciorACOOLRBAsendReady}   
\EndIndent

\State {\bf upon} receiving   $2t+1$  $\ltuple \READY, \IDCOOL, \Vr  \rtuple$ messages  from different nodes   for the same $\Vr$ {\bf do}:     \label{line:OciorACOOLVrOutputCond}   
\Indent  
 	\State  set $\VrOutput\gets \Vr$             \label{line:OciorACOOLVrOutput}       
     \IfThenElse {$\VrOutput =0$}  {$\Output$   $\Me^{(i)}=\defaultvalue$ and $\terminate$ } {set $\Phthreeindicator\gets 1$ }     \label{line:OciorACOOLph3}     
     
\EndIndent

 \vspace{0pt}
  
  \Statex   \emph{//   ********************  $\COOLHMDM[2]$ ********************}

\State {\bf upon} $\Phthreeindicator= 1$ {\bf do}:       \label{line:OciorACOOLph3trigger}   \quad \quad\quad\quad\quad\quad  \quad \quad\quad \quad \quad\quad\quad\quad\quad  \quad \quad\quad\quad\quad\quad\quad \quad      \emph{//     only after executing Line~\ref{line:OciorACOOLph3}}
\Indent  
\If { $\COOLBUA[2]$ has delivered $[\MeSecond^{(i)}, \RySecond_{i}^{[2]},   \VrSecond_i ]$ with $\RySecond_i^{[2]} = 1$}          \label{line:OciorACOOLph3SIPhtwo}
	\State $\Output$   $\MeSecond^{(i)}$ and $\terminate$     \label{line:OciorACOOLph3SIPhtwoOutput}
		 
\Else
 
			\State  $\wait$ until $\COOLBUA[2]$ delivering at least  $\networkfaultsizet +1$ $\ltuple\SYMBOL, 2, (\ysymbolSecond_i^{(j)},  *) \rtuple$, $\forall j \in   \SsSecond_1^{[2]}$, for the same    $\ysymbolSecond_i^{(j)} =y^{\star}$,  for some   $y^{\star}$   \label{line:OciorACOOLph3MajorityRuleCond}
			\State  $\ysymbolSecond_i^{(i)} \gets y^{\star}$   \label{line:OciorACOOLph3MajorityRule}   \quad\quad\quad\quad \quad\quad\quad\quad\quad\quad\quad\quad\quad\quad\quad\quad   \quad\quad\quad\quad\quad\quad\quad\quad \quad\quad \emph{// update coded symbol based on  majority rule}  
	\State   $\send$   $\ltuple \CORRECTSYMBOL, \IDCOOL, \ysymbolSecond_i^{(i)}  \rtuple$  to  all nodes      \label{line:OciorACOOLph3SendCorrectSymbols} 
			\State  $\wait$ until  $\OECSIFinal=1$ 	
			\State $\Output$   $\Me^{(i)}$ and $\terminate$     \label{line:OciorACOOLph3Output2}	
	
\EndIf

\EndIndent

\State {\bf upon} receiving   $\ltuple\CORRECTSYMBOL, \IDCOOL, \ysymbolSecond_j^{(j)} \rtuple$  from  Node~$j$ for the first time,   $j\notin \OECCorrectSymbolSet$, and $\OECSIFinal=0$   {\bf do}:  \label{line:Ph3OECCond}
\Indent  
	\State $\OECCorrectSymbolSet[j] \gets \ysymbolSecond_j^{(j)}$    
 	\If  { $|\OECCorrectSymbolSet|\geq  \kencode + \networkfaultsizet  $}     \label{line:OECbegin}    \quad    \quad\quad \quad \quad\quad\quad \quad    \quad\quad \quad \quad\quad\quad \quad   \quad\quad \quad \quad\quad\quad \quad   \quad\quad \quad \quad\quad\quad \quad    \emph{//   online error correcting   }  
			\State   $\MVBAOutputMsg  \gets \ECCDec(n,  \kencode , \OECCorrectSymbolSet)$	     
			\State  $[y_{1}, y_{2}, \cdots, y_{n}] \gets \ECCEnc (n,  \kencode, \MVBAOutputMsg)$ 
    			\If {at least $\kencode + \networkfaultsizet$ symbols in $[y_{1}, y_{2}, \cdots, y_{n}]$ match with  those in $\OECCorrectSymbolSet$}
 				set  $ \Me^{(i)} \gets \MVBAOutputMsg$ and  $\OECSIFinal\gets 1$     \label{line:OECend}	 
    			\EndIf    

	\EndIf   
		     
\EndIndent

\State {\bf upon} $\COOLBUA[2]$ having  delivered  $\ltuple\SYMBOL, 2, (*, \ysymbolSecond_j^{(j)})\rtuple$  and $\SsSecond_1^{[2]}$  such that   $j\in \SsSecond_1^{[2]}$, and $j\notin \OECCorrectSymbolSet$, and  $\OECSIFinal=0$   {\bf do}:    \label{line:ACOOLPh3OECSecCond}
\Indent  
	\State $\OECCorrectSymbolSet[j] \gets \ysymbolSecond_j^{(j)}$     
	\State run the OEC steps as in Lines~\ref{line:OECbegin}-\ref{line:OECend}      \label{line:Ph3OECAllEnd}
\EndIndent

\end{algorithmic}
\end{algorithm}

\begin{algorithm}
\caption{$\COOLBUA$ protocol in asynchronous setting, with identifier $ \IDCOOL$.  Code is shown for    Node~$\thisnodeindex$. }  \label{algm:COOLBUA}
\begin{algorithmic}[1]
\vspace{5pt}    
 
\footnotesize
 
 \Statex   \emph{//   **  This $\COOLBUA$ protocol continuously updates and delivers the values $\Ry_i^{[1]}, \Ry_i^{[2]}, \Ss_1^{[1]}, \Ss_0^{[1]}, \Ss_1^{[2]}, \Ss_0^{[2]}, \Vr_i, \Me^{(i)}, \{(y_i^{(j)}, y_j^{(j)})\}_j$ to the main protocol that invokes it. It   terminates when the main protocol terminates. **}

 \Statex
 
\State  Initially set   $\kencode \gets      \networkfaultsizet / 3;    \Lkset_0\gets \{\}; \Lkset_1\gets \{\}; \Ss_0^{[1]}\gets \{\};\Ss_1^{[1]}\gets \{\};\Ss_0^{[2]}\gets \{\};\Ss_1^{[2]}\gets \{\};    \Me^{(i)}\gets \defaultvalue;   \ECCEncindicator \gets 0;     \Ry_i^{[1]} \gets \defaultvalue;    \Ry_i^{[2]} \gets \defaultvalue;   \Vr_i \gets \defaultvalue $  	\quad \quad      \emph{//  $\defaultvalue$ is a default value or an empty value  }    
 
  	\State  $\deliver$   $\Ry_i^{[1]}$, $\Ry_i^{[2]}$, $\Ss_1^{[1]}$, $\Ss_0^{[1]}$, $\Ss_1^{[2]}$, $\Ss_0^{[2]}$

  \Statex  
  
\Statex {\bf \emph{Phase~1}}       

\State {\bf upon} receiving a non-empty  message  input $\Me_{i}$ {\bf do}:

\Indent

 \State    $\Me^{(i)}\gets \Me_{i}$;   \   $[y_1^{(i)}, y_2^{(i)}, \cdots, y_{n}^{(i)}]\gets \ECCEnc(n, \kencode, \Me_{i})$     \label{line:ACOOLECCEnc}
 \State   $\send$  $\ltuple   \SYMBOL, \IDCOOL,  (y_j^{(i)}, y_i^{(i)}) \rtuple$ to Node~$j$,  $\forall j \in  [\networksizen]$  		 \label{line:ACOOLECCEncindicator}
 \State     $\ECCEncindicator\gets 1$   
   
\EndIndent

\State {\bf upon} receiving   $\ltuple\SYMBOL, \IDCOOL, (y_i^{(j)}, y_j^{(j)}) \rtuple$  from Node~$j$ for the first time  {\bf do}:  
\Indent  
	\State  $\wait$ until  $\ECCEncindicator =1$ 
	
    \IfThenElse {$ (y_i^{(j)}, y_j^{(j)}) = (y_i^{(i)}, y_j^{(i)})$}  {$\Lkset_1\gets \Lkset_1\cup \{j\}$ } {$\Lkset_0\gets \Lkset_0\cup \{j\}$}     \label{line:ACOOLph1MatchCond}

 	\State  $\deliver$ $\ltuple\SYMBOL, \IDCOOL,  (y_i^{(j)}, y_j^{(j)}) \rtuple$   
\EndIndent

\State {\bf upon}  $(|\Lkset_1| \geq    n- \networkfaultsizet)\AND(\Ry_i^{[1]} =\defaultvalue)$ {\bf do}:          \label{line:COOLBUASISI1oneCond} 
\Indent  
           \State	set  $\Ry_i^{[1]} \gets 1$;   $\send$ $\ltuple\SIone, \IDCOOL, \Ry_i^{[1]}\rtuple$ to all nodes; and  $\deliver$ $\Ry_i^{[1]}$    \label{line:COOLBUASISI1oneSend}
 
\EndIndent

\State {\bf upon}  $(|\Lkset_0|  \geq    \networkfaultsizet +1)\AND (\Ry_i^{[1]} =\defaultvalue)$ {\bf do}:          \label{line:COOLBUASI1zeroCond} 
\Indent  
           \State	set $\Ry_i^{[1]} \gets 0$;   $\send$ $\ltuple\SIone, \IDCOOL, \Ry_i^{[1]}\rtuple$ to all nodes; and  $\deliver$ $\Ry_i^{[1]}$      \label{line:COOLBUASI1zero} 
\EndIndent

\State {\bf upon} receiving   $\ltuple\SIone, \IDCOOL, \Ry_j^{[1]}\rtuple$  from  Node~$j$ for the first time   {\bf do}:    \label{line:ACOOLph1SS01Cond}  
\Indent  

     	\IfThenElse {$\Ry_j^{[1]} =1$}  {$\Ss_1^{[1]}\gets \Ss_1^{[1]}\cup \{j\}$} {$\Ss_0^{[1]} \gets \Ss_0^{[1]}\cup \{j\}$}     \label{line:ACOOLph1SS0NotM}          
	\State  $\deliver$ $\Ss_1^{[1]}$ and $\Ss_0^{[1]}$      \label{line:UAS1S0p1deli}    
\EndIndent

  \Statex

\Statex  {\bf \emph{Phase~2}}    

 \State {\bf upon}  $((\Ry_i^{[1]} = 0)\OR(|\Ss_0^{[1]}\cup \Lkset_0| \geq    \networkfaultsizet +1))\AND( \Ry_i^{[2]} = \defaultvalue)$  {\bf do}:  \label{line:ACOOLph2CSI0Cond}   
\Indent    

	\State	set $\Ry_i^{[2]} \gets 0$;       $\send$ $\ltuple\SItwo, \IDCOOL, \Ry_i^{[2]}\rtuple$ to all nodes; and  $\deliver$ $\Ry_i^{[2]}$      \label{line:ACOOLph2CSI0}

\EndIndent

 \State {\bf upon}  $(\Ry_i^{[1]} = 1) \AND(|\Ss_1^{[1]}\cap \Lkset_1|\geq  n- \networkfaultsizet)\AND( \Ry_i^{[2]} = \defaultvalue)$ {\bf do}:  \label{line:ACOOLph2ACond} 
\Indent   

           \State	set  $\Ry_i^{[2]} \gets 1$;  $\send$ $\ltuple \SItwo, \IDCOOL,\Ry_i^{[2]}\rtuple$ to all nodes; and $\deliver$ $\Ry_i^{[2]}$   \label{line:ACOOLph2ASI1} 
	
\EndIndent

 \State {\bf upon} receiving   $\ltuple\SItwo, \IDCOOL, \Ry_j^{[2]}\rtuple$  from  Node~$j$ for the first time  {\bf do}:     \label{line:ACOOLph2SS01Cond}     
\Indent  
     	\IfThenElse {$\Ry_j^{[2]} =1$}  {$\Ss_1^{[2]}\gets \Ss_1^{[2]}\cup \{j\}$} {$\Ss_0^{[2]} \gets \Ss_0^{[2]}\cup \{j\}$}      \label{line:ACOOLph2SS0NM}

 	\State  $\deliver$ $\Ss_1^{[2]}$ and $\Ss_0^{[2]}$   \label{line:UAS1S0p2deli}    
 	 
\EndIndent       
 
  \Statex

\State {\bf upon}   $|\Ss_{1}^{[2]}|\geq  n- \networkfaultsizet$ {\bf do}:      \label{line:OciorACOOLReadyOneCondition}    
\Indent  
		\State   $\Vr_i \gets 1$; and     $\deliver$  $[\Me^{(i)}, \Ry_{i}^{[2]},   \Vr_i ]$  	 \label{line:OciorACOOLReadyOne}   

\EndIndent

\State {\bf upon}   $|\Ss_{0}^{[2]}|\geq  t +1$ {\bf do}:      \label{line:OciorACOOLReadyZeroCondition}   
\Indent  
		\State   $\Vr_i \gets 0$;  and    $\deliver$  $[\Me^{(i)}, \Ry_{i}^{[2]},   \Vr_i ]$  	     \label{line:OciorACOOLReadyZero}   
\EndIndent

\end{algorithmic}
\end{algorithm}

\section{$\OciorACOOL$}    \label{sec:OciorACOOL}

The proposed $\OciorACOOL$ is an error-free asynchronous  Byzantine agreement  protocol.  
$\OciorACOOL$ does not rely on any cryptographic assumptions such as signatures or hashing, except for  a single invocation of an asynchronous  binary $\BA$ protocol. 
 $\OciorACOOL$ is an adaptive variant of $\COOL$.

\subsection{Revisiting $\COOL$ for the Synchronous Setting} \label{sec:OverviewOciorCOOL}

Before presenting the proposed $\OciorACOOL$ protocol for the asynchronous setting, we first revisit the original $\COOL$ (or $\OciorCOOL$) protocol designed for the synchronous setting. 
Note that the original $\COOL$ protocol consists of the following   components: the $\COOLBUA$, binary $\BA$,  and the $\COOLHMDM$ algorithms, as shown in Fig.~\ref{fig:OciorCOOLProtocol}. 

\subsubsection{$\COOLBUA$} 
$\COOLBUA$ is a $\BUA$ algorithm in which each node~$i$ inputs an initial value $\MVBAInputMsg_i$ and aims to produce an output $(\MVBAInputMsg, \successindicator, \Vr)$, where $\MVBAInputMsg$ is the updated message, $\successindicator \in \{0,1\}$ is a success indicator, and $\Vr \in \{0,1\}$ is a vote.  
$\COOLBUA$ guarantees three properties: \emph{Unique Agreement}, \emph{Majority Unique Agreement}, and \emph{Validity} (see Definition~\ref{def:BUA}). 

\subsubsection{Binary $\BA$} 
After obtaining outputs from $\COOLBUA$, each node~$i$ passes its vote $\Vr$ to the binary $\BA$ consensus as input.  
The binary $\BA$ consensus ensures that all honest nodes reach the same decision on whether to terminate at the end of the binary $\BA$ or proceed to the $\COOLHMDM$ phase. 

\subsubsection{$\COOLHMDM$} 
By the Consistency and Validity properties of the binary $\BA$ consensus, if any honest node enters the $\COOLHMDM$ phase, it is guaranteed that at least one honest node has voted $\Vr = 1$ in the binary $\BA$ consensus.  
In this case, by the Unique Agreement and Majority Unique Agreement properties of $\BUA$, at least $t + 1$ honest nodes must have output $(\wv, 1, *)$ from $\COOLBUA$ for the same $\wv$.  
This condition satisfies the $\HMDM$ requirement in distributed multicast, i.e., at least $t + 1$ senders are honest.  

$\HMDM$ guarantees the \emph{Validity} property: if all honest senders (with at least $t + 1$ honest senders) input the same message $\wv$, then every honest node eventually outputs $\wv$ in $\HMDM$.  
$\COOLHMDM$ is an $\HMDM$ algorithm that ensures that the encoded symbols from honest nodes can be calibrated using a majority rule, so that the calibrated symbols are encoded from the same    $\wv$.  
Consequently, all honest nodes eventually decode and output the same final message.

\subsection{The Challenge of $\COOL$ in the Asynchronous Setting} \label{sec:issue}

It can be verified that the $\COOL$ protocol satisfies the \emph{Consistency} and \emph{Validity} properties in the asynchronous setting. 
However, the main challenge lies in ensuring \emph{termination} for the $\COOL$ protocol under asynchrony. 
To illustrate this challenge, we consider the following example. 

Let $\Me_i$ denote the initial message of node~$i$, and let $\Fc$ denote the index set of all dishonest nodes.  
We define the index groups of honest nodes as 
\[
\Ac_{\Lin} \defeqnew \{ i \in [n] \setminus \Fc \mid \Me_i = \Meg_{\Lin} \},
\]
for $\Lin \in [\gn]$, where $\Meg_{1}, \Meg_{2}, \dots, \Meg_{\gn}$ are distinct non-empty $\ell$-bit values and $\gn$ is an integer.   
Let us consider the following example:
\[
\text{Example:} \quad n = 3t + 1, \quad |\Ac_1| = n- t -|\Fc| , \quad |\Ac_2| = t, \quad |\Fc| = t.
\]

In this asynchronous setting, the honest nodes in $\Ac_1$ may be unable to produce outputs $(\MVBAInputMsg, \successindicator, \Vr)$ in $\COOLBUA$ (see Algorithm~\ref{algm:COOLBUA}).  
Intuitively, for this example, as shown in Algorithm~\ref{algm:COOLBUA}, each honest node~$i$ in $\Ac_2$ eventually sets its success indicators to $\Ry_i^{[2]} = \Ry_i^{[1]} = 0$ 
(see Lines~\ref{line:ACOOLph1MatchCond}, \ref{line:COOLBUASI1zero}, and \ref{line:ACOOLph2CSI0}) 
because of the mismatched initial messages between the honest nodes in $\Ac_1$ and $\Ac_2$ 
(see Line~\ref{line:ACOOLph1MatchCond}). 
However, each honest node~$i$ in $\Ac_1$ may be unable to update the values of $\Ry_i^{[2]}$ and $\Ry_i^{[1]}$ if the dishonest nodes do not send messages to the honest nodes in $\Ac_1$ (see Algorithm~\ref{algm:COOLBUA}). 

In the following, we introduce the $\OciorACOOL$ protocol, an adaptive variant of $\COOL$, which resolves the termination challenge in the asynchronous setting while preserving the \emph{Consistency} and \emph{Validity} properties.

\subsection{Overview of $\OciorACOOL$ for asynchronous stetting}    \label{sec:OverviewOciorACOOL}

The proposed $\OciorACOOL$ protocol is described in Algorithm~\ref{algm:OciorACOOL} and supported by Algorithm~\ref{algm:COOLBUA}. 
We here provide an overview of the proposed $\OciorACOOL$, which mainly consists of the $\COOLBUA[1]$, $\COOLBUA[2]$, asynchronous  binary $\BA$  ($\ABBA$),   binary $\RBA$ ($\BRBA$),  and $\COOLHMDM[2]$ algorithms. 
Note that the original $\COOL$ protocol comprises the $\COOLBUA$, binary $\BA$ ($\BBA$),  and $\COOLHMDM$ algorithms. 

As shown in the previous subsection, in the asynchronous setting, some honest nodes may be unable to produce outputs $(\MVBAInputMsg, \successindicator, \Vr)$ in $\COOLBUA[1]$ (see Algorithm~\ref{algm:COOLBUA}).  
The design of $\OciorACOOL$ guarantees that all honest nodes eventually \emph{terminate}, while preserving the \emph{Consistency} and \emph{Validity} properties. 
Before describing the $\OciorACOOL$ protocol, we first introduce several definitions. 

In $\COOLBUA$, the values of $\Ry_i^{[1]}$ and $\Ry_i^{[2]}$ are initially set to $\Ry_i^{[1]} = \defaultvalue$ and $\Ry_i^{[2]} = \defaultvalue$, respectively, where $\defaultvalue$ denotes a default or empty value. 
Let
\[
\AsetNOzero_{\Lin}^{[p]} \defeqnew \{\, i \in [n]\setminus \Fc \mid \Me_i = \Meg_{\Lin},\ \Ry_i^{[p]} \neq 0 \,\}, 
\quad \Lin \in [\gnNozero^{[p]}], \quad p \in [2],
\]
for some non-negative integers $\gn$, $\gnNozero^{[1]}$, and $\gnNozero^{[2]}$ satisfying $\gnNozero^{[2]} \leq \gnNozero^{[1]} \leq \gn$. 
In the definition of $\AsetNOzero_{\Lin}^{[p]}$, we consider the final value of $\Ry_i^{[p]}$ if it has been updated. 
Intuitively, $\AsetNOzero_{\Lin}^{[p]}$ represents the indices of honest nodes that eventually set the success indicator $\Ry_i^{[p]} = 1$ or never update $\Ry_i^{[p]}$. 

We prove in Lemma~\ref{lm:OciorACOOBUAUniqueSet} that if all honest nodes eventually input their initial messages and keep running the $\COOLBUA$ protocol, then it holds that 
\[
\gnNozero^{[2]} \leq 1,
\]
i.e., every honest node $i$ eventually sets $\Ry_i^{[2]} = 0$ for all $i \in \Ac_{\Lin}$ and all $\Lin \in [2, \gn]$.  
In the following, we consider two possible cases for $\COOLBUA[1]$:
\begin{align}
\text{Case~I:} \quad &|\AsetNOzero_{1}^{[2]}| \geq n - |\Fc| - t,  \non \\
\text{Case~II:} \quad &|\AsetNOzero_{1}^{[2]}| < n - |\Fc| - t.  \non
\end{align}
The $\OciorACOOL$ protocol guarantees that, by combining the following algorithms, for each of the above two cases, all honest nodes eventually \emph{terminate}, while preserving the \emph{Consistency} and \emph{Validity} properties.

 For notational convenience,  we use   $\RyNew_i^{[1]}, \RyNew_i^{[2]}, \SsNew_1^{[1]}, \SsNew_0^{[1]}, \SsNew_1^{[2]}, \SsNew_0^{[2]}, \VrNew_i, \MeNew^{(i)}, \{(\ysymbolNew_i^{(j)}, \ysymbolNew_j^{(j)})\}_j$   to denote the corresponding values delivered from  $\COOLBUA[1]$.  Similarly, we use  $\RySecond_i^{[1]}, \RySecond_i^{[2]}, \SsSecond_1^{[1]}, \SsSecond_0^{[1]}, \SsSecond_1^{[2]}, \SsSecond_0^{[2]}, \VrSecond_i, \MeSecond^{(i)}$, $\{(\ysymbolSecond_i^{(j)}, \ysymbolSecond_j^{(j)})\}_j$   to denote the corresponding values delivered from  $\COOLBUA[2]$.    

 Here, $\RyNew_i^{[1]}$ and $\RyNew_i^{[2]}$  (and $\RySecond_i^{[1]}$ and $\RySecond_i^{[2]}$) denote the success indicators of Node~$i$ at Phase~1 and Phase~2, respectively (see Lines~\ref{line:COOLBUASISI1oneSend}, \ref{line:COOLBUASI1zero}, \ref{line:ACOOLph2CSI0}, and \ref{line:ACOOLph2ASI1} of Algorithm~\ref{algm:COOLBUA}). 
$\SsNew_b^{[p]}$ (and $\SsSecond_b^{[p]}$) denotes the index set containing the nodes that send $\RyNew_i^{[p]} = b$, for $b \in \{0,1\}$ and $p \in \{1,2\}$ (see Lines~\ref{line:UAS1S0p1deli} and \ref{line:UAS1S0p2deli} of Algorithm~\ref{algm:COOLBUA}).  
 $\VrNew_i$ (and $\VrSecond_i$)  denotes the binary vote value. $(\ysymbolNew_i^{(j)}, \ysymbolNew_j^{(j)})$ (and $(\ysymbolSecond_i^{(j)}, \ysymbolSecond_j^{(j)})$) denotes the pair consisting of   the $i$-th and the $j$-th coded symbols sent from  Node~$j$.

 \subsubsection{$\COOLBUA[1]$} 
$\COOLBUA[1]$ is a $\BUA$ algorithm, presented in Algorithm~\ref{algm:COOLBUA}, in which each node~$i$ inputs an initial value $\MVBAInputMsg_i$ and aims to produce an output $(\MVBAInputMsg, \successindicator, \Vr)$, where $\MVBAInputMsg$ is the updated message, $\successindicator \in \{0,1\}$ is a success indicator, and $\Vr \in \{0,1\}$ is a vote.  
$\COOLBUA[1]$ guarantees three properties: \emph{Unique Agreement}, \emph{Majority Unique Agreement}, and \emph{Validity} (see Definition~\ref{def:BUA}). 

As mentioned earlier, some honest nodes may be unable to produce outputs $(\MVBAInputMsg, \successindicator, \Vr)$ in $\COOLBUA[1]$.  
However, $\COOLBUA[1]$ may still update and deliver the following values and sets to the main protocol that invokes it:  
$\RyNew_i^{[1]}, \RyNew_i^{[2]}, \SsNew_1^{[1]}, \SsNew_0^{[1]}, \SsNew_1^{[2]}, \SsNew_0^{[2]}$, and $\{(\ysymbolNew_i^{(j)}, \ysymbolNew_j^{(j)})\}_j$.

  \subsubsection{Set the   Input  Value $\MeSecond_i$ for  $\COOLBUA[2]$} 

The goal of this phase is to set the input message for $\COOLBUA[2]$ by executing the steps in Lines~\ref{line:BUA2InputYiMajorityyj}-\ref{line:BUA2InputYiMajoritySend} of Algorithm~\ref{algm:OciorACOOL}.  
This phase ensures that, in Case~I, if all honest nodes eventually input their initial messages to $\OciorACOOL$  and keep running the $\OciorACOOL$   protocol,  then all honest nodes eventually input the same value $\MeSecond_i = \Meg_1$ into $\COOLBUA[2]$ as the input messages (see Lemma~\ref{lm:OciorACOOLCaseISameInputBUA2}).  Here,  $\Meg_{1}$ is the initial message of all honest nodes within $\AsetNOzero_{1}^{[2]}$ for $\COOLBUA[1]$.

\subsubsection{$\COOLBUA[2]$} 
$\COOLBUA[2]$ is also a $\BUA$ algorithm,  which guarantees three properties: \emph{Unique Agreement}, \emph{Majority Unique Agreement}, and \emph{Validity}. 
 In Case~I,  if all honest nodes  keep running the  $\OciorACOOL$    protocol,  and  given the conclusion that  in this case every honest node~$i$ eventually inputs the same  message $\MeSecond_i=\Meg_{1}$ to $\COOLBUA[2]$ (see Lemma~\ref{lm:OciorACOOLCaseISameInputBUA2}), then  every honest node~$i$ eventually sets $\RySecond_i^{[1]} =1$ (see Lines~\ref{line:COOLBUASISI1oneCond} and \ref{line:COOLBUASISI1oneSend} of Algorithm~\ref{algm:COOLBUA}),
every honest node~$i$ eventually sets $\RySecond_i^{[2]} =1$ (see Lines~\ref{line:ACOOLph2ACond} and \ref{line:ACOOLph2ASI1} of Algorithm~\ref{algm:COOLBUA}),
 and every honest node~$i$ eventually sets  $\VrSecond_i = 1$ (see Lines~\ref{line:OciorACOOLReadyOneCondition} and \ref{line:OciorACOOLReadyOne} of Algorithm~\ref{algm:COOLBUA}).
  In this case,  every honest node~$i$ eventually   delivers $[\MeSecond^{(i)}, \RySecond_{i}^{[2]}, \VrSecond_i = 1]$ from $\COOLBUA[2]$ (see Line~\ref{line:OciorACOOLReadyOne} of Algorithm~\ref{algm:COOLBUA}).

\subsubsection{$\ABBA$ and   $\BRBA$}  
$\ABBA$ executes the steps in Lines~\ref{line:OciorACOOLABBAinputCond}-\ref{line:OciorACOOLABBAinput1} of Algorithm~\ref{algm:OciorACOOL}, 
while $\BRBA$ executes the steps in Lines~\ref{line:OciorACOOLRBAbegin}-\ref{line:OciorACOOLph3}.  
 
In Case~I, every honest node~$i$ eventually inputs the value~$1$ into $\ABBA$, if node~$i$ has not already provided an input to $\ABBA$ (see Lines~\ref{line:OciorACOOLABBAinputCond} and~\ref{line:OciorACOOLABBAinput} of Algorithm~\ref{algm:OciorACOOL}). 
Hence,  in Case~I, $\ABBA$ eventually outputs a value $\ABBAOutput \in \{1, 0\}$ at each node, due to the \emph{Termination} and \emph{Consistency} properties of $\ABBA$. 
Then, every honest node eventually sends $\ltuple \READY, \IDCOOL, \ABBAOutput \rtuple$ to all nodes in Line~\ref{line:OciorACOOLRBAReadyABBA} of Algorithm~\ref{algm:OciorACOOL}, and every honest node eventually sets $\VrOutput = \Vr^\star$ in Line~\ref{line:OciorACOOLVrOutput} of Algorithm~\ref{algm:OciorACOOL}.

In Case~II, the condition $|\Ss_{0}^{[2]}| \geq t + 1$ in Line~\ref{line:OciorACOOLReadyZeroCondition} of Algorithm~\ref{algm:COOLBUA} is eventually satisfied at all honest nodes, and each honest node~$i$ eventually sets $\VrNew_i = 0$ and delivers $[\MeNew^{(i)}, \RyNew_{i}^{[2]}, \VrNew_i = 0]$ from $\COOLBUA[1]$. 
Consequently, in Case~II each honest node~$i$ eventually passes $0$ into $\ABBA$ as its input message, if node~$i$ has not already provided an input to $\ABBA$ (see Lines~\ref{line:OciorACOOLABBAinputBUA1Cond} and~\ref{line:OciorACOOLABBAinput1} of Algorithm~\ref{algm:OciorACOOL}). 
In this case, every honest node eventually provides an input to $\ABBA$, and hence $\ABBA$ eventually outputs a value $\ABBAOutput \in \{1, 0\}$ at each node, due to the \emph{Termination} property of $\ABBA$. 
Then, every honest node eventually sends $\ltuple \READY, \IDCOOL, \ABBAOutput \rtuple$ to all nodes in Line~\ref{line:OciorACOOLRBAReadyABBA} of Algorithm~\ref{algm:OciorACOOL}, and every honest node eventually sets $\VrOutput = \Vr^\star$ in Line~\ref{line:OciorACOOLVrOutput} of Algorithm~\ref{algm:OciorACOOL}. 

If   $\Vr^\star = 0$, then all honest nodes eventually   output $\defaultvalue$ and terminate in Line~\ref{line:OciorACOOLph3} of Algorithm~\ref{algm:OciorACOOL}.
If $\Vr^\star = 1$,  then all honest nodes eventually  go to the $\COOLHMDM[2]$ phase.

 \subsubsection{$\COOLHMDM[2]$} 
 By the Consistency and Validity properties of the  $\ABBA$ consensus, if any honest node enters the $\COOLHMDM[2]$ phase, it is guaranteed that at least one honest node has voted $\Vr = 1$ in the  $\ABBA$ consensus.  
In this case, by the Unique Agreement and Majority Unique Agreement properties of $\COOLBUA[2]$, at least $t + 1$ honest nodes must have output $(\wv, 1, *)$ from $\COOLBUA[2]$ for the same $\wv$.  
This condition satisfies the $\HMDM$ requirement in distributed multicast, i.e., at least $t + 1$ senders are honest.  

$\HMDM$ guarantees the \emph{Validity} property: if all honest senders (with at least $t + 1$ honest senders) input the same message $\wv$, then every honest node eventually outputs $\wv$ in $\HMDM$.  
$\COOLHMDM[2]$ is an  $\HMDM$ algorithm, executed in Lines~\ref{line:OciorACOOLph3trigger}-\ref{line:Ph3OECAllEnd} of Algorithm~\ref{algm:OciorACOOL}.  
Therefore, all honest nodes eventually decode and output the same final message.

\subsection{Analysis of $\OciorACOOL$}    \label{sec:AnalysisOciorACOOL}

Our analysis closely follows that of $\COOL$ and $\OciorCOOL$ \cite{Chen:2020arxiv, ChenDISC:21, ChenOciorCOOL:24}. Here,  we use notations consistent with that previously used for $\COOL$ and $\OciorCOOL$. 
Here we use $\Me_i$ to denote the initial value of Node~$i$.  
If Node~$i$ never sets an initial value, then $\Me_i$ is considered as $\Me_i = \defaultvalue$, where $\defaultvalue$ denotes  a default value or an empty value. 
The values of $\Ry_i^{[1]}$ and $\Ry_i^{[2]}$ are initially set to $\Ry_i^{[1]} = \defaultvalue$ and $\Ry_i^{[2]} = \defaultvalue$, respectively.  
The set $\Fc$ is defined as the set of indices of all dishonest nodes.  
We let $[n] := \{1, 2, \dots, n\}$.  
We define some groups of honest nodes as
  \begin{align}
 \Ac_{\Lin} \defeqnew &  \{  i \in[n]\setminus \Fc \mid    \Me_i =  \Meg_{\Lin}  \}, \quad   \Lin \in [\gn]     \label{eq:ACOOLAell00}   \\    
\Ac_{\Lin}^{[p]} \defeqnew&   \{  i \in[n]\setminus \Fc \mid     \Me_i =  \Meg_{\Lin},   \  \Ry_i^{[p]} =1 \}, \quad   \Lin \in [ \gn^{[p]}], \quad  p\in [2]      \label{eq:ACOOLAell}   \\
\AsetNOzero_{\Lin}^{[p]} \defeqnew&   \{  i \in[n]\setminus \Fc \mid     \Me_i =  \Meg_{\Lin},   \  \Ry_i^{[p]} \neq 0 \}, \quad   \Lin \in [ \gnNozero^{[p]}], \quad  p\in [2]      \label{eq:ACOOLANoZero}   \\
\Bc^{[p]} \defeqnew  & \{  i \in[n]\setminus \Fc \mid  \Ry_i^{[p]} =0   \}, \quad  p\in [2]    \label{eq:ACOOLBdef01} \\
\BsetNOone^{[p]} \defeqnew  & \{  i \in[n]\setminus \Fc \mid   \Ry_i^{[p]} \neq 1    \}, \quad  p\in [2]    \label{eq:ACOOLBNoOne} 
 \end{align}  
 for  some  non-empty $\ell$-bit distinct values $\Meg_{1}, \Meg_{2}, \cdots, \Meg_{\gn}$ and some non-negative integers $\gn, \gn^{[1]},\gn^{[2]}, \gnNozero^{[1]},\gnNozero^{[2]}$  such that $ \gn^{[2]} \leq \gn^{[1]} \leq \gn$,  $ \gnNozero^{[2]} \leq \gnNozero^{[1]} \leq \gn$,   $ \gn^{[1]} \leq   \gnNozero^{[1]}$, and  $ \gn^{[2]} \leq   \gnNozero^{[2]}$.  
 In the above definitions, we  consider the final value of $\Ry_i^{[p]}$ if it has been updated. 
Group $\Ac_{\Lin}$ (and Groups $\Ac_{\Lin}^{[p]}$, $\AsetNOzero_{\Lin}^{[p]}$) can be divided into some possibly overlapping sub-groups  defined as 
    \begin{align}
 \Ac_{\Lin,j} \defeqnew   & \{   i\in  \Ac_{\Lin}  \mid  \hv_i^\T  \Meg_{\Lin}  = \hv_i^\T  \Meg_j \} , \quad  j\neq \Lin ,   \  j, \Lin \in [\gn]   \label{eq:Alj}  \\     
 \Ac_{\Lin,\Lin} \defeqnew  &  \Ac_{\Lin}   \setminus  \bigl(\cup_{j=1, j\neq \Lin}^{\gn}\Ac_{\Lin,j}\bigr)   , \quad \quad \quad\quad \Lin \in [\gn]   \label{eq:All}       \\
 \Ac_{\Lin,j}^{[p]} \defeqnew   & \{  i\in  \Ac_{\Lin}^{[p]} \mid  \hv_i^\T  \Meg_{\Lin}  = \hv_i^\T  \Meg_j \} , \quad  j\neq \Lin ,   \  j, \Lin  \in [ \gn^{[p]}], \quad  p\in [2]  \label{eq:Alj11}  \\     
 \Ac_{\Lin,\Lin}^{[p]} \defeqnew  &  \Ac_{\Lin}^{[p]}   \setminus  \bigl(\cup_{j=1, j\neq \Lin}^{\gn^{[p]}}\Ac_{\Lin,j}^{[p]}\bigr)  , \quad\quad\quad\quad  \Lin \in [ \gn^{[p]}], \quad  p\in[2]   \label{eq:All11}    \\
  \AsetNOzero_{\Lin,j}^{[p]} \defeqnew   & \{  i\in  \AsetNOzero_{\Lin}^{[p]} \mid  \hv_i^\T  \Meg_{\Lin}  = \hv_i^\T  \Meg_j \} , \quad  j\neq \Lin ,   \  j, \Lin  \in [ \gnNozero^{[p]}], \quad  p\in [2]  \label{eq:Alj22}  \\     
 \AsetNOzero_{\Lin,\Lin}^{[p]} \defeqnew  &  \AsetNOzero_{\Lin}^{[p]}   \setminus \bigl(\cup_{j=1, j\neq \Lin}^{\gnNozero^{[p]}}\AsetNOzero_{\Lin,j}^{[p]}\bigr)  , \quad\quad\quad\quad  \Lin \in [ \gnNozero^{[p]}], \quad  p\in [2]   \label{eq:ACOOLAll22}         
 \end{align}  
 where $\hv_i$ is the encoding vector of the error-correction code, such that the $i$-th encoded symbol is computed as $\EncodedSymbol_i = \hv_i^\top \MVBAInputMsg$, given the input vector $\MVBAInputMsg$, for $i\in[n]$. 
If a non-linear encoding is used, then the $i$-th encoded symbol is computed as $\EncodedSymbol_i = \ECCEncodingFunction_i(\MVBAInputMsg)$, where $\ECCEncodingFunction_i(\bullet)$ is a polynomial function evaluated at the $i$-th point. In this case, we can simply replace the term $\hv_i^\top \Meg_{\Lin}$ with $\ECCEncodingFunction_i(\MVBAInputMsg)$. For consistency, however, we retain the original definition with linear encoding.

We use $\Lk_i(j) \in \{0,1\}$ to denote the link indicator between Node~$i$ and Node~$j$, defined as
\begin{numcases}
{ \Lk_i(j) =} 
1 & if $(y_i^{(j)}, y_j^{(j)}) = (y_i^{(i)}, y_j^{(i)})$,\label{eq:lkindicator} \\
0 & otherwise, \nonumber
\end{numcases} 
(see  Line~\ref{line:ACOOLph1MatchCond} of Algorithm~\ref{algm:COOLBUA}). In our setting, it holds true that $\Lk_i(j) =  \Lk_j(i)$ for  any $i, j \in [n]\setminus \Fc$.   If $(y_i^{(i)}, y_j^{(i)})$  are never sent out by Node~$i$ or  $(y_i^{(j)}, y_j^{(j)})$ are never sent out by Node~$j$, then we consider $\Lk_i(j) =  \Lk_j(i) = 0$.

Here we let  
\begin{align}
\Lkset_1^{(i)}:=\{j\in [n] \mid   (y_i^{(j)}, y_j^{(j)}) = (y_i^{(i)}, y_j^{(i)})  \}, \quad \text{and} \quad  \Lkset_0^{(i)}:=\{j\in [n] \mid   (y_i^{(j)}, y_j^{(j)}) \neq  (y_i^{(i)}, y_j^{(i)})  \}     \label{eq:LksetUseri}
\end{align}
denote the  link indicator sets $\Lkset_1$ and $\Lkset_0$ updated by Node~$i$, as described in Line~\ref{line:ACOOLph1MatchCond}  of Algorithm~\ref{algm:COOLBUA}. Here  $y_i^{(j)}=\hv_i^\T \Me_{j}$ and $y_j^{(j)}=\hv_j^\T \Me_{j}$ are the $i$-th and the $j$-th coded symbols encoded from the input message $\Me_{j}$ of Node~$j$.  We define 
\begin{align}
\LksetHonest_1^{(i)} := \Lkset_1^{(i)} \setminus \Fc, \quad \text{and}  \quad \LksetHonest_0^{(i)} := \Lkset_0^{(i)} \setminus \Fc.   \label{eq:LksetHonestUseri}
\end{align}

Theorems~\ref{thm:OciorACOOLConsistency}-\ref{thm:OciorACOOLPerformance} present the main results of $\OciorACOOL$. 
Specifically, Theorems~\ref{thm:OciorACOOLConsistency}-\ref{thm:OciorACOOLTermination} show that $\OciorACOOL$ satisfies the Consistency, Validity, and Termination properties in all  executions (\emph{error-free}), under the assumption of an error-free $\ABBA$ invoked in this protocol.
From Theorem~\ref{thm:OciorACOOLPerformance}, it reveals  that $\OciorACOOL$ is optimal   in terms of  communication complexity, round complexity and resilience.

 \begin{theorem}  [Consistency]  \label{thm:OciorACOOLConsistency}
In $\OciorACOOL$,  given $n\geq 3t+1$, if one  honest node outputs a value $\Me^{\star}$, then every honest node  eventually outputs a value $\Me^{\star}$, for some $\Me^{\star}$. 
\end{theorem}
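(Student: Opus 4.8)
The plan is to follow the consistency argument of $\COOL$, adapted to the two-stage asynchronous structure of $\OciorACOOL$. The first step is to observe that every $\Output$ statement in Algorithm~\ref{algm:OciorACOOL} (Lines~\ref{line:OciorACOOLph3}, \ref{line:OciorACOOLph3SIPhtwoOutput}, \ref{line:OciorACOOLph3Output2}) is reached only after the node has set $\VrOutput$ in Line~\ref{line:OciorACOOLVrOutput}, so I would first establish that all honest nodes that set $\VrOutput$ agree on a common value $v^\star = \ABBAOutput$, and that if one honest node sets $\VrOutput$ then every honest node eventually does. This is the standard Bracha-style ready-phase argument for the $\BRBA$ of Lines~\ref{line:OciorACOOLRBAbegin}-\ref{line:OciorACOOLph3}: by \emph{Consistency} of $\ABBA$, every $\READY$ message produced in Line~\ref{line:OciorACOOLRBAReadyABBA} carries the same value $v^\star$; an easy induction on the number of honest nodes that have sent a $\READY$ message (using $n \ge 3t+1$, so the $t+1$ senders required by the amplification rule in Line~\ref{line:OciorACOOLRBAsendReady} always include an honest one, and $\le t$ dishonest nodes cannot alone trigger amplification) shows that only $v^\star$ is ever propagated; hence any node collecting $2t+1$ matching $\READY$ messages sets $\VrOutput = v^\star$. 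If one honest node does so, at least $t+1$ honest nodes sent $\READY$ for $v^\star$, so every honest node eventually amplifies (or has already sent via Line~\ref{line:OciorACOOLRBAReadyABBA}) and collects $2t+1$ such messages, setting $\VrOutput = v^\star$.

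Next I would split on $v^\star$. If $v^\star = 0$, the only reachable $\Output$ is Line~\ref{line:OciorACOOLph3}, which emits $\defaultvalue$; taking $\Me^\star = \defaultvalue$ settles this case. If $v^\star = 1$, every honest node sets $\Phthreeindicator = 1$ and enters $\COOLHMDM[2]$, and I would pin down the target value: since $\ABBAOutput = 1$, \emph{Validity} of $\ABBA$ forces some honest node to have input $1$, which by Lines~\ref{line:OciorACOOLABBAinputCond}-\ref{line:OciorACOOLABBAinput} means it delivered $[\MeSecond^{(i)}, \RySecond_i^{[2]}, \VrSecond_i = 1]$ from $\COOLBUA[2]$, i.e.\ it output $(*,*,1)$ there. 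By \emph{Majority Unique Agreement} of $\COOLBUA[2]$, at least $t+1$ honest nodes eventually output $(\wv, 1, *)$ from $\COOLBUA[2]$ for a common $\wv =: \Me^\star$, and by \emph{Unique Agreement} every honest node that outputs $(*, 1, *)$ from $\COOLBUA[2]$ outputs exactly $(\Me^\star, 1, *)$.

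Finally I would show both remaining output modes return $\Me^\star$. For Line~\ref{line:OciorACOOLph3SIPhtwoOutput}: the node has $\RySecond_i^{[2]} = 1$, hence output $(\MeSecond^{(i)}, 1, *)$ from $\COOLBUA[2]$, so $\MeSecond^{(i)} = \Me^\star$. For Line~\ref{line:OciorACOOLph3Output2}: the node emits $\Me^{(i)}$ only after $\OECSIFinal = 1$, i.e.\ after $\ECCDec$ returned some $\wv'$ whose re-encoding matches at least $k + t$ entries of $\OECCorrectSymbolSet$, with $k = \networkfaultsizet/3$. The crux is that every entry of $\OECCorrectSymbolSet$ contributed by an honest node is the correct symbol $\hv_j^\T \Me^\star$: entries added via Line~\ref{line:ACOOLPh3OECSecCond} come from honest $j \in \SsSecond_1^{[2]}$, which have $\RySecond_j^{[2]} = 1$ hence message $\Me^\star$; entries added via Line~\ref{line:Ph3OECCond} are $\CORRECTSYMBOL$ values, which an honest node sends only in the else-branch after the majority rule of Lines~\ref{line:OciorACOOLph3MajorityRuleCond}-\ref{line:OciorACOOLph3MajorityRule}, and that rule selects the value carried by $\ge t+1$ symbols from $\SsSecond_1^{[2]}$ — at least one such symbol is from an honest node holding $\Me^\star$, forcing the calibrated symbol to equal $\hv_j^\T \Me^\star$ (which also shows the selected value is unique). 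Since $|\Fc| \le t$, of the $\ge k+t$ matching positions at least $k$ are honest and therefore carry the $\Me^\star$-codeword; two codewords of the $(n,k)$ RS code that agree in $k$ positions coincide, so $\wv' = \Me^\star$. Hence every honest output equals $\Me^\star$. I expect the main obstacle to be this last paragraph: carefully verifying, over both insertion paths into $\OECCorrectSymbolSet$ and through the majority-rule calibration, that no honest node can ever place an incorrect symbol into its decoding set, and then invoking the OEC/Reed--Solomon soundness bound; the $\BRBA$ agreement step and the $\BUA$-to-$\Me^\star$ reduction are routine by comparison.
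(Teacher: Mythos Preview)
Your proposal is correct and follows essentially the same route as the paper: establish agreement and totality on $\VrOutput$ via the Bracha-style $\READY$ phase (packaged in the paper as Lemma~\ref{lm:OciorACOOLvoutput}), split on $\VrOutput$, and in the $\VrOutput=1$ case invoke the $\BUA$ guarantees of $\COOLBUA[2]$ (the paper appeals directly to Lemma~\ref{lm:OciorACOOBUAUniqueSet}) to fix $\Me^\star$ and then verify that both $\COOLHMDM[2]$ output paths return $\Me^\star$. The only point you should add is liveness in the else-branch: since the theorem asserts every honest node \emph{eventually} outputs, you need to note that the $t{+}1$ honest nodes with $\RySecond_j^{[2]}=1$ (together with the calibrated $\CORRECTSYMBOL$s from the remaining honest nodes) supply enough correct symbols for the wait at Line~\ref{line:OciorACOOLph3MajorityRuleCond} and for the OEC loop to terminate---this follows immediately from the ingredients you already established, and the paper makes it explicit.
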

\begin{proof}
In $\OciorACOOL$, if an honest node outputs a value $\Me^{\star}$, it must have set the variable $\VrOutput$ in Line~\ref{line:OciorACOOLVrOutput} of Algorithm~\ref{algm:OciorACOOL}. 
From Lemma~\ref{lm:OciorACOOLvoutput}, if an honest node sets $\VrOutput = \Vr^\star$ in Line~\ref{line:OciorACOOLVrOutput} of Algorithm~\ref{algm:OciorACOOL}, for $\Vr^\star \in \{0,1\}$, then all honest nodes eventually set $\VrOutput = \Vr^\star$. 
Therefore, if one honest node sets $\VrOutput = 0$, then all honest nodes eventually set $\VrOutput = 0$ and output $\defaultvalue$ in Line~\ref{line:OciorACOOLph3} of Algorithm~\ref{algm:OciorACOOL}.
In the following, we focus on the case where all honest nodes eventually set $\VrOutput = 1$.

When an honest node sets $\VrOutput = 1$, it implies that $\ABBA$ must have output the value $1$ (see Lines~\ref{line:OciorACOOLRBAbegin}-\ref{line:OciorACOOLVrOutput} of Algorithm~\ref{algm:OciorACOOL}). 
Due to the \emph{Validity} and \emph{Consistency} properties of $\ABBA$, if $\ABBA$ outputs the value $1$, then at least one honest node~$i$ has input the value $1$ to $\ABBA$ and has delivered $\VrSecond_i = 1$ from $\COOLBUA[2]$ (see Lines~\ref{line:OciorACOOLABBAinputCond} and \ref{line:OciorACOOLABBAinput} of Algorithm~\ref{algm:OciorACOOL}). 
When an honest node~$i$ has delivered $\VrSecond_i = 1$ from $\COOLBUA[2]$, at least $n - 2\networkfaultsizet$ honest nodes must have set $\RySecond_j^{[2]} = 1$ in $\COOLBUA[2]$ (see Lines~\ref{line:OciorACOOLReadyOneCondition} and \ref{line:OciorACOOLReadyOne} of Algorithm~\ref{algm:COOLBUA}).

From Lemma~\ref{lm:OciorACOOBUAUniqueSet}, the honest nodes that set $\RySecond_i^{[2]} = 1$ in $\COOLBUA[2]$ must have the same input message $\MeSecond^{(i)} = \Me^{\star}$, for some $\Me^{\star}$. 
If an honest node~$i$ has set $\VrOutput = 1$ and has delivered $[\MeSecond^{(i)}, \RySecond_{i}^{[2]} = 1, \VrSecond_i]$ from $\COOLBUA[2]$, then this node outputs the value $\MeSecond^{(i)} = \Me^{\star}$ (see Line~\ref{line:OciorACOOLph3SIPhtwoOutput} of Algorithm~\ref{algm:OciorACOOL}).

If an honest node~$i$ has set $\VrOutput = 1$ but has not yet delivered $[\MeSecond^{(i)}, \RySecond_{i}^{[2]} = 1, \VrSecond_i]$ from $\COOLBUA[2]$, it can be shown that this node will eventually output the same value $\Me^{\star}$ in Line~\ref{line:OciorACOOLph3Output2} of Algorithm~\ref{algm:OciorACOOL}. 
Recall that when an honest node has delivered $\VrSecond = 1$ from $\COOLBUA[2]$, at least $n - 2\networkfaultsizet \geq t + 1$ honest nodes must have set $\RySecond_i^{[2]} = 1$ in $\COOLBUA[2]$. 
In this case, from Lemma~\ref{lm:OciorACOOBUAUniqueSet}, at least $t + 1$ honest nodes~$i$ must have set $\RySecond_i^{[2]} = 1$ and  have sent $\ltuple \SYMBOL, 2, (\ysymbolSecond_j^{(i)}(\Me^{\star}), \ysymbolSecond_i^{(i)}(\Me^{\star})) \rtuple$ to Node~$j$, for all $j \in [\networksizen]$, where $\ysymbolSecond_j^{(i)}(\Me^{\star})$ and $\ysymbolSecond_i^{(i)}(\Me^{\star})$ denote the coded symbols generated from the message $\Me^{\star}$. 

From the above results, if an honest node~$i$ has set $\VrOutput = 1$ but has not yet delivered $[\MeSecond^{(i)}, \RySecond_{i}^{[2]} = 1, \VrSecond_i]$ from $\COOLBUA[2]$, then it will eventually deliver from $\COOLBUA[2]$ at least $t + 1$ matching $\ltuple \SYMBOL, 2, (\ysymbolSecond_i^{(j)}(\Me^{\star}), *) \rtuple$ messages for all $j \in \SsSecond_1^{[2]}$, where $\ysymbolSecond_i^{(j)}(\Me^{\star}) = y^{\star}_i$ for some $y^{\star}_i$, which is the $i$-th symbol encoded from the message $\Me^{\star}$. 
In this case, Node~$i$ sets $\ysymbolSecond_i^{(i)} = y^{\star}_i$ in Line~\ref{line:OciorACOOLph3MajorityRule} and sends $\ltuple \CORRECTSYMBOL, \IDCOOL, \ysymbolSecond_i^{(i)} \rtuple$ to all nodes in Line~\ref{line:OciorACOOLph3SendCorrectSymbols}. 
Thus, each symbol $y_j^{(j)}$ included in $\OECCorrectSymbolSet \setminus \Fc$ must be encoded from the same message $\Me^{\star}$. 
Therefore, if an honest node~$i$ has set $\VrOutput = 1$ but has not yet delivered $[\MeSecond^{(i)}, \RySecond_{i}^{[2]} = 1, \VrSecond_i]$ from $\COOLBUA[2]$, it will eventually decode the message $\Me^{\star}$ using online error-correction decoding and output $\Me^{\star}$ in Line~\ref{line:OciorACOOLph3Output2} of Algorithm~\ref{algm:OciorACOOL}. 
 \end{proof}

 \begin{theorem}  [Validity]  \label{thm:OciorACOOLvalidity}   
Given $n\geq 3t+1$,    if all honest nodes input the same  value $\wv$, then every honest node eventually outputs $\wv$ in $\OciorACOOL$.     
\end{theorem}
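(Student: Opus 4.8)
The plan is to chase the \emph{Validity} guarantees of the sub-protocols along the pipeline $\COOLBUA[1]\to\COOLBUA[2]\to\ABBA\to\BRBA\to\COOLHMDM[2]$. Since all honest nodes input the same $\wv$, the \emph{Validity} clause of $\BUA$ (Definition~\ref{def:BUA}) applied to $\COOLBUA[1]$ gives that every honest node~$i$ eventually delivers $[\MeNew^{(i)},\RyNew_i^{[2]},\VrNew_i]$ from $\COOLBUA[1]$ with $\MeNew^{(i)}=\wv$, $\RyNew_i^{[1]}=\RyNew_i^{[2]}=1$, and $\VrNew_i=1$. I would also record the (routine) fact that no honest node ever delivers $\RyNew_i^{[1]}=0$, $\RyNew_i^{[2]}=0$, or $\VrNew_i=0$ from $\COOLBUA[1]$: the guards in Lines~\ref{line:COOLBUASI1zero}, \ref{line:ACOOLph2CSI0}, and \ref{line:OciorACOOLReadyZeroCondition} of Algorithm~\ref{algm:COOLBUA} require $t+1$ ``$0$''-witnesses (or more than $t$ mismatching links), which the at most $t$ dishonest nodes cannot supply when all honest nodes hold the same $\wv$.

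Next I need every honest node to feed $\wv$ into $\COOLBUA[2]$. The value $\MeSecond_i$ can be installed in two ways in Algorithm~\ref{algm:OciorACOOL}: directly as $\MeSecond_i\gets\Me_i=\wv$ at Line~\ref{line:BUA2Input1Cond} once $\RyNew_i^{[2]}=1$ is delivered (which happens here), or through the online-error-correction step (Lines~\ref{line:BUA1OECdec}--\ref{line:BUA2OECend}), which sets $\MeSecond_i\gets\MVBAOutputMsg$ only after re-encoding $\MVBAOutputMsg$ and checking that at least $\kencode+\networkfaultsizet$ of the resulting symbols agree with those in $\OECCorrectSymbolSetNew$. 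The crux is that every symbol an honest node could place in $\OECCorrectSymbolSetNew$ that originates from an honest node equals the $\wv$-encoding of that coordinate (honest senders tagged in $\SsNew_1^{[1]}$ hold $\wv$, and honest nodes send no $\NEWSYMBOL$ here since that requires $\RyNew_i^{[1]}\neq1$), so at most $\networkfaultsizet$ entries are adversarial; since two distinct messages of an $(n,\kencode)$ code agree in at most $\kencode-1$ coded positions, a decoded $\MVBAOutputMsg\neq\wv$ would match at most $(\kencode-1)+\networkfaultsizet<\kencode+\networkfaultsizet$ symbols and is rejected. Hence $\MeSecond_i=\wv$ however it is set, so all honest nodes input $\wv$ to $\COOLBUA[2]$, and \emph{Validity} of $\COOLBUA[2]$ gives that every honest node delivers $[\MeSecond^{(i)},\RySecond_i^{[2]}=1,\VrSecond_i=1]$ with $\MeSecond^{(i)}=\wv$.

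Finally I push through $\ABBA$, $\BRBA$, and $\COOLHMDM[2]$. Each honest node feeds $\VrSecond_i=1$ into $\ABBA$ (Lines~\ref{line:OciorACOOLABBAinputCond}--\ref{line:OciorACOOLABBAinput}) and, by the second paragraph, never takes the Line~\ref{line:OciorACOOLABBAinputBUA1Cond} branch that would feed $0$; so all honest nodes input $1$, and by \emph{Termination} and \emph{Validity} of $\ABBA$ every honest node obtains $\ABBAOutput=1$. Each then sends $\ltuple\READY,\IDCOOL,1\rtuple$ (Line~\ref{line:OciorACOOLRBAReadyABBA}); with $n-\networkfaultsizet\ge2\networkfaultsizet+1$ honest senders, every honest node collects $2\networkfaultsizet+1$ such messages, sets $\VrOutput=1$, and sets $\Phthreeindicator\gets1$ (Lines~\ref{line:OciorACOOLVrOutput}--\ref{line:OciorACOOLph3}). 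This triggers Line~\ref{line:OciorACOOLph3trigger}; since $\COOLBUA[2]$ has already delivered $[\MeSecond^{(i)},\RySecond_i^{[2]}=1,\VrSecond_i]$ with $\RySecond_i^{[2]}=1$, the test at Line~\ref{line:OciorACOOLph3SIPhtwo} succeeds, the node outputs $\MeSecond^{(i)}=\wv$, and it terminates (Line~\ref{line:OciorACOOLph3SIPhtwoOutput}). Thus every honest node outputs $\wv$.

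The step I expect to fight with is the second paragraph: certifying that neither the input-setting OEC for $\COOLBUA[2]$ nor (in corner executions) any related decoding step can install a message other than $\wv$. This needs the minimum-distance argument for the $(n,\kencode)$ code together with a careful count of which coded symbols the adversary can inject into $\OECCorrectSymbolSetNew$; the rest of the argument is a direct, if bookkeeping-heavy, chain of the \emph{Validity}/\emph{Termination} clauses of $\COOLBUA[1]$, $\COOLBUA[2]$, and $\ABBA$, plus the observation that every ``$0$''-side guard in the two algorithms demands more witnesses than the $t$ dishonest nodes can provide.
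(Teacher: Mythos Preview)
Your overall strategy is the same as the paper's---chase validity through $\COOLBUA[1]\to\COOLBUA[2]\to\ABBA\to\BRBA\to\COOLHMDM[2]$---and most of it is right. Two points need attention.

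First, the claim ``honest nodes send no $\NEWSYMBOL$ here since that requires $\RyNew_i^{[1]}\neq1$'' is not justified: $\RyNew_i^{[1]}$ is initially $\defaultvalue\neq1$, and in asynchrony an honest node can satisfy the guard at Line~\ref{line:BUA2InputYiMajorityCond} \emph{before} it sets $\RyNew_i^{[1]}=1$. (Concretely, dishonest nodes can send $(\ysymbolNew_i^{(j)},*)=(y_i(\wv),\text{garbage})$, inflating $\ySymbolSet[y_i(\wv)]$ without inflating $\Lkset_1$.) Your conclusion is still correct, but for a different reason: since $|\ySymbolSet[y^\star]|\ge n-2t\ge t+1$, at least one honest node contributed, so $y^\star=y_i(\wv)$; hence any honest $\NEWSYMBOL$ carries the correct symbol. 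This is exactly Lemma~\ref{lm:OciorACOOLSameinputOECdec} in the paper, and you should invoke (or reproduce) that argument rather than claim the message is never sent.

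Second, and more importantly, the last step has a real gap. You assert that when $\Phthreeindicator=1$ fires at an honest node~$i$, ``$\COOLBUA[2]$ has already delivered $[\MeSecond^{(i)},\RySecond_i^{[2]}=1,\VrSecond_i]$,'' so Line~\ref{line:OciorACOOLph3SIPhtwo} takes the \textbf{if} branch. This ordering is not guaranteed: $\ABBA$ may output at other honest nodes first, their $\READY$ messages can reach node~$i$ (possibly amplified via Line~\ref{line:OciorACOOLRBAsendReady}), and node~$i$ can set $\VrOutput=1$ before its own $\COOLBUA[2]$ has delivered $\RySecond_i^{[2]}=1$ (indeed, even the tuple delivered at Line~\ref{line:OciorACOOLReadyOne} can carry $\RySecond_i^{[2]}=\defaultvalue$). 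The paper closes this gap by also handling the \textbf{else} branch: since at least $t+1$ honest nodes in $\SsSecond_1^{[2]}$ hold $\wv$, the majority rule at Line~\ref{line:OciorACOOLph3MajorityRule} and the subsequent OEC at Lines~\ref{line:OECbegin}--\ref{line:OECend} recover $\wv$, so Line~\ref{line:OciorACOOLph3Output2} also outputs $\wv$. Alternatively, the paper short-circuits all of this by first invoking Theorems~\ref{thm:OciorACOOLConsistency} and~\ref{thm:OciorACOOLTermination} and then only arguing that whatever value is output must equal $\wv$. Either fix works; as written, your argument is incomplete on this point.
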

\begin{proof} 
From Theorem~\ref{thm:OciorACOOLTermination}, if all honest nodes receive their inputs, then every honest node eventually outputs a value and terminates in $\OciorACOOL$. 
Furthermore, from Theorem~\ref{thm:OciorACOOLConsistency}, in $\OciorACOOL$, if one honest node outputs a value $\Me^{\star}$, then every honest node eventually outputs the same value $\Me^{\star}$, for some $\Me^{\star}$. 
Thus, based on Theorem~\ref{thm:OciorACOOLTermination} and Theorem~\ref{thm:OciorACOOLConsistency}, what remains to prove for this theorem is that if all honest nodes input the same value $\wv$, and if an honest node~$i$ outputs a value $\Me^{\star}$ in $\OciorACOOL$, then $\Me^{\star} = \wv$.

First, we prove that if all honest nodes input the same message $\wv$ in $\OciorACOOL$, then in $\COOLBUA[1]$, no honest node will set $\Ry_i^{[1]} = 0$ (see Lines~\ref{line:COOLBUASI1zeroCond} and \ref{line:COOLBUASI1zero} of Algorithm~\ref{algm:COOLBUA}), no honest node will set $\Ry_i^{[2]} = 0$ (see Lines~\ref{line:ACOOLph2CSI0Cond} and \ref{line:ACOOLph2CSI0} of Algorithm~\ref{algm:COOLBUA}), and no honest node will set $\Vr_i = 0$ (see Lines~\ref{line:OciorACOOLReadyZeroCondition} and \ref{line:OciorACOOLReadyZero} of Algorithm~\ref{algm:COOLBUA}). 
Specifically, in this case, if an honest node~$i$ sends out the message in Line~\ref{line:ACOOLECCEncindicator} of Algorithm~\ref{algm:COOLBUA}, then the message must be $\ltuple\SYMBOL, \IDCOOL, (y_j^{(i)}(\wv), y_i^{(i)}(\wv))\rtuple$, where $y_j^{(i)}(\wv)$ and $y_i^{(i)}(\wv)$ are symbols encoded from the message $\wv$ (see Lines~\ref{line:ACOOLECCEnc} and \ref{line:ACOOLECCEncindicator} of Algorithm~\ref{algm:COOLBUA}). 
Thus, the equality $(y_i^{(j)}(\wv), y_j^{(j)}(\wv)) = (y_i^{(i)}(\wv), y_j^{(i)}(\wv))$ must hold for any $i, j \in [n] \setminus \Fc$ (see Line~\ref{line:ACOOLph1MatchCond} of Algorithm~\ref{algm:COOLBUA}). 
This implies that the condition $|\Lkset_0| \geq \networkfaultsizet + 1$ in Line~\ref{line:COOLBUASI1zeroCond} of Algorithm~\ref{algm:COOLBUA} will never be satisfied at any honest node, and therefore no honest node will set $\Ry_i^{[1]} = 0$ in Line~\ref{line:COOLBUASI1zero} of Algorithm~\ref{algm:COOLBUA}. 
Since no honest node will send out $\ltuple\SIone, \IDCOOL, \Ry_i^{[1]} = 0\rtuple$, the condition $|\Ss_0^{[1]}| \geq \networkfaultsizet + 1$ in Line~\ref{line:COOLBUASI1zeroCond} of Algorithm~\ref{algm:COOLBUA} will never be satisfied at any honest node, and thus no honest node will set $\Ry_i^{[2]} = 0$ in Line~\ref{line:ACOOLph2CSI0} of Algorithm~\ref{algm:COOLBUA}. 
Similarly, since no honest node will send out $\ltuple\SItwo, \IDCOOL, \Ry_i^{[2]} = 0\rtuple$, the condition $|\Ss_0^{[2]}| \geq \networkfaultsizet + 1$ in Line~\ref{line:OciorACOOLReadyZeroCondition} of Algorithm~\ref{algm:COOLBUA} will never be satisfied at any honest node, and hence no honest node will set $\Vr_i = 0$ in Line~\ref{line:OciorACOOLReadyZero} of Algorithm~\ref{algm:COOLBUA}.

Next, we prove that if all honest nodes input the same message $\wv$ in $\OciorACOOL$, and if any honest node~$i$ passes $\MeSecond_i$ into $\COOLBUA[2]$ as an input message (see Lines~\ref{line:BUA2Input1} and \ref{line:BUA2Input2} of Algorithm~\ref{algm:OciorACOOL}), then $\MeSecond_i = \wv$. 
Specifically, the above results show that, in this case, no honest node will set $\RyNew_i^{[2]} = 0$. 
If any honest node~$i$ passes $\MeSecond_i$ into $\COOLBUA[2]$ as an input message, and if node~$i$ has set $\RyNew_i^{[2]} = 1$ from $\COOLBUA[1]$, then  node~$i$ sets $\MeSecond_i = \Me_i = \wv$ and passes $\wv$ into $\COOLBUA[2]$ as its input message (see Lines~\ref{line:BUA2Input1Cond}-\ref{line:BUA2Input1} of Algorithm~\ref{algm:OciorACOOL}). 
Furthermore, from Lemma~\ref{lm:OciorACOOLSameinputOECdec}, if all honest nodes input the same value $\wv$, and if an honest node~$i$ sets $\MeSecond_i = \MVBAOutputMsg$ in Line~\ref{line:BUA2OECend} of Algorithm~\ref{algm:OciorACOOL}, where $\MVBAOutputMsg$ is the message decoded from online error correction in Line~\ref{line:BUA1OECdec}, then $\MVBAOutputMsg = \wv$. This results implies that,   if all honest nodes input the same value $\wv$, and if  any honest node~$i$ passes $\MeSecond_i$ into $\COOLBUA[2]$ as an input message, then $\MeSecond_i =  \wv$ (see Lines~\ref{line:BUA2Input2Cond}-\ref{line:BUA2Input2} of Algorithm~\ref{algm:OciorACOOL}).

From the above results, if all honest nodes input the same message $\wv$ in $\OciorACOOL$, then the input messages (if any) of all honest nodes in $\COOLBUA[2]$ must also be $\wv$. 
Following from the previous arguments, in this case, within $\COOLBUA[2]$, no honest node will set $\RySecond_i^{[1]} = 0$ (see Lines~\ref{line:COOLBUASI1zeroCond} and \ref{line:COOLBUASI1zero} of Algorithm~\ref{algm:COOLBUA}), no honest node will set $\RySecond_i^{[2]} = 0$ (see Lines~\ref{line:ACOOLph2CSI0Cond} and \ref{line:ACOOLph2CSI0} of Algorithm~\ref{algm:COOLBUA}), and no honest node will set $\VrSecond_i = 0$ (see Lines~\ref{line:OciorACOOLReadyZeroCondition} and \ref{line:OciorACOOLReadyZero} of Algorithm~\ref{algm:COOLBUA}).

From Theorem~\ref{thm:OciorACOOLTermination}, if all honest nodes input the same message $\wv$ in $\OciorACOOL$, then every honest node eventually outputs a value and terminates in $\OciorACOOL$. 
In this case, by combining the above results, we conclude that   no honest node sets $\VrSecond_i = 0$,  and that at least one honest node~$i$ delivers $[\MeSecond^{(i)} = \wv, \RySecond_{i}^{[2]}, \VrSecond_i = 1]$ from $\COOLBUA[2]$ and inputs $1$ into $\ABBA$ as its input message in Line~\ref{line:OciorACOOLABBAinput} of Algorithm~\ref{algm:OciorACOOL}. 
Since no honest node sets $\VrSecond_i = 0$, no honest node will input $0$ into $\ABBA$ as its input message. 
This implies that $\ABBA$ eventually outputs $1$, and all honest nodes eventually set $\VrOutput = 1$ (see Lines~\ref{line:OciorACOOLRBAbegin}-\ref{line:OciorACOOLph3} of Algorithm~\ref{algm:OciorACOOL}). 

Consequently, if all honest nodes input the same message $\wv$ in $\OciorACOOL$, and if an honest node~$i$ has set $\VrOutput = 1$ and has delivered $[\MeSecond^{(i)}, \RySecond_{i}^{[2]} = 1, \VrSecond_i]$ from $\COOLBUA[2]$, then node~$i$ eventually outputs the value $\wv$ in Line~\ref{line:OciorACOOLph3SIPhtwoOutput} of Algorithm~\ref{algm:OciorACOOL}. 
If an honest node~$i$ has set $\VrOutput = 1$ but has not yet delivered $[\MeSecond^{(i)}, \RySecond_{i}^{[2]} = 1, \VrSecond_i]$ from $\COOLBUA[2]$,   this node   eventually outputs the same value $\wv$ in Line~\ref{line:OciorACOOLph3Output2} of Algorithm~\ref{algm:OciorACOOL} (see the last two paragraphs of the proof of Theorem~\ref{thm:OciorACOOLConsistency}). 
This completes the proof.
\end{proof}

\begin{lemma}     \label{lm:OciorACOOLSameinputOECdec}
In $\OciorACOOL$, if all honest nodes input the same value $\wv$, and if an honest node~$i$ sets $\MeSecond_i = \MVBAOutputMsg$ in Line~\ref{line:BUA2OECend} of Algorithm~\ref{algm:OciorACOOL}, where $\MVBAOutputMsg$ is the message decoded from online error correction in Line~\ref{line:BUA1OECdec}, then $\MVBAOutputMsg = \wv$. 
\end{lemma}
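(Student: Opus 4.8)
The plan is to prove the claim in two steps. First I would show that, under the hypothesis that all honest nodes input $\wv$, every symbol that an honest node records in $\OECCorrectSymbolSetNew$ at an \emph{honest} index $j$ equals $y_j(\wv)$, the $j$-th symbol of the codeword $\ECCEnc(n,\kencode,\wv)$. Second, I would use the re-encoding consistency test in Lines~\ref{line:BUA2OECbegin}--\ref{line:BUA2OECend} of Algorithm~\ref{algm:OciorACOOL} to show that the decoded $\MVBAOutputMsg$ agrees with $\wv$ on enough positions to force $\MVBAOutputMsg=\wv$.

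For the first step, note that an honest node~$i$ writes into $\OECCorrectSymbolSetNew[j]$ only via Line~\ref{line:BUA2InputYiMajorityyj} or Line~\ref{line:BUA2InputYiMajoritySi1}. In the Line~\ref{line:BUA2InputYiMajoritySi1} case the recorded value is $\ysymbolNew_j^{(j)}$, the $j$-th coded symbol carried in a $\SYMBOL$ message that $\COOLBUA[1]$ delivered from Node~$j$; since all honest nodes input $\wv$, an honest Node~$j$ encodes only $\wv$ in Line~\ref{line:ACOOLECCEnc} of Algorithm~\ref{algm:COOLBUA}, so for honest $j$ this value is $y_j(\wv)$. In the Line~\ref{line:BUA2InputYiMajorityyj} case the recorded value is $\ystrongmajority_j$, which an honest Node~$j$ sets (in Line~\ref{line:BUA2InputYiMajoritySend}) only to a value $y^{\star}$ with $|\ySymbolSet[y^{\star}]|\geq n-2t$. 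By construction $\ySymbolSet[y^{\star}]$ is the set of nodes from which Node~$j$ received a $\COOLBUA[1]$ $\SYMBOL$ message whose $j$-th coordinate equals $y^{\star}$; since at most $|\Fc|\leq t$ of those are dishonest, at least $n-3t\geq 1$ honest nodes reported $y^{\star}$ as the $j$-th symbol, and an honest node reports $y_j(\wv)$ there, so $\ystrongmajority_j=y_j(\wv)$. Hence for every honest $j$ any recorded $\OECCorrectSymbolSetNew[j]$ equals $y_j(\wv)$.

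For the second step, suppose Node~$i$ sets $\MeSecond_i\gets\MVBAOutputMsg$ in Line~\ref{line:BUA2OECend}. The guard just above it guarantees at least $\kencode+\networkfaultsizet$ indices $j\in\OECCorrectSymbolSetNew$ with $y_j(\MVBAOutputMsg)=\OECCorrectSymbolSetNew[j]$, where $[y_1(\MVBAOutputMsg),\dots,y_n(\MVBAOutputMsg)]=\ECCEnc(n,\kencode,\MVBAOutputMsg)$. Among these $\kencode+\networkfaultsizet$ indices at most $|\Fc|\leq\networkfaultsizet$ are dishonest, so at least $\kencode$ of them are honest; for each such honest $j$, the first step gives $y_j(\MVBAOutputMsg)=\OECCorrectSymbolSetNew[j]=y_j(\wv)$. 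Thus $\ECCEnc(n,\kencode,\MVBAOutputMsg)$ and $\ECCEnc(n,\kencode,\wv)$ agree in at least $\kencode$ positions, and since any $\kencode$ coded symbols of the $(n,\kencode)$ error-correction code uniquely determine the encoded message, $\MVBAOutputMsg=\wv$.

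The main obstacle is the first step: one has to enumerate carefully the two ways a symbol enters $\OECCorrectSymbolSetNew$ and, for the $\NEWSYMBOL$ path, unwind the definitions of $\ystrongmajority_j$ and $\ySymbolSet$ to run the quorum/pigeonhole argument that pins the recorded value to $y_j(\wv)$; once that is in place, the concluding ``$\kencode$ agreeing symbols imply equal codewords'' argument is the standard one used throughout the $\COOL$ analysis.
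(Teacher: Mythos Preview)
Your proposal is correct and follows essentially the same approach as the paper: both first argue that every entry $\OECCorrectSymbolSetNew[j]$ at an honest index $j$ equals $y_j(\wv)$ by separately handling the $\SYMBOL$ path and the $\NEWSYMBOL$ path (the latter via the quorum $|\ySymbolSet[y^\star]|\geq n-2t$ forcing an honest witness), and then conclude $\MVBAOutputMsg=\wv$. The only minor difference is that the paper's final line appeals directly to the online error correction property, whereas you spell out explicitly that the $\kencode+\networkfaultsizet$ matching positions must include at least $\kencode$ honest ones, forcing the two codewords to coincide; this is exactly the content of the OEC guarantee the paper invokes, so your version is a faithful unpacking rather than a different argument.
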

\begin{proof}
We consider the case where all honest nodes input the same value $\wv$. 
In this case, if there exists a symbol $y^{\star}$ such that the condition $|\ySymbolSet[y^{\star}]| \geq n - 2t \geq t + 1$ in Line~\ref{line:BUA2InputYiMajorityCond} of Algorithm~\ref{algm:OciorACOOL} is satisfied at an honest node~$i$, then it follows that $y^{\star} = y_i(\wv)$, i.e., $y^{\star}$ is equal to the $i$-th coded symbol encoded from the message $\wv$. 
Thus, in this case, if any honest node~$i$ sends $\ltuple \NEWSYMBOL, \IDCOOL, \ystrongmajority_i \rtuple$ to all nodes in Line~\ref{line:BUA2InputYiMajoritySend}, then $\ystrongmajority_i = y_i(\wv)$. 
Furthermore, in this case, if any honest node~$i$ sends $\ltuple \SYMBOL, 1, (*, \ysymbolNew_i^{(i)}) \rtuple$ in $\COOLBUA[1]$, then $\ysymbolNew_i^{(i)} = y_i(\wv)$. 
Therefore, if an honest node sets $\OECCorrectSymbolSetNew[j] \gets \ystrongmajority_j$ in Lines~\ref{line:BUA2InputYiMajorityyj} and~\ref{line:BUA2InputYiMajoritySi1} of Algorithm~\ref{algm:OciorACOOL}, for any $j \in [n] \setminus \Fc$, then $\ystrongmajority_j = y_j(\wv)$. 
In other words, each symbol $y_j^{(j)}$ included in $\OECCorrectSymbolSetNew$ at any honest node~$i$, for any $j \in [n] \setminus \Fc$, must be encoded from the same message $\wv$. 
Hence, in this case, if any honest node~$i$ decodes a message $\MVBAOutputMsg$ from online error correction in Line~\ref{line:BUA1OECdec}, then $\MVBAOutputMsg = \wv$. 
\end{proof}

 \begin{theorem}  [Termination]  \label{thm:OciorACOOLTermination}   
Given $n\geq 3t+1$,    if all  honest nodes receive their inputs, then every honest node  eventually outputs a value and terminates  in $\OciorACOOL$.     
\end{theorem}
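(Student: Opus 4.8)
The plan is to follow the case split of Section~\ref{sec:OverviewOciorACOOL}. Since every honest node is assumed to receive its input, every honest node eventually passes $\Me_i$ into $\COOLBUA[1]$ and keeps it running, so Lemma~\ref{lm:OciorACOOBUAUniqueSet} applies and gives $\gnNozero^{[2]}\le 1$ (for $\COOLBUA[1]$, and likewise for $\COOLBUA[2]$). I would then distinguish Case~I, where $|\AsetNOzero_{1}^{[2]}|\ge n-|\Fc|-t$, from Case~II, where $|\AsetNOzero_{1}^{[2]}|<n-|\Fc|-t$. In either case the argument has the same three parts. Part~(a): show that every honest node eventually hands an input to $\ABBA$; then \emph{Termination} of $\ABBA$ gives every honest node an $\ABBA$ output, \emph{Consistency} of $\ABBA$ makes these outputs a common bit $\Vr^{\star}$, every honest node broadcasts $\ltuple\READY,\IDCOOL,\Vr^{\star}\rtuple$, the threshold of $2t+1$ $\READY$ messages of the Bracha-style $\BRBA$ wrapper (Lines~\ref{line:OciorACOOLRBAbegin}--\ref{line:OciorACOOLph3}) is met at every honest node, and every honest node sets $\VrOutput=\Vr^{\star}$ (cf.\ Lemma~\ref{lm:OciorACOOLvoutput}). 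Part~(b): if $\Vr^{\star}=0$, every honest node outputs $\defaultvalue$ and terminates in Line~\ref{line:OciorACOOLph3}. Part~(c): if $\Vr^{\star}=1$, every honest node enters $\COOLHMDM[2]$ and must be shown to terminate there.

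In Case~I, Lemma~\ref{lm:OciorACOOLCaseISameInputBUA2} shows that every honest node eventually feeds the same value $\Meg_{1}$ into $\COOLBUA[2]$, so \emph{Validity} of the $\BUA$ protocol $\COOLBUA[2]$ makes every honest node deliver $[\MeSecond^{(i)}=\Meg_{1},\RySecond_i^{[2]}=1,\VrSecond_i=1]$; this supplies an input to $\ABBA$, giving part~(a). For part~(c), when $\Vr^{\star}=1$ every honest node has already delivered $\RySecond_i^{[2]}=1$ from $\COOLBUA[2]$, so the test in Line~\ref{line:OciorACOOLph3SIPhtwo} passes and it outputs $\MeSecond^{(i)}$ and terminates. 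Thus Case~I is routine given the two lemmas.

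In Case~II, by the definitions together with $\gnNozero^{[2]}\le 1$, every honest node not in $\AsetNOzero_{1}^{[2]}$ has final value $\RyNew_i^{[2]}=0$, which can only have arisen by executing Line~\ref{line:ACOOLph2CSI0} and broadcasting the corresponding $\SItwo$ message; as there are more than $t$ such honest nodes, every honest node eventually sees $|\SsNew_0^{[2]}|\ge t+1$, delivers $\VrNew_i=0$ from $\COOLBUA[1]$, and hands $0$ to $\ABBA$, giving part~(a). For part~(c), assume $\Vr^{\star}=1$. By \emph{Validity}/\emph{Consistency} of $\ABBA$, some honest node input $1$, i.e.\ delivered $\VrSecond=1$ from $\COOLBUA[2]$, whence $|\SsSecond_1^{[2]}|\ge n-t$ and at least $n-2t\ge t+1$ honest nodes set $\RySecond^{[2]}=1$ in $\COOLBUA[2]$; by Lemma~\ref{lm:OciorACOOBUAUniqueSet} these share a common message $\Me^{\star}$ and each broadcasts its coded symbols of $\Me^{\star}$ together with $\ltuple\SItwo,\IDCOOL,1\rtuple$. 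Using this anchor set, I would argue that every honest node in the \emph{else}-branch of $\COOLHMDM[2]$ eventually observes $t+1$ matching $\ltuple\SYMBOL,2,(y^{\star},*)\rtuple$ messages (with $y^{\star}$ its own coded symbol of $\Me^{\star}$), completes the wait in Line~\ref{line:OciorACOOLph3MajorityRuleCond}, and broadcasts $\ltuple\CORRECTSYMBOL,\IDCOOL,y^{\star}\rtuple$; pooling these symbols with those obtained through $\SsSecond_1^{[2]}$, every honest node eventually holds the correct coded symbol of $\Me^{\star}$ for every honest node, i.e.\ at least $n-|\Fc|$ correct symbols and at most $|\Fc|\le t$ corrupted ones. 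Since $k=t/3$ and $2|\Fc|+k\le 2t+t/3\le 3t+1\le n$, the final online-error-correction decoding necessarily succeeds once all correct symbols have arrived (then $n^{\diamond}\ge n-|\Fc|+t^{\diamond}$ with $t^{\diamond}\le|\Fc|$ forces $2t^{\diamond}+k\le n^{\diamond}$), so the node decodes $\Me^{\star}$, sets $\OECSIFinal=1$, and terminates in Line~\ref{line:OciorACOOLph3Output2}; honest nodes that instead satisfy Line~\ref{line:OciorACOOLph3SIPhtwo} terminate at once.

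The main obstacle is precisely part~(c) in Case~II, where two sub-claims need the detailed $\COOLBUA$ analysis rather than just the stated lemmas. First, one must show that every honest node eventually supplies an input to $\COOLBUA[2]$ --- directly when $\RyNew_i^{[2]}=1$, and otherwise via the online-error-correction step that sets $\MeSecond_i$ in Lines~\ref{line:BUA2InputYiMajoritySi0Cond}--\ref{line:BUA2OECend} from the symbols broadcast by honest nodes in $\AsetNOzero_1^{[2]}$ --- since only then does $\COOLBUA[2]$ at that node start delivering the $\ltuple\SYMBOL,2,\cdot\rtuple$ messages that the \emph{else}-branch waits for. Second, one must establish the quantitative liveness of the online error correction itself, where the slack $2t+k\le n$ with $k=t/3$ is exactly what makes decoding provably terminate against up to $t$ corrupted symbols. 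I would isolate these, together with the anchor claim ``at least $n-2t$ honest nodes set $\RySecond^{[2]}=1$ with a common message,'' as dedicated lemmas mirroring the termination analysis of $\COOL$ in~\cite{ChenOciorCOOL:24}. Everything else --- part~(a) in both cases, the $\BRBA$ wrapper, and all of Case~I --- then follows directly.
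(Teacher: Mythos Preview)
Your approach is essentially the paper's proof. The paper packages your part~(a) as a proof by contradiction (Lemma~\ref{lm:OciorACOOLTerminationRBA}): it assumes no honest node ever sets $\VrOutput$, deduces that no honest node terminates, and then runs exactly your Case~I/Case~II argument---invoking Lemmas~\ref{lm:OciorACOOBUAUniqueSet} and~\ref{lm:OciorACOOLCaseISameInputBUA2}---to force an $\ABBA$ output and hence a contradiction. This is logically equivalent to your direct argument, because no honest node can terminate before setting $\VrOutput$, so the ``keep running'' hypothesis of those lemmas is automatically available. For part~(c) the paper simply points back to the last two paragraphs of the Consistency proof (Theorem~\ref{thm:OciorACOOLConsistency}), which is your anchor-set argument verbatim: at least $n-2t\ge t+1$ honest nodes with $\RySecond^{[2]}=1$ share a common $\Me^{\star}$, their symbols feed the majority rule at Line~\ref{line:OciorACOOLph3MajorityRuleCond} and then the final OEC, and the node terminates at Line~\ref{line:OciorACOOLph3Output2}.

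One remark on the obstacle you single out. You are right that a node in the else-branch must itself have input to $\COOLBUA[2]$ before any $\ltuple\SYMBOL,2,\cdot\rtuple$ can be \emph{delivered} to the main protocol (Line~8 of Algorithm~\ref{algm:COOLBUA} blocks on $\ECCEncindicator=1$), and the paper does not make this step explicit---it simply asserts the delivery. However, your proposed fix for Case~II (``from the symbols broadcast by honest nodes in $\AsetNOzero_1^{[2]}$'') does not obviously suffice there, since in Case~II the hypothesis $|\AsetNOzero_1^{[2]}|<n-|\Fc|-t$ gives no useful lower bound on that set, so the OEC of Lines~\ref{line:BUA2OECbegin}--\ref{line:BUA2OECend} is not guaranteed $k+t$ consistent symbols from $\AsetNOzero_1^{[2]}$ alone. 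To close this you would need to also count the $\NEWSYMBOL$ senders (honest nodes triggering Line~\ref{line:BUA2InputYiMajorityCond}) together with the $\SsNew_1^{[1]}$ route (Line~\ref{line:BUA2InputYiMajoritySi1Cond}), and argue that jointly they populate $\OECCorrectSymbolSetNew$ sufficiently at every honest node; this is genuine additional work beyond what either you or the paper have written down.
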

\begin{proof}
From Lemma~\ref{lm:OciorACOOLTerminationRBA}, if all honest nodes receive their inputs, then at least one honest node eventually sets $\VrOutput = \Vr^\star$ in Line~\ref{line:OciorACOOLVrOutput} of Algorithm~\ref{algm:OciorACOOL}, for some $\Vr^\star \in \{0,1\}$. 
From Lemma~\ref{lm:OciorACOOLvoutput}, if an honest node sets $\VrOutput = \Vr^\star$ in Line~\ref{line:OciorACOOLVrOutput} of Algorithm~\ref{algm:OciorACOOL}, then all honest nodes eventually set $\VrOutput = \Vr^\star$. 
If one honest node sets $\VrOutput = 0$, then all honest nodes eventually set $\VrOutput = 0$, output $\defaultvalue$, and terminate in Line~\ref{line:OciorACOOLph3} of Algorithm~\ref{algm:OciorACOOL}. 
In the following, we focus on the case where all honest nodes eventually set $\VrOutput = 1$.

As in the proof of Theorem~\ref{thm:OciorACOOLConsistency}, if an honest node~$i$ has set $\VrOutput = 1$ and has delivered $[\MeSecond^{(i)}, \RySecond_{i}^{[2]} = 1, \VrSecond_i]$ from $\COOLBUA[2]$, then this node outputs the value $\MeSecond^{(i)}$ and terminates in Line~\ref{line:OciorACOOLph3SIPhtwoOutput} of Algorithm~\ref{algm:OciorACOOL}.  
It has also been shown in the proof of Theorem~\ref{thm:OciorACOOLConsistency} that if an honest node~$i$ has set $\VrOutput = 1$ but has not yet delivered $[\MeSecond^{(i)}, \RySecond_{i}^{[2]} = 1, \VrSecond_i]$ from $\COOLBUA[2]$, then Node~$i$ eventually outputs a message  and terminates in Line~\ref{line:OciorACOOLph3Output2} of Algorithm~\ref{algm:OciorACOOL} (see the last two paragraphs of the proof of Theorem~\ref{thm:OciorACOOLConsistency}). 
\end{proof}

\begin{lemma}  [Totality and Consistency Properties of $\RBA$]  \label{lm:OciorACOOLvoutput}
In  $\OciorACOOL$,  if an  honest node sets     $\VrOutput= \Vr^\star$   in  Line~\ref{line:OciorACOOLVrOutput}  of Algorithm~\ref{algm:OciorACOOL}, for $\Vr^\star\in \{1,0\}$, then  all honest nodes  eventually set  $\VrOutput= \Vr^\star$. 
\end{lemma}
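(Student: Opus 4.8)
The plan is to read Lines~\ref{line:OciorACOOLRBAbegin}--\ref{line:OciorACOOLph3} of Algorithm~\ref{algm:OciorACOOL} as a Bracha-style amplification gadget layered on top of $\ABBA$: honest nodes echo the $\ABBA$ output as a $\READY$ message (Line~\ref{line:OciorACOOLRBAReadyABBA}), re-echo a $\READY$ value once they have seen $t+1$ copies of it from distinct senders (Line~\ref{line:OciorACOOLRelialbeBegin}), and commit $\VrOutput$ once they have seen $2t+1$ copies (Lines~\ref{line:OciorACOOLVrOutputCond}--\ref{line:OciorACOOLVrOutput}). The source of agreement among honest nodes will be the \emph{Consistency} property of $\ABBA$; the amplification layer then upgrades this to the agreement on $\VrOutput$ and the totality asserted by the lemma.

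The first and central step is the invariant that \emph{every} $\ltuple\READY, \IDCOOL, v\rtuple$ message sent by an honest node carries one and the same value $v_0$. By the hypothesis, some honest node collected $2t+1$ $\READY$ messages for $\Vr^\star$ from distinct senders, so at least $t+1\geq 1$ honest nodes sent a $\READY$; consider the one that did so first in real time. It cannot have used Line~\ref{line:OciorACOOLRelialbeBegin} (that trigger needs $t+1$ distinct $\READY$ senders, hence an honest one strictly earlier), so it used Line~\ref{line:OciorACOOLRBAReadyABBA}, i.e. its $\ABBA$ produced output; call that value $v_0$. By $\ABBA$'s \emph{Consistency}, every honest node whose $\ABBA$ outputs produces exactly $v_0$. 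Now I induct on the real-time order of honest $\READY$ events: a later honest $\READY$ sent via Line~\ref{line:OciorACOOLRBAReadyABBA} carries $v_0$ for the same reason, while one sent via Line~\ref{line:OciorACOOLRelialbeBegin} is triggered by $t+1$ $\READY$ messages that include one from an honest node acting earlier, which by the induction hypothesis carried $v_0$. Hence all honest $\READY$ messages carry $v_0$ (a node sending the same value twice is harmless for the ``distinct senders'' counts).

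Given the invariant, both conclusions follow quickly. Consistency of $\VrOutput$: any honest node setting $\VrOutput$ in Line~\ref{line:OciorACOOLVrOutput} received $2t+1$ $\READY$ messages for that value from distinct senders, at least $t+1$ of them honest, so by the invariant the value is $v_0$; in particular $\Vr^\star=v_0$, and no honest node can commit to anything else. Totality: the $\geq t+1$ honest nodes whose $\Vr^\star$-$\READY$s were counted in the hypothesis will, by eventual delivery, reach every honest node, so every honest node eventually sees $\geq t+1$ $\READY$ messages for $\Vr^\star$ and, unless it has already sent a $\READY$, sends $\ltuple\READY, \IDCOOL, \Vr^\star\rtuple$ via Line~\ref{line:OciorACOOLRelialbeBegin} (and if it had already sent one, that one was for $\Vr^\star$ by the invariant). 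Thus eventually all $n-t\geq 2t+1$ honest nodes have sent $\ltuple\READY, \IDCOOL, \Vr^\star\rtuple$, every honest node eventually collects $2t+1$ of them, and sets $\VrOutput=\Vr^\star$. A node that terminates right after setting $\VrOutput=0$ has already relayed its own $\READY$ — it crossed the $t+1$ threshold before the $2t+1$ one — so early termination does not break the relay chain.

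The main obstacle is the invariant in the second step, specifically ruling out that an honest node re-echoes (via Line~\ref{line:OciorACOOLRelialbeBegin}) a $\READY$ for a value other than the $\ABBA$ output, or that two honest nodes re-echo different values; this is exactly where $\ABBA$'s \emph{Consistency} is indispensable, and the ``temporally first honest $\READY$'' induction is the device that avoids circular reasoning. Everything else is routine: the counting facts that $2t+1$ distinct senders contain at least $t+1$ honest nodes and that $n-t\geq 2t+1$ under $n\geq 3t+1$, together with the asynchronous eventual-delivery guarantee. I would also note that the argument uses only $\ABBA$'s \emph{Consistency}, not its \emph{Termination}; termination of $\OciorACOOL$ is established separately in Theorem~\ref{thm:OciorACOOLTermination}, and the gadget analyzed here is precisely what supplies the totality that $\ABBA$ alone need not provide.
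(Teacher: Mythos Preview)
Your proof is correct and follows essentially the same approach as the paper: both arguments pivot on the observation that any honest $\READY$ message must (tracing back through the $t{+}1$ relay trigger) ultimately originate from an honest $\ABBA$ output, so by $\ABBA$'s \emph{Consistency} all honest $\READY$ messages carry the same value; the Bracha amplification then yields both consistency and totality of $\VrOutput$. Your version is in fact more carefully stated---the explicit ``temporally first honest $\READY$'' induction and the remark that the $t{+}1$ threshold is crossed before the $2t{+}1$ one make precise two points the paper treats more informally.
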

\begin{proof}
This result follows from the \emph{Totality} and \emph{Consistency} properties of $\RBA$, as described in Lines~\ref{line:OciorACOOLRBAbegin}-\ref{line:OciorACOOLph3} of Algorithm~\ref{algm:OciorACOOL}.        
Specifically, when an honest node sets $\VrOutput = \Vr^\star$ in Line~\ref{line:OciorACOOLVrOutput} of Algorithm~\ref{algm:OciorACOOL}, for $\Vr^\star \in \{1,0\}$, it implies that this node has received at least $2t + 1$ $\ltuple \READY, \IDCOOL, \Vr^\star \rtuple$ messages (see Line~\ref{line:OciorACOOLVrOutputCond}). 
In this case, at least $t + 1$ honest nodes must have sent out $\ltuple \READY, \IDCOOL, \Vr^\star \rtuple$ messages. 

In our setting, if an honest node has sent out a $\ltuple \READY, \IDCOOL, \Vr^\star \rtuple$ message, then $\ABBA$ must have output a value $\ABBAOutput$ at at least one honest node (see Line~\ref{line:OciorACOOLRBAbegin}).   
Due to the \emph{Consistency} property of $\ABBA$, if two messages $\ltuple \READY, \IDCOOL, \Vr^\star \rtuple$ and $\ltuple \READY, \IDCOOL, \Vr' \rtuple$ are sent out from two honest nodes, respectively, then $\Vr^\star = \Vr'$.  

Therefore,  if an honest node sets $\VrOutput = \Vr^\star$, then each honest node eventually sends out a message $\ltuple \READY, \IDCOOL, \Vr^\star \rtuple$ (see Lines~\ref{line:OciorACOOLRelialbeBegin} and~\ref{line:OciorACOOLRBAsendReady}). 
Consequently, each honest node eventually receives at least $2t + 1$ $\ltuple \READY, \IDCOOL, \Vr^\star \rtuple$ messages and subsequently sets $\VrOutput = \Vr^\star$ in Line~\ref{line:OciorACOOLVrOutput}.  
Note that every honest node should have sent out a message $\ltuple \READY, \IDCOOL, \Vr^\star \rtuple$ and set $\VrOutput = \Vr^\star$ before termination.  
\end{proof}

\begin{lemma}  [Termination]  \label{lm:OciorACOOLTerminationRBA}
In  $\OciorACOOL$, if all  honest nodes receive their inputs, then at least  an  honest node eventually sets     $\VrOutput= \Vr^\star$   in  Line~\ref{line:OciorACOOLVrOutput}  of Algorithm~\ref{algm:OciorACOOL}, for some $\Vr^\star\in \{1,0\}$. 
\end{lemma}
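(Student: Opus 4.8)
The plan is to reduce the claim to showing that, under the hypothesis that every honest node receives its input, \emph{every} honest node eventually hands an input value to the invoked $\ABBA$ instance; once that is in place, the \emph{Termination} and \emph{Consistency} properties of $\ABBA$, together with the $\READY$-quorum logic in Lines~\ref{line:OciorACOOLRBAbegin}--\ref{line:OciorACOOLph3} of Algorithm~\ref{algm:OciorACOOL}, do the rest. Since every honest node passes its initial message into $\COOLBUA[1]$ immediately upon receipt (see Algorithm~\ref{algm:OciorACOOL}), Lemma~\ref{lm:OciorACOOBUAUniqueSet} applies and yields $\gnNozero^{[2]}\le 1$: the only honest nodes that may fail to eventually hold $\RyNew_i^{[2]}=0$ are those in $\AsetNOzero_1^{[2]}$ (those whose initial message is $\Meg_1$), while every other honest node eventually sets $\RyNew_i^{[2]}=0$. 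I then split on the two cases of Section~\ref{sec:OverviewOciorACOOL}: Case~I with $|\AsetNOzero_1^{[2]}|\ge n-|\Fc|-t$, and Case~II with $|\AsetNOzero_1^{[2]}|< n-|\Fc|-t$.

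In Case~I, Lemma~\ref{lm:OciorACOOLCaseISameInputBUA2} guarantees that every honest node eventually feeds the same value $\Meg_1$ into $\COOLBUA[2]$. By the \emph{Validity} property of the $\BUA$ protocol $\COOLBUA[2]$, every honest node then eventually delivers $[\MeSecond^{(i)}=\Meg_1,\RySecond_i^{[2]}=1,\VrSecond_i=1]$ from $\COOLBUA[2]$ (Lines~\ref{line:OciorACOOLReadyOneCondition}--\ref{line:OciorACOOLReadyOne} of Algorithm~\ref{algm:COOLBUA}), hence triggers Lines~\ref{line:OciorACOOLABBAinputCond}--\ref{line:OciorACOOLABBAinput} and passes a value into $\ABBA$ (the value $1$, unless it had already passed $0$ via the $\COOLBUA[1]$ path). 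In Case~II, the number of honest nodes whose final value is $\RyNew_i^{[2]}=0$ equals $(n-|\Fc|)-|\AsetNOzero_1^{[2]}|>t$, so at least $t+1$ honest nodes eventually set $\RyNew_i^{[2]}=0$ and send $\ltuple\SItwo,\IDCOOL,0\rtuple$ in $\COOLBUA[1]$ (Line~\ref{line:ACOOLph2CSI0} of Algorithm~\ref{algm:COOLBUA}); these messages are eventually received everywhere, so $|\SsNew_0^{[2]}|\ge t+1$ eventually holds at every honest node, each honest node delivers $[\MeNew^{(i)},\RyNew_i^{[2]},\VrNew_i=0]$ from $\COOLBUA[1]$ (Lines~\ref{line:OciorACOOLReadyZeroCondition}--\ref{line:OciorACOOLReadyZero} of Algorithm~\ref{algm:COOLBUA}), and hence each honest node that has not yet provided an $\ABBA$ input triggers Lines~\ref{line:OciorACOOLABBAinputBUA1Cond}--\ref{line:OciorACOOLABBAinput1} and passes $0$ into $\ABBA$. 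Thus in both cases every honest node eventually supplies an input to $\ABBA$.

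It then follows from the \emph{Termination} property of $\ABBA$ that $\ABBA$ eventually outputs a value $\ABBAOutput$ at every honest node, and from its \emph{Consistency} property that this common value is some fixed $\Vr^\star\in\{0,1\}$. Every honest node therefore sends $\ltuple\READY,\IDCOOL,\Vr^\star\rtuple$ in Line~\ref{line:OciorACOOLRBAReadyABBA} of Algorithm~\ref{algm:OciorACOOL}. A short induction shows that an honest node never sends $\ltuple\READY,\IDCOOL,v\rtuple$ for $v\ne\Vr^\star$: a directly sent $\READY$ carries the $\ABBA$ output, and the amplification step of Line~\ref{line:OciorACOOLRBAsendReady} fires only on seeing $t+1$ matching $\READY$ messages from distinct nodes, at least one from an honest node, hence carrying $\Vr^\star$. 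Consequently every honest node eventually receives at least $n-t\ge 2t+1$ messages $\ltuple\READY,\IDCOOL,\Vr^\star\rtuple$ from distinct nodes (Line~\ref{line:OciorACOOLVrOutputCond}) and sets $\VrOutput=\Vr^\star$ in Line~\ref{line:OciorACOOLVrOutput}; in particular at least one honest node does, which is what the lemma asserts.

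I expect the main obstacle to be the bookkeeping in Case~II together with the interplay of the two $\ABBA$-input triggers: one must argue, using $\gnNozero^{[2]}\le 1$ from Lemma~\ref{lm:OciorACOOBUAUniqueSet}, that precisely the honest nodes outside $\AsetNOzero_1^{[2]}$ are the ones that eventually hold $\RyNew_i^{[2]}=0$ (so that there are strictly more than $t$ of them, even though some honest nodes may never update their success indicators), and then verify that the ``$\ABBA$ has no input yet'' guards in Lines~\ref{line:OciorACOOLABBAinputCond} and~\ref{line:OciorACOOLABBAinputBUA1Cond} never leave an honest node permanently without an $\ABBA$ input, i.e., whenever a node is blocked on one trigger the other trigger (or a prior delivery) has already fired for it. Everything else is a direct appeal to the black-box properties of $\ABBA$ and to the quorum arithmetic under $n\ge 3t+1$.
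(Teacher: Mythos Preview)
Your approach is essentially the paper's: the same case split on $|\AsetNOzero_1^{[2]}|$, the same appeal to Lemma~\ref{lm:OciorACOOBUAUniqueSet} and Lemma~\ref{lm:OciorACOOLCaseISameInputBUA2}, and the same endgame through $\ABBA$ termination and the $\READY$ quorum. There is one genuine gap, however, and it is precisely the reason the paper wraps the whole argument in a proof by contradiction.

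Both Lemma~\ref{lm:OciorACOOBUAUniqueSet} and Lemma~\ref{lm:OciorACOOLCaseISameInputBUA2} carry the hypothesis that all honest nodes ``keep running'' the relevant protocols. You invoke these lemmas directly, justifying only the ``input their initial messages'' part, but in $\OciorACOOL$ an honest node can terminate (Lines~\ref{line:OciorACOOLph3}, \ref{line:OciorACOOLph3SIPhtwoOutput}, \ref{line:OciorACOOLph3Output2}) and thereby stop participating in $\COOLBUA[1]$ and $\COOLBUA[2]$. Without ruling this out, the conclusion $\gnNozero^{[2]}\le 1$ and the very definition of the case split (which rests on the \emph{final} values of $\RyNew_i^{[2]}$) are not secured; and in Case~I, Lemma~\ref{lm:OciorACOOLCaseISameInputBUA2} cannot be applied. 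The paper fixes this by assuming, for contradiction, that no honest node ever sets $\VrOutput$; since termination only occurs after Line~\ref{line:OciorACOOLVrOutput}, this guarantees no honest node terminates, so the ``keep running'' hypothesis holds throughout. Your direct argument can be repaired with one sentence: either some honest node has already set $\VrOutput$ (and you are done), or no honest node has terminated yet, whence both lemmas apply and your argument goes through verbatim. But as written the invocation of those lemmas is not justified.
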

\begin{proof}
We prove this result by contradiction. 
Assume that no honest node sets the value of $\VrOutput$ in Line~\ref{line:OciorACOOLVrOutput} of Algorithm~\ref{algm:OciorACOOL}. 
Under this assumption, no honest node would terminate in $\OciorACOOL$. 
Note that every honest node must set $\VrOutput = \Vr^\star$ before termination (see Lines~\ref{line:OciorACOOLVrOutput} and \ref{line:OciorACOOLph3} of Algorithm~\ref{algm:OciorACOOL}).

From Lemma~\ref{lm:OciorACOOBUAUniqueSet},  if all honest nodes eventually input their initial messages and keep running the    $\COOLBUA[1]$ protocol,  then  in $\COOLBUA[1]$  it holds true that $\gnNozero^{[2]} \leq    1$, i.e., every honest node~$i$ eventually sets 
\begin{align}
  \Ry_i^{[2]} = 0,  \quad \forall  i \in \Ac_{\Lin},  \forall \Lin \in [2, \gn]   \label{eq:Bzerory}  
 \end{align}
  in $\COOLBUA[1]$. 
Here, $\AsetNOzero_{1}^{[2]} \defeqnew   \{  i \in[n]\setminus \Fc \mid     \Me_i =  \Meg_{1},   \  \Ry_i^{[2]} \neq 0 \}$    (see \eqref{eq:ACOOLANoZero}).   
 From the  result in \eqref{eq:Bzerory}, and based on our notation $\Bc^{[2]} \defeqnew  \{  i \in[n]\setminus \Fc \mid  \Ry_i^{[2]} =0   \}$  (see \eqref{eq:ACOOLBdef01}), it is true that   
\begin{align}
|\AsetNOzero_{1}^{[2]}|  +  |\Bc^{[2]}| =n-|\Fc|    \label{eq:Sizeidentity}  
 \end{align}
and that 
\begin{align}
 \bigcup_{\Lin \in [2, \gn]  } \Ac_{\Lin}  \subseteq \Bc^{[2]}    \non
 \end{align}
 for $\COOLBUA[1]$. 
  In the following, we   consider each of the two cases for $\COOLBUA[1]$: 
 \begin{align}
\text{Case~I:} \quad  |\AsetNOzero_{1}^{[2]}| \geq     n- |\Fc| -t ,     \label{eq:caseI} 
 \end{align}
\begin{align}
\text{Case~II:} \quad  |\AsetNOzero_{1}^{[2]}| <    n- |\Fc| -t  .   \label{eq:caseII}  
 \end{align}
 
 \vspace{8pt}
 \noindent \emph{$\bullet$ \underline{Analysis for Case~I:}}  \\
 Let us first consider Case~I in \eqref{eq:caseI} with  $|\AsetNOzero_{1}^{[2]}| \geq     n- |\Fc| -t$ for $\COOLBUA[1]$. 
 In this case,  from  Lemma~\ref{lm:OciorACOOLCaseISameInputBUA2},   if all honest nodes eventually input their initial messages and keep running the $\OciorACOOL$, $\COOLBUA[1]$ and $\COOLBUA[2]$   protocols, then   every honest node~$i$ eventually inputs the same  message $\MeSecond_i=\Meg_{1}$ to $\COOLBUA[2]$, where $\Meg_{1}$ is the initial message of all honest nodes within $\AsetNOzero_{1}^{[2]}$.  
 
If all honest nodes  keep running the  $\OciorACOOL$, $\COOLBUA[1]$ and $\COOLBUA[2]$   protocols,  and  given the conclusion that  every honest node~$i$ eventually inputs the same  message $\MeSecond_i=\Meg_{1}$ to $\COOLBUA[2]$, then  every honest node~$i$ eventually sets $\RySecond_i^{[1]} =1$ (see Lines~\ref{line:COOLBUASISI1oneCond} and \ref{line:COOLBUASISI1oneSend} of Algorithm~\ref{algm:COOLBUA}),
every honest node~$i$ eventually sets $\RySecond_i^{[2]} =1$ (see Lines~\ref{line:ACOOLph2ACond} and \ref{line:ACOOLph2ASI1} of Algorithm~\ref{algm:COOLBUA}),
 and every honest node~$i$ eventually sets  $\VrSecond_i = 1$ (see Lines~\ref{line:OciorACOOLReadyOneCondition} and \ref{line:OciorACOOLReadyOne} of Algorithm~\ref{algm:COOLBUA}).
  In this case,  every honest node~$i$ eventually   delivers $[\MeSecond^{(i)}, \RySecond_{i}^{[2]}, \VrSecond_i = 1]$ from $\COOLBUA[2]$. 
  
Consequently, in this case, every honest node~$i$ eventually passes $1$ into $\ABBA$ as its input message, if node~$i$ has not already provided an input to $\ABBA$ (see Lines~\ref{line:OciorACOOLABBAinputCond} and~\ref{line:OciorACOOLABBAinput} of Algorithm~\ref{algm:OciorACOOL}). 
Hence, $\ABBA$ eventually outputs a value $\ABBAOutput \in \{1, 0\}$ at each node, due to the \emph{Termination} property of $\ABBA$. 
Then, every honest node eventually sends $\ltuple \READY, \IDCOOL, \ABBAOutput \rtuple$ to all nodes in Line~\ref{line:OciorACOOLRBAReadyABBA} of Algorithm~\ref{algm:OciorACOOL}, and every honest node eventually sets $\VrOutput = \Vr^\star$ in Line~\ref{line:OciorACOOLVrOutput} of Algorithm~\ref{algm:OciorACOOL}. 
This conclusion contradicts the original assumption that no honest node sets the value of $\VrOutput$ in Line~\ref{line:OciorACOOLVrOutput}. 
Thus, the original assumption is false, and the statement of this lemma holds for Case~I.

 \vspace{8pt}
 
  \noindent \emph{$\bullet$ \underline{Analysis for Case~II:}}  \\
  Let us now consider Case~II in \eqref{eq:caseII}. In this case, we have
\begin{align}
  |\Bc^{[2]}| &= n - |\Fc| - |\AsetNOzero_{1}^{[2]}|  \label{eq:Sizeidentity1} \\
              &> n - |\Fc| - (n - |\Fc| - t)  \label{eq:Sizeidentity2} \\
              &= t   \label{eq:Sizeidentity3}
\end{align}
where \eqref{eq:Sizeidentity1} follows from \eqref{eq:Sizeidentity}, while \eqref{eq:Sizeidentity2} uses the assumption in \eqref{eq:caseII} for this Case~II. 
The result in \eqref{eq:Sizeidentity3} implies that $|\Bc^{[2]}| \geq t + 1$ in $\COOLBUA[1]$. 
In this case, the condition $|\Ss_{0}^{[2]}| \geq t + 1$ in Line~\ref{line:OciorACOOLReadyZeroCondition} of Algorithm~\ref{algm:COOLBUA} is eventually satisfied at all honest nodes, and each honest node~$i$ eventually sets $\VrNew_i = 0$ and delivers $[\MeNew^{(i)}, \RyNew_{i}^{[2]}, \VrNew_i = 0]$ from $\COOLBUA[1]$. 
Consequently, each honest node~$i$ eventually passes $0$ into $\ABBA$ as its input message, if node~$i$ has not already provided an input to $\ABBA$ (see Lines~\ref{line:OciorACOOLABBAinputBUA1Cond} and~\ref{line:OciorACOOLABBAinput1} of Algorithm~\ref{algm:OciorACOOL}). 
In this case, every honest node eventually provides an input to $\ABBA$, and hence $\ABBA$ eventually outputs a value $\ABBAOutput \in \{1, 0\}$ at each node, due to the \emph{Termination} property of $\ABBA$. 
Then, every honest node eventually sends $\ltuple \READY, \IDCOOL, \ABBAOutput \rtuple$ to all nodes in Line~\ref{line:OciorACOOLRBAReadyABBA} of Algorithm~\ref{algm:OciorACOOL}, and every honest node eventually sets $\VrOutput = \Vr^\star$ in Line~\ref{line:OciorACOOLVrOutput} of Algorithm~\ref{algm:OciorACOOL}. 
This conclusion contradicts the original assumption that no honest node sets the value of $\VrOutput$ in Line~\ref{line:OciorACOOLVrOutput}. 
Thus, the original assumption is false, and the statement of this lemma holds for Case~II.
\end{proof}

\begin{lemma}     \label{lm:OciorACOOLCaseISameInputBUA2}
Consider the case where $|\AsetNOzero_{1}^{[2]}| \geq n - |\Fc| - t$ for $\COOLBUA[1]$, and assume that all honest nodes eventually input their initial messages and keep running the $\OciorACOOL$, $\COOLBUA[1]$ and $\COOLBUA[2]$   protocols. 
Under this assumption, every honest node~$i$ eventually inputs the same  message $\MeSecond_i=\Meg_{1}$ to $\COOLBUA[2]$, where $\Meg_{1}$ is the initial message of all honest nodes within $\AsetNOzero_{1}^{[2]}$.  
\end{lemma}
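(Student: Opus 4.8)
The plan is to argue along the lines of the proof of Lemma~\ref{lm:OciorACOOLSameinputOECdec}, but in the more general Case~I setting, by tracking which symbols can enter $\OECCorrectSymbolSetNew$ at an honest node and which value each honest node eventually feeds into $\COOLBUA[2]$. First I would pin down the arithmetic of Case~I: from $|\AsetNOzero_{1}^{[2]}|\geq n-|\Fc|-t$ and $n\geq 3t+1$ we get $|\AsetNOzero_{1}^{[2]}|\geq n-2t\geq t+1$, and, invoking $\gnNozero^{[2]}\leq 1$ from Lemma~\ref{lm:OciorACOOBUAUniqueSet} together with \eqref{eq:Sizeidentity}, the honest nodes split as the disjoint union $\AsetNOzero_{1}^{[2]}\cup\Bc^{[2]}$ with $|\AsetNOzero_{1}^{[2]}|+|\Bc^{[2]}|=n-|\Fc|$, so $|\Bc^{[2]}|\leq t$, and every honest node outside $\Bc^{[2]}$ has initial value $\Meg_1$. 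An honest node $i$ sets $\MeSecond_i$ only at Line~\ref{line:BUA2Input1Setw} ($\MeSecond_i\gets\Me_i$, triggered by $\RyNew_i^{[2]}=1$) or at Line~\ref{line:BUA2OECend} (the online-error-correction decode-and-re-encode check of Lines~\ref{line:BUA2OECbegin}--\ref{line:BUA2OECend}). The first case is immediate, since $\RyNew_i^{[2]}=1$ forces $i\in\AsetNOzero_{1}^{[2]}$, hence $\Me_i=\Meg_1$; so the real work is the OEC path.

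The key step is a symbol-consistency claim: in Case~I, every symbol that an honest node~$j$ ever places into any honest node's $\OECCorrectSymbolSetNew$ equals $y_j(\Meg_1)$, the $j$-th coded symbol of $\Meg_1$. There are two insertion routes. Via Line~\ref{line:BUA2InputYiMajoritySi1}, with $j\in\SsNew_1^{[1]}$ at the receiver, honest $j$ has $\RyNew_j^{[1]}=1$, so its own link set satisfies $|\Lkset_1^{(j)}|\geq n-t$, hence $|\LksetHonest_1^{(j)}|\geq n-t-|\Fc|$; intersecting $\LksetHonest_1^{(j)}$ with $\AsetNOzero_{1}^{[2]}$ inside the $n-|\Fc|$ honest nodes gives $|\LksetHonest_1^{(j)}\cap\AsetNOzero_{1}^{[2]}|\geq n-2t-|\Fc|\geq 1$, so some $l\in\AsetNOzero_{1}^{[2]}$ link-matched $j$ at Line~\ref{line:ACOOLph1MatchCond}, and since $\Me_l=\Meg_1$ this forces $y_j(\Me_j)=y_j(\Meg_1)$, i.e.\ the inserted symbol $\ysymbolNew_j^{(j)}=y_j(\Me_j)$ is exactly $y_j(\Meg_1)$. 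Via Line~\ref{line:BUA2InputYiMajorityyj}, $j$ sent $\ltuple\NEWSYMBOL,\IDCOOL,\ystrongmajority_j\rtuple$ with $\ystrongmajority_j=y^{\star}$ chosen at Line~\ref{line:BUA2InputYiMajoritySend}, where the triggering condition demands that, in $j$'s local bookkeeping, $|\ySymbolSet[y^{\star}]\cup\SsNew_0^{[2]}|\geq n-t$; but an honest node in $\ySymbolSet[y^{\star}]$ with $y^{\star}\neq y_j(\Meg_1)$ must have initial value $\neq\Meg_1$, hence lies in $\Bc^{[2]}$, and every honest node in $\SsNew_0^{[2]}$ lies in $\Bc^{[2]}$, so $y^{\star}\neq y_j(\Meg_1)$ would force $|\ySymbolSet[y^{\star}]\cup\SsNew_0^{[2]}|\leq|\Bc^{[2]}|+|\Fc|\leq 2t<n-t$, a contradiction; thus $y^{\star}=y_j(\Meg_1)$. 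Consequently, at any honest node, every honest-contributed entry of $\OECCorrectSymbolSetNew$ is a genuine $\Meg_1$-symbol, and at most $|\Fc|\leq t$ of the entries can be erroneous.

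With that in hand I would close liveness and safety of the OEC path. For liveness: every honest node~$j$ either reaches $\RyNew_j^{[1]}=1$ and then broadcasts its Phase-$1$ symbols (Line~\ref{line:ACOOLECCEncindicator}) and its $\SIone$ message (Line~\ref{line:COOLBUASISI1oneSend}), so Line~\ref{line:BUA2InputYiMajoritySi1} delivers $y_j(\Meg_1)$ into every honest node's $\OECCorrectSymbolSetNew$; or it keeps $\RyNew_j^{[1]}\neq 1$, and then the condition at Line~\ref{line:BUA2InputYiMajorityCond} is eventually met with $y^{\star}=y_j(\Meg_1)$, because $\ySymbolSet[y_j(\Meg_1)]\supseteq\AsetNOzero_{1}^{[2]}$ (size $\geq n-2t$) and $\SsNew_0^{[2]}$ eventually contains all of $\Bc^{[2]}$, whose union exhausts the $n-|\Fc|\geq n-t$ honest nodes, so Line~\ref{line:BUA2InputYiMajorityyj} again delivers $y_j(\Meg_1)$ everywhere. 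Hence every honest node eventually holds $\geq n-|\Fc|$ correct $\Meg_1$-symbols in $\OECCorrectSymbolSetNew$ and at most $|\Fc|\leq t$ erroneous ones; since $2|\Fc|+k\leq n$ with $k=t/3$, the $(n,k)$ decoder at Line~\ref{line:BUA1OECdec} recovers $\Meg_1$, its re-encoding matches $\geq n-|\Fc|\geq k+t$ entries, and Line~\ref{line:BUA2OECend} fires with $\MeSecond_i=\Meg_1$ (unless $\MeSecond_i$ is already set to $\Meg_1$ via Line~\ref{line:BUA2Input1Setw}). For safety: whenever Line~\ref{line:BUA2OECend} fires on some $\MVBAOutputMsg$, its re-encoding matches $\geq k+t$ entries of $\OECCorrectSymbolSetNew$, at most $|\Fc|\leq t$ of which are erroneous, so $\geq k$ matching entries are genuine $\Meg_1$-symbols; since any $k$ coded symbols determine the codeword, $\MVBAOutputMsg=\Meg_1$. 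Combining both paths, every honest node eventually passes $\MeSecond_i=\Meg_1$ into $\COOLBUA[2]$ at Line~\ref{line:BUA2Input1} or Line~\ref{line:BUA2Input2}.

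I expect the main obstacle to be the symbol-consistency claim, and within it the $\SsNew_1^{[1]}$ route: an honest node may perfectly well have $\RyNew_j^{[1]}=1$ while holding a \emph{minority} initial message, so one cannot argue $\Me_j=\Meg_1$; instead one must observe that the \emph{particular} symbol it reveals---its own $j$-th symbol---is pinned to $y_j(\Meg_1)$ by any honest member of $\AsetNOzero_{1}^{[2]}$ that link-matched it, which rests on the counting $|\LksetHonest_1^{(j)}\cap\AsetNOzero_{1}^{[2]}|\geq n-2t-|\Fc|\geq 1$ going through exactly under $n\geq 3t+1$ and $|\Fc|\leq t$. A secondary thing to watch is that Lemma~\ref{lm:OciorACOOBUAUniqueSet} (hence $\gnNozero^{[2]}\leq 1$ and \eqref{eq:Sizeidentity}) is available only because the standing hypothesis of this lemma guarantees that all honest nodes eventually input their initial messages and keep running $\COOLBUA[1]$ and $\COOLBUA[2]$.
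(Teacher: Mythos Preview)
Your proposal is correct and follows essentially the same approach as the paper: establish that every honestly-contributed entry of $\OECCorrectSymbolSetNew$ equals the corresponding $\Meg_1$-symbol, then conclude liveness and safety of the OEC step. Your two ``routes'' are exactly the content of the paper's Lemmas~\ref{lm:OciorACOOLCaseIUnique2} and~\ref{lm:OciorACOOLCaseIUnique} (proved there via the intersection $\AsetNOzero_{1}^{[2]}\cap\Lkset_1\neq\emptyset$ and $\AsetNOzero_{1}^{[2]}\cap\ySymbolSet[y^{\star}]\neq\emptyset$, whereas you phrase Route~2 contrapositively via $|\Bc^{[2]}|+|\Fc|\leq 2t$), and your case split on $\RyNew_j^{[1]}=1$ versus $\RyNew_j^{[1]}\neq 1$ matches the paper's Case~A/Case~B decomposition.
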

\begin{proof}
Let us assume $|\AsetNOzero_{1}^{[2]}| \geq n - |\Fc| - t$ for $\COOLBUA[1]$, and assume that all honest nodes eventually input their initial messages and keep running the $\OciorACOOL$, $\COOLBUA[1]$ and $\COOLBUA[2]$   protocols. 
Under these assumptions, we    consider  each of the following   cases for every honest node~$i$: 
 \begin{align}
\text{Case~A:} \quad & \text{Node~$i$ sets $\RyNew_i^{[1]} \neq 1$},     \label{eq:caseAnew}  \\ 
\text{Case~B:} \quad & \text{Node~$i$ sets $\RyNew_i^{[1]} =1$}  \label{eq:caseBnew}  
 \end{align}
 where we consider the final value of $\RyNew_i^{[1]}$ updated in  $\COOLBUA[1]$.  
 Under the assumptions considered here, we argue in the following that, for each case, the index of every honest node~$i$ is eventually included in~$\OECCorrectSymbolSetNew$ at 
each honest node, as in Lines~\ref{line:BUA2InputYiMajorityyj} or~\ref{line:BUA2InputYiMajoritySi1} of Algorithm~\ref{algm:OciorACOOL}, such that 
\begin{align}
\OECCorrectSymbolSetNew[i] = y_i(\Meg_{1}) \quad \forall i \in [n] \setminus \Fc .  \label{eq:OECsymboli}  
\end{align}
Here, $y_i(\Meg_{1})$ denotes the $i$-th coded symbol encoded from the message~$\Meg_{1}$, and~$\Meg_{1}$ is the initial message of all honest nodes within~$\AsetNOzero_{1}^{[2]}$.   
With the conclusion in~\eqref{eq:OECsymboli}, it follows that every honest node eventually decodes the message as~$\Meg_{1}$ from the online error correction procedure in Line~\ref{line:BUA1OECdec} of Algorithm~\ref{algm:OciorACOOL}, and then sets the input of~$\COOLBUA[2]$ as  
\begin{align}
\MeSecond_i = \Meg_{1}   \quad \text{(in Line~\ref{line:BUA2OECend} of Algorithm~\ref{algm:OciorACOOL})}.   \label{eq:BUA2input1}  
\end{align} 
If an honest node~$i$ sets $\RyNew_i^{[2]} = 1$ in~$\COOLBUA[1]$ and $\COOLBUA[2]$ has not yet received an input message, then node~$i$ sets $\MeSecond_i = \Me_{i}$ as in Line~\ref{line:BUA2Input1Setw} of Algorithm~\ref{algm:OciorACOOL}.  
Lemma~\ref{lm:OciorACOOBUAUniqueSet} states that if all honest nodes eventually input their initial messages and keep running the~$\COOLBUA[1]$ protocol, then in~$\COOLBUA[1]$ it holds that~$\gnNozero^{[2]} \leq 1$.   
From Lemma~\ref{lm:OciorACOOBUAUniqueSet}, if an honest node~$i$ sets $\RyNew_i^{[2]} = 1$ in~$\COOLBUA[1]$ and $\COOLBUA[2]$ has not yet received an input message, then it follows that $i \in \AsetNOzero_{1}^{[2]}$ and that node~$i$ sets 
\begin{align}
\MeSecond_i =  \Meg_{1}    \quad \text{(in Line~\ref{line:BUA2Input1Setw} of Algorithm~\ref{algm:OciorACOOL})}.   \label{eq:BUA2input2}  
\end{align}

With the results in~\eqref{eq:BUA2input1} and~\eqref{eq:BUA2input2},  it follows that every honest node~$i$ eventually inputs the same message~$\MeSecond_i = \Meg_{1}$ to~$\COOLBUA[2]$ (see Lines~\ref{line:BUA2Input1} and~\ref{line:BUA2Input2} of Algorithm~\ref{algm:OciorACOOL}). 
What remains to be proven now is the conclusion in~\eqref{eq:OECsymboli}.

 \vspace{8pt}

 \noindent \emph{$\bullet$ \underline{Proof of \eqref{eq:OECsymboli}  for Case~A:}}  \\
  We first consider   Case~A, where an honest~$i$ never sets $\RyNew_i^{[1]} =1$.  
 Lemma~\ref{lm:OciorACOOBUAUniqueSet} shows that if all honest nodes eventually input their initial messages and keep running the~$\COOLBUA[1]$ protocol, then it holds that~$\gnNozero^{[2]} \leq 1$ for~$\COOLBUA[1]$.  
Recall that~$\AsetNOzero_{1}^{[2]} \defeqnew \{\, j \in [n] \setminus \Fc \mid \Me_j = \Meg_{1},\ \Ry_j^{[2]} \neq 0 \,\}$ for some~$\Meg_{1}$.  
From Lemma~\ref{lm:OciorACOOBUAUniqueSet}, it follows that the index of each honest node~$j \in [n] \setminus (\Fc \cup \AsetNOzero_{1}^{[2]})$ is eventually included in the set~$\SsNew_0^{[2]}$ at node~$i$, i.e.,
\begin{align}
[n] \setminus (\Fc \cup \AsetNOzero_{1}^{[2]}) \subseteq \SsNew_0^{[2]}. \label{eq:CaseAproof99}
\end{align}
Furthermore, the index of each honest node~$j' \in \AsetNOzero_{1}^{[2]}$ is eventually included in the set~$\ySymbolSet[y_i(\Meg_{1})]$ at node~$i$ (see Lines~\ref{line:BUA2InputSymbolDeliverCond} and~\ref{line:BUA2InputSymbolDeliver} of Algorithm~\ref{algm:OciorACOOL}), i.e.,
\begin{align}
\AsetNOzero_{1}^{[2]} \subseteq \ySymbolSet[y_i(\Meg_{1})]. \label{eq:CaseAproof88}
\end{align}
From~\eqref{eq:CaseAproof88}, and given the condition~$|\AsetNOzero_{1}^{[2]}| \geq n - |\Fc| - t$ considered here for~$\COOLBUA[1]$, the following bound eventually holds:
\begin{align}
|\ySymbolSet[y_i(\Meg_{1})]| \geq |\AsetNOzero_{1}^{[2]}| \geq n - |\Fc| - t \geq n - 2t. \label{eq:CaseAproof77}
\end{align}
Moreover, from~\eqref{eq:CaseAproof99} and~\eqref{eq:CaseAproof88}, we have that the following bound eventually holds:
\begin{align}
|\ySymbolSet[y_i(\Meg_{1})] \cup \SsNew_0^{[2]}|
\geq \big| ([n] \setminus (\Fc \cup \AsetNOzero_{1}^{[2]})) \cup \AsetNOzero_{1}^{[2]} \big| 
= |[n] \setminus \Fc| = n - |\Fc| \geq n - t. \label{eq:CaseAproof66}
\end{align}
Therefore, from~\eqref{eq:CaseAproof77} and~\eqref{eq:CaseAproof66}, if~$\RyNew_i^{[1]} \neq 1$, then the conditions 
\[
|\ySymbolSet[y_i(\Meg_{1})] \cup \SsNew_0^{[2]}| \geq n - t, 
\quad 
|\ySymbolSet[y_i(\Meg_{1})]| \geq n - 2t,
\quad 
\text{and} \quad 
\RyNew_i^{[1]} \neq 1
\]
in Line~\ref{line:BUA2InputYiMajorityCond} of Algorithm~\ref{algm:OciorACOOL} are all eventually satisfied at the honest node~$i$.  
In this case, the honest node~$i$ eventually sets~$\ystrongmajority_i = y_i(\Meg_{1})$ and then sends~$\ltuple\NEWSYMBOL, \IDCOOL, \ystrongmajority_i\rtuple$ to all nodes (see Lines~\ref{line:BUA2InputYiMajorityCond} and~\ref{line:BUA2InputYiMajoritySend} of Algorithm~\ref{algm:OciorACOOL}).  
Therefore, in this case, the index of every honest node~$i$ that never sets $\RyNew_i^{[1]} =1$  is eventually included in~$\OECCorrectSymbolSetNew$ at each honest node, as in Line~\ref{line:BUA2InputYiMajorityyj} of Algorithm~\ref{algm:OciorACOOL}, such that 
\[
\OECCorrectSymbolSetNew[i] = y_i(\Meg_{1}), \quad \text{for} \  i \in [n] \setminus \Fc. 
\]

\vspace{8pt}
 \noindent \emph{$\bullet$ \underline{Proof of \eqref{eq:OECsymboli}  for Case~B:}}  \\
 We now consider Case~B, where an honest node~$i$ eventually sets $\RyNew_i^{[1]} = 1$.  
We assume that $|\AsetNOzero_{1}^{[2]}| \geq n - |\Fc| - t$ for~$\COOLBUA[1]$, and that all honest nodes eventually input their initial messages and keep running the~$\OciorACOOL$ and~$\COOLBUA[1]$ protocols.   
Under these assumptions, from Lemma~\ref{lm:OciorACOOLCaseIUnique2}, if an honest node~$i$ sets $\RyNew_i^{[1]} = 1$ and sends $\ltuple \SYMBOL, 1, (*, \ysymbolNew_i^{(i)}) \rtuple$ in~$\COOLBUA[1]$, then
\[
\ysymbolNew_i^{(i)} = y_i(\Meg_{1}).
\]
Here,~$\Meg_{1}$ is the initial message of all honest nodes within~$\AsetNOzero_{1}^{[2]}$.  
Note that under the assumptions considered here, every honest node~$i$ eventually sends $\ltuple \SYMBOL, 1, (*, \ysymbolNew_i^{(i)}) \rtuple$ in~$\COOLBUA[1]$.   
Therefore, under the same assumptions, every honest node~$i$ that sets $\RyNew_i^{[1]} = 1$ is eventually included in~$\OECCorrectSymbolSetNew$ at each honest node, as in Line~\ref{line:BUA2InputYiMajoritySi1} of Algorithm~\ref{algm:OciorACOOL} (or in Line~\ref{line:BUA2InputYiMajorityyj}, as in Case~A), such that
\[
\OECCorrectSymbolSetNew[i] = y_i(\Meg_{1}), \quad \text{for} \  i \in [n] \setminus \Fc. 
\]
This completes the proof.
\end{proof}

\begin{lemma}     \label{lm:OciorACOOLCaseIUnique}
Consider the case where $|\AsetNOzero_{1}^{[2]}| \geq n - |\Fc| - t$ for $\COOLBUA[1]$, and assume that all honest nodes eventually input their initial messages and keep running the $\OciorACOOL$ and $\COOLBUA[1]$ protocols. 
Under this assumption, if an honest node~$i$ sets $\ystrongmajority_i \gets y^{\star}$ in Line~\ref{line:BUA2InputYiMajoritySend} of Algorithm~\ref{algm:OciorACOOL}, then $y^{\star} = y_i(\Meg_{1})$.  
Here, $y_i(\Meg_{1})$   is the $i$-th coded symbol encoded from the message $\Meg_{1}$, while  $\Meg_{1}$ is the initial message of all honest nodes within $\AsetNOzero_{1}^{[2]}$.  
\end{lemma}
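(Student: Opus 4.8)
The plan is to turn the guard of Line~\ref{line:BUA2InputYiMajorityCond} of Algorithm~\ref{algm:OciorACOOL} into a counting argument. When an honest node~$i$ executes Line~\ref{line:BUA2InputYiMajoritySend} and sets $\ystrongmajority_i\gets y^\star$, that guard must hold; I will only need its first conjunct, $|\ySymbolSet[y^\star]\cup\SsNew_0^{[2]}|\ge n-t$. Recall that $j\in\ySymbolSet[y^\star]$ means Node~$i$ has delivered from $\COOLBUA[1]$ a message $\ltuple\SYMBOL,1,(\ysymbolNew_i^{(j)},*)\rtuple$ with $\ysymbolNew_i^{(j)}=y^\star$ (Lines~\ref{line:BUA2InputSymbolDeliverCond}--\ref{line:BUA2InputSymbolDeliver}), and $j\in\SsNew_0^{[2]}$ means Node~$i$ has received an $\SItwo$ message from~$j$ carrying $\RyNew_j^{[2]}\ne 1$ inside $\COOLBUA[1]$ (Line~\ref{line:ACOOLph2SS01Cond} of Algorithm~\ref{algm:COOLBUA}).

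The crux is to read off the structure of the honest members of these two sets. An honest node~$j$ broadcasts in $\COOLBUA[1]$ only the pair $(y_i^{(j)},y_j^{(j)})$ obtained by encoding its own initial message $\Me_j$ (Lines~\ref{line:ACOOLECCEnc}--\ref{line:ACOOLECCEncindicator} of Algorithm~\ref{algm:COOLBUA}); hence, if an honest~$j$ lies in $\ySymbolSet[y^\star]$, then $y^\star=y_i^{(j)}=y_i(\Me_j)$. An honest node~$j$ sends an $\SItwo$ message only after irrevocably fixing $\RyNew_j^{[2]}\in\{0,1\}$ (Lines~\ref{line:ACOOLph2CSI0} and~\ref{line:ACOOLph2ASI1} of Algorithm~\ref{algm:COOLBUA}, where the write is guarded by $\RyNew_j^{[2]}=\defaultvalue$); hence, if an honest~$j$ lies in $\SsNew_0^{[2]}$, its final success indicator is $\RyNew_j^{[2]}=0$, so $j\notin\AsetNOzero_1^{[2]}$ by the definition in~\eqref{eq:ACOOLANoZero}.

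I would then argue by contradiction: suppose $y^\star\ne y_i(\Meg_1)$. Every honest $j\in\ySymbolSet[y^\star]$ then has $y_i(\Me_j)=y^\star\ne y_i(\Meg_1)$, so $\Me_j\ne\Meg_1$ and $j\notin\Ac_1\supseteq\AsetNOzero_1^{[2]}$; combined with the previous paragraph, every honest node in $\ySymbolSet[y^\star]\cup\SsNew_0^{[2]}$ lies outside $\AsetNOzero_1^{[2]}$. Now count honest nodes: the union $\ySymbolSet[y^\star]\cup\SsNew_0^{[2]}$ has at least $n-t-|\Fc|$ honest members, whereas the Case~I hypothesis $|\AsetNOzero_1^{[2]}|\ge n-|\Fc|-t$ leaves at most $(n-|\Fc|)-(n-|\Fc|-t)=t$ honest nodes outside $\AsetNOzero_1^{[2]}$. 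Thus $n-t-|\Fc|\le t$, i.e. $n\le 2t+|\Fc|\le 3t$, contradicting $n\ge 3t+1$. Therefore $y^\star=y_i(\Meg_1)$.

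I expect the only delicate point to be the bookkeeping between \emph{current} and \emph{final} values of the success indicators: $\AsetNOzero_1^{[2]}$ is defined in~\eqref{eq:ACOOLANoZero} via the final $\RyNew_j^{[2]}$, so one must invoke the fact that an honest node writes $\RyNew_j^{[2]}$ at most once (the assignment is guarded by $\RyNew_j^{[2]}=\defaultvalue$) to conclude that ``$j\in\SsNew_0^{[2]}$'' is a permanent statement about $j$'s final state, hence usable against the fixed set $\AsetNOzero_1^{[2]}$. Beyond that, the argument only consumes $|\Fc|\le t$ together with $n\ge 3t+1$, and—unlike most lemmas in this family—does not appear to need Lemma~\ref{lm:OciorACOOBUAUniqueSet}, since the ``Case~I'' cardinality assumption already supplies everything the counting requires.
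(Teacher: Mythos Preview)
Your proof is correct and follows essentially the same counting argument as the paper: both show that the honest members of $\ySymbolSet[y^\star]\cup\SsNew_0^{[2]}$ must intersect $\AsetNOzero_1^{[2]}$, forcing $y^\star=y_i(\Meg_1)$. The paper phrases the contradiction as $|\AsetNOzero_1^{[2]}\cup(\ySymbolSet[y^\star]\cup\SsNew_0^{[2]})|\ge (n-|\Fc|-t)+(n-t)>n$ under the assumption $\AsetNOzero_1^{[2]}\cap\ySymbolSet[y^\star]=\emptyset$, whereas you work directly with the honest parts and bound them by the at most $t$ honest nodes outside $\AsetNOzero_1^{[2]}$; the arithmetic is identical, and like the paper you correctly use only the first conjunct of the guard and do not invoke Lemma~\ref{lm:OciorACOOBUAUniqueSet}.
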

\begin{proof}
Let us now we consider the case where 
\begin{align}
|\AsetNOzero_{1}^{[2]}| \geq n - |\Fc| - t \label{eq:BUA2CaseI}
\end{align}
for $\COOLBUA[1]$,  and assume that all honest nodes eventually input their initial messages and keep running the $\OciorACOOL$ and $\COOLBUA[1]$ protocols. 
Here, $\AsetNOzero_{1}^{[2]} \defeqnew \{j \in [n] \setminus \Fc \mid \Me_j = \Meg_{1},\ \Ry_j^{[2]} \neq 0 \,\}$ for some $\Meg_{1}$. 
Under this assumption, if an honest node~$i$ sets $\ystrongmajority_i \gets y^{\star}$ in Line~\ref{line:BUA2InputYiMajoritySend} of Algorithm~\ref{algm:OciorACOOL}, then the conditions in Line~\ref{line:BUA2InputYiMajorityCond} must be satisfied, i.e.,
\begin{align}
|\ySymbolSet[y^{\star}] \cup \SsNew_0^{[2]}| \geq n - t, \quad \text{and} \quad |\ySymbolSet[y^{\star}]| \geq n - 2t 
\label{eq:BUA2InputYiMajoritySend}
\end{align}
 for some $y^{\star}$. 
Here, $\ySymbolSet[y^{\star}]$ includes the indices of nodes that sent $\ltuple \SYMBOL, 1, (\ysymbolNew_i^{(j)}, *) \rtuple$ to node~$i$ in $\COOLBUA[1]$ such that $\ysymbolNew_i^{(j)} = y^{\star}$ (see Lines~\ref{line:BUA2InputSymbolDeliverCond} and~\ref{line:BUA2InputSymbolDeliver} of Algorithm~\ref{algm:OciorACOOL}). 
The set $\SsNew_0^{[2]}$ includes the indices of nodes that sent $\Ry_j^{[2]} = 0$. 
Since each honest node~$j \in \AsetNOzero_{1}^{[2]}$ never sends $\Ry_j^{[2]} = 0$, it follows that
\begin{align}
\AsetNOzero_{1}^{[2]} \cap \SsNew_0^{[2]} = \emptyset.  \label{eq:ACOOLU0My}
\end{align} 
On the other hand,   it holds true that 
\begin{align}
| \AsetNOzero_{1}^{[2]} \cup \{\ySymbolSet[y^{\star}] \cup \SsNew_0^{[2]}\} | \leq  n .       \label{eq:ACOOLAllHonestbound}
\end{align}

Given the conditions in \eqref{eq:BUA2CaseI}-\eqref{eq:ACOOLAllHonestbound},   we now argue that $\AsetNOzero_{1}^{[2]} \cap \ySymbolSet[y^{\star}] \neq \emptyset$. 
We prove this result by contradiction and assume, for the sake of contradiction, that $\AsetNOzero_{1}^{[2]} \cap \ySymbolSet[y^{\star}] = \emptyset$. Specifically, under the assumption $\AsetNOzero_{1}^{[2]} \cap \ySymbolSet[y^{\star}] = \emptyset$ and given the condition $\AsetNOzero_{1}^{[2]} \cap \SsNew_0^{[2]} = \emptyset$ in \eqref{eq:ACOOLU0My},  we have the following bound: 
\begin{align}
| \AsetNOzero_{1}^{[2]} \cup \{\ySymbolSet[y^{\star}] \cup \SsNew_0^{[2]}\} |   &= \underbrace{| \AsetNOzero_{1}^{[2]} | }_{\geq n  -|\Fc| -t }  + \underbrace{|  \ySymbolSet[y^{\star}] \cup \SsNew_0^{[2]}| }_{ \geq n - t}       - \underbrace{| \AsetNOzero_{1}^{[2]} \cap \{\ySymbolSet[y^{\star}] \cup \SsNew_0^{[2]}\} | }_{=0}  \label{eq:ACOOLA1U099} \\
    &  \geq   (n  -|\Fc| -t) + (n - t )   \label{eq:ACOOLA1U088} \\
    &  \geq   n  -t -t + 3t+1 - t \label{eq:ACOOLA1U077} \\
    &  \geq    n     +1     \label{eq:ACOOLA1U066} 
\end{align} 
where \eqref{eq:ACOOLA1U088} follows from the inequalities in \eqref{eq:BUA2CaseI} and \eqref{eq:BUA2InputYiMajoritySend}, as well as the identity  
\[
\bigl|\AsetNOzero_{1}^{[2]} \cap \{\ySymbolSet[y^{\star}]\cup \SsNew_0^{[2]}\}\bigr| = 0
\]
under the assumption \(\AsetNOzero_{1}^{[2]} \cap \ySymbolSet[y^{\star}] = \emptyset\) and given the condition \(\AsetNOzero_{1}^{[2]} \cap \SsNew_0^{[2]} = \emptyset\) in \eqref{eq:ACOOLU0My}. Here \eqref{eq:ACOOLA1U077} uses the assumptions $n\geq 3t+1$ and $|\Fc|\leq t$. One can see that the conclusion in \eqref{eq:ACOOLA1U066} contradicts the result in \eqref{eq:ACOOLAllHonestbound}. Therefore, the original assumption \(\AsetNOzero_{1}^{[2]} \cap \ySymbolSet[y^{\star}] = \emptyset\) is false, and the statement $\AsetNOzero_{1}^{[2]} \cap \ySymbolSet[y^{\star}] \neq \emptyset$ holds true.
The conclusion $\AsetNOzero_{1}^{[2]} \cap \ySymbolSet[y^{\star}] \neq \emptyset$ implies that there exists at least one honest node \(j\in\AsetNOzero_{1}^{[2]}\) whose index also belongs to \(\ySymbolSet[y^{\star}]\). In other words, node \(j\) has sent  $\ltuple\SYMBOL, 1, (\ysymbolNew_i^{(j)},  *) \rtuple$ to node $i$ in $\COOLBUA[1]$ with $\ysymbolNew_i^{(j)}=y_i(\Meg_{1})=y^{\star}$ (see Lines~\ref{line:BUA2InputSymbolDeliverCond} and \ref{line:BUA2InputSymbolDeliver} of Algorithm~\ref{algm:OciorACOOL}). Consequently, if an honest node $i$ sets $\ystrongmajority_i\gets y^{\star}$ in Line~\ref{line:BUA2InputYiMajoritySend} of Algorithm~\ref{algm:OciorACOOL}, then $y^{\star}=y_i(\Meg_{1})$. 
\end{proof}

\begin{lemma}     \label{lm:OciorACOOLCaseIUnique2}
Consider the case where $|\AsetNOzero_{1}^{[2]}| \geq n - |\Fc| - t$ for $\COOLBUA[1]$, and assume that all honest nodes eventually input their initial messages and keep running the $\OciorACOOL$ and $\COOLBUA[1]$ protocols. 
Under this assumption,   if an honest node~$i$ sets $\Ry_i^{[1]} =1$ and sends $\ltuple \SYMBOL, 1, (*, \ysymbolNew_i^{(i)}) \rtuple$ in $\COOLBUA[1]$, then $\ysymbolNew_i^{(i)} = y_i(\Meg_{1})$.   Here,   $\Meg_{1}$ is the initial message of all honest nodes within $\AsetNOzero_{1}^{[2]}$.  
\end{lemma}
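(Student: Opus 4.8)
The plan is a short counting argument that combines the $\ge n-t$ symbol-matches triggering $\Ry_i^{[1]}=1$ with the hypothesis $|\AsetNOzero_{1}^{[2]}| \ge n-|\Fc|-t$.

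First I would make explicit what the two hypotheses on Node~$i$ say locally. Since Node~$i$ is honest, it only ever sends $\ltuple\SYMBOL,\IDCOOL,(y_j^{(i)},y_i^{(i)})\rtuple$ built from its own initial message $\Me_i$ in Lines~\ref{line:ACOOLECCEnc}--\ref{line:ACOOLECCEncindicator}, so the emitted second symbol is $\ysymbolNew_i^{(i)} = y_i^{(i)} = \hv_i^\T\Me_i$. Setting $\Ry_i^{[1]}=1$ can only happen through the condition in Line~\ref{line:COOLBUASISI1oneCond}, which requires $|\Lkset_1|\ge n-t$ at that moment; let $\LksetHonest_1 := \Lkset_1\setminus\Fc$ denote the honest part of that set as it stands when $\Ry_i^{[1]}$ is set, so $|\LksetHonest_1| \ge n-t-|\Fc|$. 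For every $j\in\LksetHonest_1$ the match rule in Line~\ref{line:ACOOLph1MatchCond} gives $(y_i^{(j)},y_j^{(j)})=(y_i^{(i)},y_j^{(i)})$, and since $j$ is honest $y_i^{(j)}=\hv_i^\T\Me_j$; hence $\hv_i^\T\Me_j = \hv_i^\T\Me_i = \ysymbolNew_i^{(i)}$ for all such $j$.

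Next I would show that the intersection $\LksetHonest_1 \cap \AsetNOzero_{1}^{[2]}$ is nonempty. Both sets are subsets of $[n]\setminus\Fc$, which has size $n-|\Fc|$, so by inclusion--exclusion
\[
|\LksetHonest_1 \cap \AsetNOzero_{1}^{[2]}| \;\ge\; (n-t-|\Fc|) + (n-|\Fc|-t) - (n-|\Fc|) \;=\; n-2t-|\Fc| \;\ge\; n-3t \;\ge\; 1,
\]
using $|\Fc|\le t$ and $n\ge 3t+1$. Pick any $j^{\star}$ in this intersection. By the definition of $\AsetNOzero_{1}^{[2]}$ in \eqref{eq:ACOOLANoZero} we have $\Me_{j^{\star}}=\Meg_{1}$ (consistent with Lemma~\ref{lm:OciorACOOBUAUniqueSet}, which gives $\gnNozero^{[2]}\le 1$ so that $\Meg_{1}$ is well defined), and applying the last paragraph to $j^{\star}$ yields $\hv_i^\T\Meg_{1} = \ysymbolNew_i^{(i)}$, i.e. $\ysymbolNew_i^{(i)} = y_i(\Meg_{1})$. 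For a non-linear code the same argument goes through verbatim with $\hv_i^\T(\cdot)$ replaced by $\ECCEncodingFunction_i(\cdot)$.

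The only point needing care is the tightness of the counting: one must read $\Lkset_1$ as its value at the instant $\Ry_i^{[1]}$ is set (it is monotone, so any snapshot of size $\ge n-t$ suffices), and the intersection bound survives the worst case $|\Fc|=t$ only by a single node, which is exactly what $n\ge 3t+1$ buys. Beyond the local code-matching bookkeeping of $\COOLBUA[1]$, no global property of the protocol is invoked; the running-forever hypothesis is used only to make the hypothesized bound on $|\AsetNOzero_{1}^{[2]}|$ meaningful.
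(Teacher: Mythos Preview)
Your proposal is correct and follows essentially the same counting argument as the paper: use $|\Lkset_1|\ge n-t$ together with $|\AsetNOzero_{1}^{[2]}|\ge n-|\Fc|-t$ to force a nonempty intersection, then read off $\ysymbolNew_i^{(i)}=y_i(\Meg_1)$ from any honest witness in that intersection. The only cosmetic difference is that you restrict to $\LksetHonest_1=\Lkset_1\setminus\Fc$ and apply inclusion--exclusion inside $[n]\setminus\Fc$, whereas the paper keeps the full $\Lkset_1$ and derives $\AsetNOzero_{1}^{[2]}\cap\Lkset_1\neq\emptyset$ by contradiction via $|\AsetNOzero_{1}^{[2]}\cup\Lkset_1|\le n$; the arithmetic and conclusion are identical.
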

\begin{proof}
Similar to the proof of Lemma~\ref{lm:OciorACOOLCaseIUnique}, we also consider the case where
\begin{align}
|\AsetNOzero_{1}^{[2]}| \geq n - |\Fc| - t \label{eq:BUA2CaseINew}
\end{align}
for $\COOLBUA[1]$, and assume that all honest nodes eventually input their initial messages and continue executing the $\OciorACOOL$ and $\COOLBUA[1]$ protocols. 
Suppose further that there exists an honest node~$i$, where $i \in \Ac_{\Lin^\star}$ for some $\Lin^\star \in [\gn]$, such that node~$i$ sets $\Ry_j^{[1]} = 1$ and sends  $\ltuple \SYMBOL, 1, (*, \ysymbolNew_i^{(i)}) \rtuple$
in $\COOLBUA[1]$.  We now   argue that  $\ysymbolNew_i^{(i)} = y_i(\Meg_{1})$.
Specifically, when an honest node~$i \in \Ac_{\Lin^\star}$ sets $\Ry_i^{[1]} =1$, it implies that at least $n-t$  nodes in $\Lkset_1$ have sent  $\ltuple\SYMBOL, 1, (\ysymbolNew_i^{(j)},  *) \rtuple$ to node $i$ in $\COOLBUA[1]$ with 
\begin{align}
\ysymbolNew_i^{(j)}=\ysymbolNew_i^{(i)}, \quad  \forall j\in \Lkset_1     \label{eq:BUA2InputYibarji}
\end{align}
  (see Lines~\ref{line:ACOOLph1MatchCond} and \ref{line:COOLBUASISI1oneSend} of Algorithm~\ref{algm:COOLBUA}), where 
\begin{align}
 |\Lkset_1| \geq n-t  .   \label{eq:BUA2InputYiMajoritySendNew}
\end{align}
On the other hand,   it holds true that 
\begin{align}
| \AsetNOzero_{1}^{[2]} \cup \Lkset_1 | \leq  n .       \label{eq:ACOOLAllHonestboundNew}
\end{align} 

Given the conditions in \eqref{eq:BUA2CaseINew}-\eqref{eq:ACOOLAllHonestboundNew}, we argue that $\AsetNOzero_{1}^{[2]} \cap \Lkset_1 \neq \emptyset$.  This proof is similar to that of Lemma~\ref{lm:OciorACOOLCaseIUnique}. 
Specifically, if $\AsetNOzero_{1}^{[2]} \cap \Lkset_1 = \emptyset$, then we have 
\begin{align}
| \AsetNOzero_{1}^{[2]} \cup \Lkset_1 |  &= \underbrace{| \AsetNOzero_{1}^{[2]} | }_{\geq n  -|\Fc| -t }  + \underbrace{|  \Lkset_1| }_{ \geq n - t}       - \underbrace{| \AsetNOzero_{1}^{[2]} \cap \Lkset_1 | }_{=0}    \geq    n     +1     \label{eq:ACOOLA1U066New} 
\end{align}
which  contradicts the result in \eqref{eq:ACOOLAllHonestboundNew}. Therefore,   the statement $\AsetNOzero_{1}^{[2]} \cap \Lkset_1 \neq \emptyset$ holds true.
The conclusion $\AsetNOzero_{1}^{[2]} \cap \Lkset_1 \neq \emptyset$ implies that there exists at least one honest node $j\in\AsetNOzero_{1}^{[2]}$ whose index also belongs to $\Lkset_1$ such that the condition $\ysymbolNew_i^{(j)}=\ysymbolNew_i^{(i)}$ in \eqref{eq:BUA2InputYibarji} is satisfied. In other words, node~$j\in\AsetNOzero_{1}^{[2]}$ has sent  $\ltuple\SYMBOL, 1, (\ysymbolNew_i^{(j)},  *) \rtuple$ to node $i$ in $\COOLBUA[1]$ with $\ysymbolNew_i^{(j)}=y_i(\Meg_{1})=\ysymbolNew_i^{(i)}$. Consequently, in this case,  if an honest node~$i$ sets $\Ry_i^{[1]} =1$ and sends $\ltuple \SYMBOL, 1, (*, \ysymbolNew_i^{(i)}) \rtuple$ in $\COOLBUA[1]$, then $\ysymbolNew_i^{(i)} = y_i(\Meg_{1})$. 
\end{proof}

\begin{theorem}  [Communication, Round,  Resilience, and Computation]  \label{thm:OciorACOOLPerformance}
$\OciorACOOL$ achieves the $\ABA$ consensus  with total $O(\max\{n\Lh, n t \log \alphabetsize \})$  communication bits, $O(1)$  rounds, and a single invocation of a binary $\BA$ protocol,  under the optimal resilience assumption $n \geq 3t + 1$. Moreover, the computation complexity of $\OciorACOOL$ is $\tilde{O}(\Lh + t)$ in the good case and $\tilde{O}(t \Lh + t^2)$ in the worst case, measured in bit-level operations per node, where the good case occurs when the actual number of Byzantine nodes is small or the network is nearly synchronous.  
\end{theorem}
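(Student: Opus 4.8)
The plan is to bound each of the four resource measures by decomposing $\OciorACOOL$ into its constituent components---$\COOLBUA[1]$, $\COOLBUA[2]$, the single $\ABBA$ invocation, $\BRBA$, and $\COOLHMDM[2]$---together with the auxiliary input-setting steps in Lines~\ref{line:BUA2InputYiMajorityyj}--\ref{line:BUA2InputYiMajoritySend} of Algorithm~\ref{algm:OciorACOOL}, and to sum the per-component costs, exactly as in the analysis of $\COOL$ and $\OciorCOOL$~\cite{Chen:2020arxiv,ChenDISC:21,ChenOciorCOOL:24}. Throughout, $k=\networkfaultsizet/3$, and in the regime of optimal resilience $n=\Theta(\networkfaultsizet)$, so each $\ECC$ super-symbol $y_j^{(i)}$ has length $\Theta(\max\{\Lh/k,\log\alphabetsize\})=\Theta(\max\{\Lh/\networkfaultsizet,\log\alphabetsize\})$ bits (the field-size constraint $n\le\alphabetsize-1$ forces at least $\log\alphabetsize$ bits per field element).

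For communication, the dominant cost of each $\COOLBUA$ instance is its Phase~1 (Lines~\ref{line:ACOOLECCEnc}--\ref{line:ACOOLECCEncindicator} of Algorithm~\ref{algm:COOLBUA}), where every node sends a pair of super-symbols to every other node, contributing $O\!\big(n^2\cdot\max\{\Lh/\networkfaultsizet,\log\alphabetsize\}\big)=O(\max\{n\Lh,\,n\networkfaultsizet\log\alphabetsize\})$ bits; the $\SIone$/$\SItwo$/$\READY$ messages add only $O(n^2)$ single-bit messages, and the $\NEWSYMBOL$/$\CORRECTSYMBOL$ messages add $O(n^2\cdot\max\{\Lh/\networkfaultsizet,\log\alphabetsize\})$ bits, all of the same order. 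Summing over the $O(1)$ components (plus the one $\ABBA$ call, counted as a single invocation of a binary $\BA$ protocol per the statement) yields the claimed total. For round complexity one traces the dependency chain: each $\COOLBUA$ contributes $O(1)$ asynchronous rounds (two phases, each a symbol/indicator exchange followed by an aggregation step), the input-setting steps and $\COOLHMDM[2]$ each contribute $O(1)$ rounds, and $\BRBA$ contributes the usual $O(1)$ echo-then-deliver rounds, with the single $\ABBA$ invocation accounting for the "single invocation of a binary $\BA$" in the statement. The one point requiring care is that the online-error-correction loops (Lines~\ref{line:BUA2OECbegin}--\ref{line:BUA2OECend} and Lines~\ref{line:OECbegin}--\ref{line:OECend} of Algorithm~\ref{algm:OciorACOOL}) may retry up to $t$ times; but these retries are \emph{local}---all relevant coded symbols are disseminated in one prior round, so a node merely waits for additional already-in-flight symbols and re-runs the decoder, triggering no new communication rounds---hence the round count stays $O(1)$.

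Resilience $n\ge 3t+1$ is inherited from Theorems~\ref{thm:OciorACOOLConsistency}--\ref{thm:OciorACOOLTermination}, and it is optimal: no $\BA$ protocol tolerates $t\ge n/3$, while $k=t/3$ makes the $\ECC$ requirement $2t+k\le n^\diamond\le n$ collapse to $n\ge 7t/3$, which is implied by $n\ge 3t+1$, so the protocol runs at the optimal threshold. For computation, the per-node cost is dominated by $\ECC$ operations: a single $\ECCEnc$ or $\ECCDec$ on the $\Lh$-bit message, with $k=\Theta(n)$ and $\Theta(\max\{1,\Lh/(k\log\alphabetsize)\})$ data layers, costs $\tilde{O}(n)$ field operations per layer, i.e.\ $\tilde{O}(\Lh+t)$ bit operations (the $+t$ absorbing index bookkeeping and set manipulations over the $O(n)$ events a node processes). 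In the good case---few actual Byzantine faults or near-synchrony---each $\OEC$ decoder succeeds on its first attempt, so a node performs only $O(1)$ encode/decode operations, giving $\tilde{O}(\Lh+t)$. In the worst case each of the two $\OEC$ loops may perform up to $t$ decode-then-re-encode-then-compare trials, each of cost $\tilde{O}(\Lh+t)$, for a total of $\tilde{O}(t\Lh+t^2)$ bit operations per node, and all remaining processing ($O(n)$ events at $\tilde{O}(\Lh+t)$ each) is of the same order.

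The main obstacle is not a single deep argument but careful accounting in two places: (i) confirming that the up-to-$t$ $\OEC$ retries in both loops of Algorithm~\ref{algm:OciorACOOL} do not inflate the asynchronous-round count---which hinges precisely on the observation that symbol dissemination happens once and retries only consume already-pending messages---and (ii) pinning down the worst-case bit-operation count of the repeated $\ECCDec$ calls so that it collapses to $\tilde{O}(t\Lh+t^2)$ rather than a larger polynomial in $n$ and $\Lh$.
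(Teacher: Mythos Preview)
Your decomposition and per-component accounting are fine for the regime $t=\Omega(n)$, but the step where you assert ``in the regime of optimal resilience $n=\Theta(t)$'' is a genuine error. The phrase ``optimal resilience $n\ge 3t+1$'' in the theorem means the protocol is correct for \emph{every} $t$ up to $(n-1)/3$; it does not restrict $t$ to be $\Theta(n)$. When $t=o(n)$, your Phase-1 calculation gives $O\!\big(n^{2}\cdot\max\{\Lh/t,\log\alphabetsize\}\big)$, which is $\Theta(n^{2}\Lh/t)$ in the message-dominated regime and can be much larger than the claimed $O(n\Lh)$ (take, e.g., $t=\Theta(\sqrt{n})$). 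The same unjustified assumption $k=\Theta(n)$ reappears in your computation analysis.

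The paper closes this gap by a separate construction for small $t$ (Algorithm~\ref{algm:OciorACOOLsmallt}): it selects a subset $\Sc'$ of $n'=3t+1$ nodes to run $\OciorACOOL$ among themselves, and then the nodes in $\Sc'$ disseminate the agreed value to the remaining $n-n'$ nodes via a strongly-honest-majority distributed multicast ($\SHMDM$, one coded symbol per sender). In this variant the heavy all-to-all symbol exchange is over $n'=O(t)$ nodes, giving $O(t^{2}\cdot\max\{\Lh/t,\log\alphabetsize\})+O(t(n-t)\cdot\max\{\Lh/t,\log\alphabetsize\})=O(\max\{n\Lh,\,nt\log\alphabetsize\})$ as stated, and the OEC decoders operate on $O(t)$ symbols, which is what makes the per-decode cost $\tilde{O}(\Lh+t)$ and the worst-case $t$-retry bound $\tilde{O}(t\Lh+t^{2})$ hold uniformly in $n$. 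Without this small-$t$ wrapper (or an equivalent argument), your communication and computation bounds do not follow.
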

\begin{proof}
From Theorems~\ref{thm:OciorACOOLConsistency}-\ref{thm:OciorACOOLTermination}, it follows that the proposed $\OciorACOOL$ protocol satisfies the Consistency, Validity, and Termination properties in all executions, under the  assumption of an asynchronous binary $\BA$ protocol invoked within it, given that $n \geq 3t + 1$.

 Regarding the analysis of communication complexity, we first focus on the case where $t = \Omega(n)$. Recall that the size of the message to be agreed upon is $\ell$ bits. 
In $\OciorACOOL$, each coded symbol carries 
\[
\cb = \max\left\{ \frac{\ell}{k}, \, \log \alphabetsize \right\}
\]
bits, where $k = t/3$, and $\alphabetsize$ denotes the alphabet size of the error correction code used in the protocol.  
The total communication complexity of $\OciorACOOL$ is computed as
\begin{align}
\text{Total Comm.} 
  &= O( \cb n^2 + n^2 ) \nonumber\\
  &= O\!\left( \max\{ \ell n , \, n^2 \log \alphabetsize \} \right) \quad \text{bits}. 
  \label{eq:ComCase1}
\end{align}
In the above communication complexity analysis, we do not include the communication cost of the binary $\BA$ protocol, which is assumed to have a total communication complexity bounded by $O(n^2)$ bits and an expected $O(1)$ number of rounds.

Similar to $\COOL$, when $t$ is very small compared to $n$, we first select $n' \defeqnew 3t + 1$ nodes (e.g., the first $n'$ nodes), denoted by the set $\Sc'$, from the $n$ nodes to run the $\OciorACOOL$ protocol (see $\OciorACOOLsmallt$ protocol in Algorithm~\ref{algm:OciorACOOLsmallt}). 
In this setting, each coded symbol still carries $\cb$ bits. 
After the $\OciorACOOL$ protocol reaches agreement on a message $\Me'$ within $\Sc'$, the $i$-th node in $\Sc'$ sends a coded symbol $y_i'$, encoded from the agreed message $\Me'$, to the $n - n'$ nodes outside $\Sc'$, where   the symbols are encoded as 
\[
[y_1', y_2', \dotsc, y_{n'}'] \gets \ECCEnc(n', \kencode, \Me'). 
\] 
Each node outside $\Sc'$ eventually decodes the agreed message $\Me'$ using online error-correction decoding from the symbols $\{y_i'\}_{i \in \Sc'}$ collected from the nodes within $\Sc'$ (see Algorithm~\ref{algm:OciorACOOLsmallt}). 

Note that the $\OciorACOOL$ protocol executed within $\Sc'$ satisfies the Consistency, Validity, and Termination properties, given $n' = 3t + 1$. 
Furthermore, based on the decoding property of the error correction code, the overall protocol $\OciorACOOLsmallt$ (see Algorithm~\ref{algm:OciorACOOLsmallt})--comprising $\OciorACOOL$ within $\Sc'$ and the subsequent \emph{Strongly-Honest-Majority Distributed Multicast} ($\SHMDM$)  from $\Sc'$--also satisfies the Consistency, Validity, and Termination properties.  
In this case, the total communication complexity of the overall protocol $\OciorACOOLsmallt$ is computed as
\begin{align}
\text{Total Comm.} 
  &= O( \cb n' \cdot n'  + n'\cdot n' + \cb n' (n-n')  ) \nonumber\\
  &= O\!\left( \max\{\ell n, \, n t \log \alphabetsize\} \right) \quad \text{bits}.
  \label{eq:ComCase2}
\end{align}

Thus, by combining the result in~\eqref{eq:ComCase1} for the case $t = \Omega(n)$ and the result in~\eqref{eq:ComCase2} for the case of very small $t$, the total communication complexity of $\OciorACOOL$ can be expressed as 
\[
\text{Total Comm.} = O\!\left( \max\{ n\ell, \, n t \log \alphabetsize \} \right) \text{ bits.}
\]

 The proposed $\OciorACOOL$ protocol uses  $O(1)$   rounds and a single invocation of a binary $\BA$ protocol that is assumed to have  an expected $O(1)$ number of rounds.

The computation complexity of the proposed $\OciorACOOL$ is nominated by the online   error-correction  decoding in Lines~\ref{line:BUA2OECbegin}-\ref{line:BUA2OECend} and \ref{line:OECbegin}-\ref{line:OECend} of Algorithm~\ref{algm:OciorACOOL}. 
Fast $(n,k)$ Reed-Solomon decoding and polynomial error-correction decoding of degree-$(k-1)$ polynomials from $\nbar$ evaluation points (with $\nbar \le n$) are achievable in \[O(\nbar \log^2 \nbar \log \log \nbar)\] field operations using Fast Fourier Transform (FFT)-based polynomial arithmetic over the finite field $\Alphabet$ \cite{RS:60,Berlekamp:68, Gao:03, roth:06}.

If each symbol is represented using $\cb = \max\left\{ \frac{\ell}{k},  \log \alphabetsize \right\}$ bits, where $k=t/3$,  then  multiplication of two  symbols can be done in  $O(\cb \log\cb \log\log\cb)$ bit-level operations using fast multiplication algorithms,  while    inversion or   division can be done in  $O(\cb \log^2\cb \log\log\cb)$ bit-level operations using fast inversion algorithms. 
Therefore, fast $(n,k)$ Reed-Solomon decoding of the message $\Me$ can be achieved in 
\[
O((\nbar  \log^2 \nbar \log \log \nbar ) \cdot  (\cb \log^2\cb\log\log\cb) )  =\tilde{O}(\nbar     \cb )    =\tilde{O}(     \frac{\nbar \ell}{k}  + \nbar \log \alphabetsize)   
 = \tilde{O}(\ell + \nbar \log \alphabetsize )
\]
bit-level operations.  For the Reed-Solomon codes,  $\alphabetsize$ can be  set such that $\alphabetsize \geq  n+1$.   
In $\OciorACOOL$, for the error-correction decoding in Lines~\ref{line:BUA2OECbegin}-\ref{line:BUA2OECend} and \ref{line:OECbegin}-\ref{line:OECend} of Algorithm~\ref{algm:OciorACOOL}, $\nbar$ is set to $O(t)$, and the decoding may be repeated up to $t$ times.
Hence, the computational complexity of the proposed $\OciorACOOL$ in the worst case is
\[
t \cdot \tilde{O}(\ell + \nbar \log \alphabetsize )  =t \cdot \tilde{O}(\ell + t \log n )   =\tilde{O} ( t \ell +t^2 )
\]
bit-level operations per node.   The computation complexity of $\OciorACOOL$ in the good case  is \[\tilde{O}(\Lh + t)\]  bit-level operations per node, where the good case occurs when the actual number of Byzantine nodes is small or the network is nearly synchronous. 
\end{proof}

\subsection{Proofs of Lemma~\ref{lm:OciorACOOBUAUniqueSet}}    \label{sec:ProofBRCuniquegroup}
 
 In the following, we present Lemma~\ref{lm:OciorACOOBUAUniqueSet}, which is used in the above analysis. 
First, we provide several lemmas that will be used in the proof of Lemma~\ref{lm:OciorACOOBUAUniqueSet}. 
Lemma~\ref{lm:OciorACOOBUAUniqueSet} extends the result of~\cite[Lemma~3]{ChenOciorCOOL:24} (see Lemma~\ref{lm:OciorACOOL21} below).

 \begin{lemma} \cite[Lemma 7]{Chen:2020arxiv}   \label{lm:sizeboundMatrix} 
For $\gn  \geq 2$, it is true  that 
    \begin{align}
  |\Ac_{\Lin,j}| + |\Ac_{j,\Lin}|  <  &  k,    \quad \forall   j \neq \Lin,  \  j, \Lin \in [\gn]    
 \end{align} 
 where $k$ is a parameter of  $(n,k)$ error correction code.
  \end{lemma}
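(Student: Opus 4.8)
The plan is to combine the linearity of the error-correction encoding with the minimum-distance (root-counting) property of the $(n,k)$ code, and then to exploit the fact that the two index sets $\Ac_{\Lin,j}$ and $\Ac_{j,\Lin}$ live in disjoint honest groups. First I would fix $j\neq\Lin$ with $j,\Lin\in[\gn]$ and introduce the agreement set
\[
\Zc_{\Lin,j} \defeqnew \{\, i\in[n] \mid \hv_i^\T\Meg_{\Lin} = \hv_i^\T\Meg_j \,\}.
\]
By \eqref{eq:Alj}, the defining condition of $\Ac_{\Lin,j}$ is $i\in\Ac_{\Lin}$ together with $\hv_i^\T\Meg_{\Lin}=\hv_i^\T\Meg_j$, and this scalar equality is symmetric in $\Lin$ and $j$; hence $\Ac_{\Lin,j}\subseteq\Zc_{\Lin,j}$ and likewise $\Ac_{j,\Lin}\subseteq\Zc_{\Lin,j}$.

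Next I would bound $|\Zc_{\Lin,j}|$. By linearity, $\hv_i^\T\Meg_{\Lin}=\hv_i^\T\Meg_j$ is equivalent to $\hv_i^\T(\Meg_{\Lin}-\Meg_j)=0$, so $\Zc_{\Lin,j}$ is exactly the set of coordinates at which the codeword $\ECCEnc(n,k,\Meg_{\Lin}-\Meg_j)$ vanishes. Since $\Meg_{1},\dots,\Meg_{\gn}$ are distinct, $\Meg_{\Lin}-\Meg_j\neq\bfZero$, so this is a nonzero codeword; for the $(n,k)$ code --- a nonzero polynomial of degree at most $k-1$ evaluated at $n$ distinct points in the Reed-Solomon case, or more generally a nonzero codeword of an $(n,k)$ MDS code of minimum distance $n-k+1$ --- it has at most $k-1$ zero entries, so $|\Zc_{\Lin,j}|\le k-1$.

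Finally I would invoke disjointness: $\Ac_{\Lin,j}\subseteq\Ac_{\Lin}$ and $\Ac_{j,\Lin}\subseteq\Ac_j$, and $\Ac_{\Lin}\cap\Ac_j=\emptyset$ because an honest node's initial message $\Me_i$ cannot equal both of the distinct values $\Meg_{\Lin}$ and $\Meg_j$. Hence $\Ac_{\Lin,j}$ and $\Ac_{j,\Lin}$ are disjoint, and
\[
|\Ac_{\Lin,j}| + |\Ac_{j,\Lin}| = |\Ac_{\Lin,j}\cup\Ac_{j,\Lin}| \le |\Zc_{\Lin,j}| \le k-1 < k,
\]
which is the desired bound for every $j\neq\Lin$, $j,\Lin\in[\gn]$. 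I do not anticipate a real obstacle here; the only point needing care is the appeal to the code's distance property --- ensuring that the statement ``a nonzero codeword has at most $k-1$ zero coordinates'' is formulated so that it holds uniformly for the linear (generator-matrix) and the polynomial-evaluation descriptions of $\ECCEnc$, and that it is consistent with the decoding guarantee assumed for the $(n,k)$ code elsewhere in the paper.
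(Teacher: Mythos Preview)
Your proof is correct and is exactly the standard argument for this lemma: bound the agreement set by the code's minimum-distance property (a nonzero codeword of an $(n,k)$ MDS code has at most $k-1$ zero coordinates), then use that $\Ac_{\Lin,j}\subseteq\Ac_{\Lin}$ and $\Ac_{j,\Lin}\subseteq\Ac_{j}$ are disjoint. The paper itself does not reproduce a proof --- it simply cites \cite[Lemma~7]{Chen:2020arxiv} --- and the argument there is essentially the one you have written.
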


\begin{lemma}    \cite[Lemma 7]{ChenOciorCOOL:24}  \cite[Lemma 16]{ChenOciorCOOL:24}  \label{lm:OciorACOOL1221}
For  the   $\COOLBUA$  protocol with $n\geq 3t + 1$ and $k \le t/3$, if $\gn^{[1]} = 2$  then it holds true that $\gn^{[2]} \leq   1$.   
\end{lemma}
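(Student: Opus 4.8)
The plan is to argue by contradiction, closely following the analysis of synchronous $\COOL$ in \cite{Chen:2020arxiv, ChenOciorCOOL:24}. Suppose $\gn^{[2]} = 2$. Then there are honest nodes $i_1, i_2$ carrying distinct initial messages $\Me_{i_1} = \Meg_1 \ne \Meg_2 = \Me_{i_2}$ that both pass the phase-2 test, i.e.\ $\Ry_{i_1}^{[1]} = \Ry_{i_1}^{[2]} = 1$ and $\Ry_{i_2}^{[1]} = \Ry_{i_2}^{[2]} = 1$ in $\COOLBUA$. By the phase-2 rule (Line~\ref{line:ACOOLph2ACond} of Algorithm~\ref{algm:COOLBUA}), node $i_p$ set $\Ry_{i_p}^{[2]} = 1$ only after having $\Ry_{i_p}^{[1]} = 1$ and after collecting at least $n-t$ nodes that simultaneously lie in $\Ss_1^{[1]}$ and in $\Lkset_1$ at $i_p$ --- that is, nodes which broadcast a phase-1 success indicator $1$ and whose coded-symbol pair matched $i_p$'s own (Line~\ref{line:ACOOLph1MatchCond}). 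At least $n - t - |\Fc| \ge n - 2t$ of these are honest; call this honest set $T_p$, for $p \in \{1,2\}$.

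Next I would isolate the relevant ``confusable point'' set $C := \{m \in [n] : \hv_m^\T \Meg_1 = \hv_m^\T \Meg_2\}$. Since $\Meg_1 \ne \Meg_2$, the rows $\hv_m$ orthogonal to the nonzero vector $\Meg_1 - \Meg_2$ number at most $k-1$ (the polynomial-degree / MDS argument underlying Lemma~\ref{lm:sizeboundMatrix}), so $|C| \le k - 1 < k \le t/3$; in particular $|\Ac_{1,2}| + |\Ac_{2,1}| < k$. The key structural observation is: because $\gn^{[1]} = 2$, every honest node broadcasting a phase-1 success indicator $1$ carries $\Meg_1$ or $\Meg_2$; and if an honest node $j$ carrying $\Meg_2$ satisfies $\Lk_{i_1}(j) = 1$, then the match on both the $i_1$-th and $j$-th coordinates of the coded-symbol pair forces $\hv_{i_1}^\T \Meg_1 = \hv_{i_1}^\T \Meg_2$ and $\hv_j^\T \Meg_1 = \hv_j^\T \Meg_2$, i.e.\ $i_1 \in C$ and $j \in C$ (symmetrically with $i_2$ and $\Meg_1$).

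Then comes the main count. Whenever possible I would choose the representatives outside $C$. If $i_1 \notin C$ and $i_2 \notin C$, the previous observation shows that no honest member of $T_1$ carries $\Meg_2$ and no honest member of $T_2$ carries $\Meg_1$; combined with ``broadcasts a phase-1 indicator $1$'', this gives $T_1 \subseteq \Ac_1^{[1]}$ and $T_2 \subseteq \Ac_2^{[1]}$. As $\Ac_1^{[1]}$ and $\Ac_2^{[1]}$ are disjoint subsets of the honest nodes, $|T_1| + |T_2| \le |\Ac_1^{[1]}| + |\Ac_2^{[1]}| \le n - |\Fc|$, whence
\[
2\bigl[(n-t) - |\Fc|\bigr] \;\le\; n - |\Fc| \;\Longrightarrow\; n \le 2t + |\Fc| \le 3t,
\]
contradicting $n \ge 3t+1$. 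This settles the case $\Ac_1^{[2]} \not\subseteq C$ and $\Ac_2^{[2]} \not\subseteq C$.

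The main obstacle is the degenerate case in which $\Ac_1^{[2]} \subseteq C$ or $\Ac_2^{[2]} \subseteq C$ --- then that group's phase-2 survivors all sit in $C$ and so number at most $k-1$, but the containment $T_p \subseteq \Ac_p^{[1]}$ fails and a crude count only yields the too-weak $n \le 2t + |\Fc| + k - 1$. I expect this to be closed by descending one level in the phase-1 structure: each honest $j \in T_1$ with $\Me_j = \Meg_1$ itself satisfied $\Ry_j^{[1]} = 1$, hence (Line~\ref{line:COOLBUASISI1oneCond}) saw at least $n-t$ matching symbols, and if $j \notin C$ then --- as in the observation above --- all of $j$'s honest, phase-1-successful matchers carry $\Meg_1$; iterating this while using that at most $k-1$ honest nodes can ever fall into $C$ should collapse the estimate back to the clean bound. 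Equivalently, since the statement depends only on the final values of $\Ry_i^{[1]}, \Ry_i^{[2]}$ and on the message-determined, honest-symmetric matching relation $\Lk_i(\cdot)$ (Line~\ref{line:ACOOLph1MatchCond}), the asynchronous $\COOLBUA$ of Algorithm~\ref{algm:COOLBUA} is covered verbatim by the argument given for synchronous $\COOL$ in \cite[Lemma~7]{ChenOciorCOOL:24} (see also \cite{Chen:2020arxiv}); I would present the final proof in that form, keeping $n \ge 3t+1$ and $k \le t/3$ explicit throughout.
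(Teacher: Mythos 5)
Your proposal takes a genuinely different route from the source, and it has a real, unclosed gap. First, note that the paper itself does not prove this lemma: it is stated purely by citation to \cite[Lemmas~7 and~16]{ChenOciorCOOL:24}, and the only in-paper evidence of the intended argument is the proof of the analogous statement for $\gnNozero$ (Lemma~\ref{lm:OciorACOOL1221gnNozero}), which follows the cited proof. Your direct double-counting argument is sound in the generic case: if one can pick $i_1\in\Ac_1^{[2]}$ and $i_2\in\Ac_2^{[2]}$ with $\hv_{i_p}^\T\Meg_1\neq\hv_{i_p}^\T\Meg_2$, then indeed $T_1\subseteq\Ac_1^{[1]}$, $T_2\subseteq\Ac_2^{[1]}$, and $2(n-t-|\Fc|)\le n-|\Fc|$ yields $n\le 3t$, a contradiction. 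The problem is exactly the ``degenerate'' case you flag, e.g.\ $\Ac_2^{[2]}\subseteq C$: there the best your count gives is $n\le 2t+|\Fc|+k-1$, which is consistent with $n\ge 3t+1$ once $t\ge 3k$. Your proposed repair (``descending one level in the phase-1 structure \ldots iterating'') is not worked out, and it is not clear it terminates in the needed bound: each level of the recursion reintroduces up to $k-1$ confusable nodes, and nothing in the sketch shows the slack shrinks. The final sentence of the proposal simply defers to the cited lemma, i.e., it does not constitute an independent proof.

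The ingredient that actually closes the degenerate case in the cited work (visible in the paper's proof of Lemma~\ref{lm:OciorACOOL1221gnNozero} together with Lemma~\ref{lm:ACOOLboundA22}) is of a different nature: one assumes w.l.o.g.\ that group~2 is the minority between the two surviving groups, and then bounds the number of \emph{non-confusable} phase-1 survivors of that group by $|\Ac_{2,2}^{[1]}|\le 2(k-1)$, via a double count of matching links between $\Ac_{2,2}^{[1]}$ and the honest nodes in groups $\Lin\ge 3$ (each node of $\Ac_{2,2}^{[1]}$ needs at least $n-t-|\Fc|-|\Ac_2|$ such links, while each node in group $\Lin^\star\ge 3$ can match at most $k-1$ members of $\Ac_2^{[1]}$ by Lemma~\ref{lm:sizeboundMatrix}). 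Combined with $|\Ac_{1,2}|+|\Ac_{2,1}|\le k-1$, this shows \emph{every} phase-1 survivor of the minority group eventually accumulates at least $n-t-3(k-1)>t+1$ zero-indications in $\Ss_0^{[1]}\cup\Lkset_0$ and hence sets $\Ry^{[2]}=0$ (Line~\ref{line:ACOOLph2CSI0Cond} of Algorithm~\ref{algm:COOLBUA}) --- no choice of representatives is needed, so the degenerate case never arises. This link-counting step, which is where the hypothesis $k\le t/3$ is really consumed, is absent from your proposal; without it (or an equivalent), the argument is incomplete.
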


  \begin{lemma}    \cite[Lemma 3]{ChenOciorCOOL:24}  \cite[Lemma 11]{ChenOciorCOOL:24}  \label{lm:OciorACOOL21}
For   the  $\COOLBUA$  protocol with $n\geq 3t + 1$,  it holds true that $\gn^{[2]} \leq   1$.
\end{lemma}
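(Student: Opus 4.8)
The statement is \cite[Lemma~3]{ChenOciorCOOL:24} (restated as \cite[Lemma~11]{ChenOciorCOOL:24}), and the plan is to follow that proof, bootstrapping from Lemma~\ref{lm:OciorACOOL1221}. Recall first that $\gn^{[2]}\le\gn^{[1]}$: any honest node that sets $\Ry_i^{[2]}=1$ must, by the Phase-2 rule in Line~\ref{line:ACOOLph2ACond} of Algorithm~\ref{algm:COOLBUA}, already have $\Ry_i^{[1]}=1$, so the set of messages represented in $\{\Ac_\ell^{[2]}\}$ is contained in the set represented in $\{\Ac_\ell^{[1]}\}$. Hence if $\gn^{[1]}\le1$ then $\gn^{[2]}\le1$ trivially, and if $\gn^{[1]}=2$ the claim is exactly Lemma~\ref{lm:OciorACOOL1221}. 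The only work is therefore the case $\gn^{[1]}\ge3$.

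For $\gn^{[1]}\ge3$ I would argue by contradiction, assuming $\gn^{[2]}\ge2$, so that there are honest nodes $i_1\in\Ac_1^{[2]}$ and $i_2\in\Ac_2^{[2]}$ with $\Meg_1\ne\Meg_2$. The Phase-2 triggering condition for $i_1$ (Line~\ref{line:ACOOLph2ACond}) provides a set of at least $n-t$ nodes that reported $\Ry_j^{[1]}=1$ and whose symbols matched $i_1$'s; removing the at most $t$ dishonest ones leaves an honest set $W_1$ of size at least $n-2t\ge t+1$, each of whose members $j$ satisfies $\hv_j^\T\Me_j=\hv_j^\T\Meg_1$ and $\hv_{i_1}^\T\Me_j=\hv_{i_1}^\T\Meg_1$. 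Unwinding these equalities against the definitions \eqref{eq:ACOOLAell}--\eqref{eq:All11}, each $j\in W_1$ lies in $\Ac_1^{[1]}$ (when $\Me_j=\Meg_1$) or in a cross-collision group $\Ac_{\ell,1}^{[1]}$ with $\ell\in[\gn^{[1]}]\setminus\{1\}$ — and in the latter case the same coordinate collision is forced on $i_1$ itself, i.e.\ $i_1\in\Ac_{1,\ell}$. One gets the symmetric structure for $W_2$ around $\Meg_2$.

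The contradiction is then obtained by size accounting, using: (a) the honest groups $\{\Ac_\ell\}$ are pairwise disjoint; (b) Lemma~\ref{lm:sizeboundMatrix}, $|\Ac_{\ell,j}|+|\Ac_{j,\ell}|<k\le t/3$, applied to each pair that can contribute to $W_1$ or $W_2$; (c) the quorum sizes $|W_1|,|W_2|\ge n-2t$ together with $|W_1\cup W_2|\le n-|\Fc|$; and (d) the transitive fact that each witness $j\in W_1\cup W_2$ itself reached $\Ry_j^{[1]}=1$ and hence saw $\ge n-t$ matching links, which further constrains how many honest nodes can carry a ``third'' message $\Meg_\ell$ while still witnessing both $i_1$ and $i_2$. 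Combining these as in \cite[Lemma~11]{ChenOciorCOOL:24} forces the number of coordinates on which the codewords of $\Meg_1$ and $\Meg_2$ agree to exceed $k-1$, contradicting the minimum distance of the $(n,k)$ code (equivalently, contradicting Lemma~\ref{lm:sizeboundMatrix}), which closes the case and the proof.

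I expect the hard part to be step (d): a naive intersection of $W_1$ and $W_2$ yields, under $n=3t+1$, only a single common honest witness, which is consistent with Lemma~\ref{lm:sizeboundMatrix} and gives no contradiction on its own. The proof must therefore exploit the recursive Phase-1 structure of the witnesses and aggregate the bounds from Lemma~\ref{lm:sizeboundMatrix} over \emph{all} message indices $\ell\in[\gn^{[1]}]$ simultaneously, rather than over the single pair $(1,2)$; this careful bookkeeping — not a single inequality — is the technical core, and it is exactly where the thresholds $n\ge3t+1$ and $k=t/3$ are used tightly.
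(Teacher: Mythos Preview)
Your case split is sound, and your handling of $\gn^{[1]}\le 2$ matches the paper. The gap is in the case $\gn^{[1]}\ge 3$: you propose a delicate recursive accounting over all Phase-1 witnesses and all message indices, admit that a naive intersection of $W_1$ and $W_2$ is inconclusive, and defer the ``technical core'' to unspecified bookkeeping. That is not a proof yet, and the sketch as written does not obviously close --- in particular, your step (d) needs each witness $j\in W_1$ to have its \emph{own} large honest match set, and aggregating those bounds across $\ell\in[\gn^{[1]}]$ without double-counting is exactly the subtle part you have not carried out.

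The paper avoids this entirely. It first proves $\gn^{[1]}\le 2$ unconditionally (Lemma~\ref{lm:OciorACOOLeta12New}), so the case $\gn^{[1]}\ge 3$ never arises, and then Lemma~\ref{lm:OciorACOOL1221} finishes the argument. The proof of Lemma~\ref{lm:OciorACOOLeta12New} is a clean three-set inclusion--exclusion: if honest $i_1\in\Ac_1$ and $i_2\in\Ac_2$ both set $\Ry^{[1]}=1$, each has $|\LksetHonest_1^{(i_p)}|\ge n-t-|\Fc|$, while any pairwise intersection $|\LksetHonest_1^{(i_p)}\cap\LksetHonest_1^{(i_q)}|\le k-1$ by the code's distance; plugging into $|\LksetHonest_1^{(i_1)}\cup\LksetHonest_1^{(i_2)}\cup\LksetHonest_1^{(i_3)}|\le n-|\Fc|$ forces $|\LksetHonest_1^{(i_3)}|<n-|\Fc|-t$ for any third message, so $i_3$ cannot set $\Ry^{[1]}=1$. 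This single inequality, using only Phase-1 data, replaces the entire recursive machinery you outline; you should adopt it instead of attempting the direct attack on $\gn^{[2]}$ when $\gn^{[1]}\ge 3$.
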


\begin{lemma}    \label{lm:OciorACOOBUAUniqueSet}
For the $\COOLBUA$ protocol with $n \ge 3t + 1$ and $k \le t/3$, it holds true that $\gn^{[2]} \leq 1$. Furthermore, if all honest nodes eventually input their initial messages and keep running the $\COOLBUA$ protocol,  then it holds true that $\gnNozero^{[2]} \leq    1$, i.e., every honest node~$i$ eventually sets $\Ry_i^{[2]} = 0$ for all $i \in \Ac_{\Lin}$ and for all $\Lin \in [2, \gn]$.  
\end{lemma}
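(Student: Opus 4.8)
The plan is to obtain the first assertion, $\gn^{[2]}\le 1$, directly from Lemma~\ref{lm:OciorACOOL21}: that bound rests only on the monotone events ``$\Ry_i^{[1]}=1$'' and ``$|\Ss_1^{[1]}\cap\Lkset_1|\ge n-t$'' recorded by Algorithm~\ref{algm:COOLBUA}, so its counting argument applies to the asynchronous $\COOLBUA$ unchanged (the hypothesis $k\le t/3$ enters only in the second assertion). The work is therefore the second assertion, which I would prove by contradiction. Suppose $\gnNozero^{[2]}\ge 2$: then there are distinct $\ell$-bit values $\Meg_1\ne\Meg_2$ and honest nodes $i_1\in\AsetNOzero_1^{[2]}$, $i_2\in\AsetNOzero_2^{[2]}$, i.e.\ with $\Ry_{i_1}^{[2]}\ne 0$ and $\Ry_{i_2}^{[2]}\ne 0$ for all time.

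First I would pin down what ``$\Ry_{i_p}^{[2]}\ne 0$ forever'' forces, for $p\in\{1,2\}$. If $\Ry_{i_p}^{[2]}=1$, then the guard at Line~\ref{line:ACOOLph2ACond} gave $\Ry_{i_p}^{[1]}=1$ and $|\Ss_1^{[1]}\cap\Lkset_1|\ge n-t$ in node~$i_p$'s copies, so at least $n-2t\ge t+1$ honest nodes $j$ have $\Ry_j^{[1]}=1$ and $\Lk_{i_p}(j)=1$. If instead $\Ry_{i_p}^{[2]}=\defaultvalue$ for all time, then neither disjunct of the guard at Line~\ref{line:ACOOLph2CSI0Cond} of Algorithm~\ref{algm:COOLBUA} ever holds, so $\Ry_{i_p}^{[1]}\ne 0$ throughout and $|\Ss_0^{[1]}\cup\Lkset_0|\le t$ in node~$i_p$'s copies throughout; invoking liveness (every honest node eventually inputs and keeps running), node~$i_p$ eventually receives $\ltuple\SYMBOL,\IDCOOL,\cdot\rtuple$ from every honest node and $\ltuple\SIone,\IDCOOL,0\rtuple$ from every honest node with final Phase~1 indicator $0$ (Line~\ref{line:COOLBUASI1zero}), so, using $\Lk_{i_p}(j)=\Lk_j(i_p)$ for honest $j$ (see~\eqref{eq:lkindicator}), node~$i_p$'s $\Lkset_0$ eventually contains $\{j\in[n]\setminus\Fc:\Lk_{i_p}(j)=0\}$ and its $\Ss_0^{[1]}$ eventually contains $\Bc^{[1]}$, whence $\bigl|\Bc^{[1]}\cup\{j\in[n]\setminus\Fc:\Lk_{i_p}(j)=0\}\bigr|\le t$. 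Either way, putting
\[
N_p\defeqnew\{\,j\in[n]\setminus\Fc \mid \Lk_{i_p}(j)=1,\ \Ry_j^{[1]}\ne 0\,\},
\]
we obtain $|N_p|\ge (n-|\Fc|)-t\ge n-2t\ge t+1$, and, reading off the two equalities encoded in $\Lk_{i_p}(j)=1$, every $j\in N_p$ also satisfies $\hv_j^\T\Me_j=\hv_j^\T\Meg_p$ and $\hv_{i_p}^\T\Me_j=\hv_{i_p}^\T\Meg_p$.

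The remaining step mirrors the combinatorial core of the proof of Lemma~\ref{lm:OciorACOOL21}, with the subgroups $\Ac_{\Lin,j}^{[1]}$ replaced throughout by $\AsetNOzero_{\Lin,j}^{[1]}$; the substitution is legitimate since $\AsetNOzero_{\Lin,j}^{[1]}\subseteq\Ac_{\Lin,j}$, so Lemma~\ref{lm:sizeboundMatrix} still gives $|\AsetNOzero_{\Lin,j}^{[1]}|+|\AsetNOzero_{j,\Lin}^{[1]}|<k$. Decomposing $N_p$ over message groups and using $\hv_j^\T\Me_j=\hv_j^\T\Meg_p$ yields $N_p\subseteq\AsetNOzero_p^{[1]}\cup\bigcup_{\Lin\ne p}\AsetNOzero_{\Lin,p}^{[1]}$; feeding the two sets $N_1,N_2$ — hosted by distinct messages $\Meg_1,\Meg_2$ and each of size $\ge n-2t$ — into this decomposition and invoking the size bound together with $k\le t/3$ and $n\ge 3t+1$ produces a contradiction, exactly as in the proofs of Lemmas~\ref{lm:OciorACOOL1221} and~\ref{lm:OciorACOOL21}. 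Hence $\gnNozero^{[2]}\le 1$; unpacking definition~\eqref{eq:ACOOLANoZero}, this is precisely the statement that every honest $i\in\Ac_\Lin$ with $\Lin\in[2,\gn]$ eventually sets $\Ry_i^{[2]}=0$.

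I expect the book-keeping of that last step to be the main obstacle: at the tight resilience $n=3t+1$ a crude cardinality estimate yields only $|N_1\cap N_2|\ge n-3t\ge 1$, far too weak by itself, so one must argue at the level of the subgroups $\AsetNOzero_{\Lin,j}^{[1]}$ and the Reed--Solomon agreement bound, following~\cite{ChenOciorCOOL:24}. The only genuinely new ingredient relative to the synchronous argument is the liveness reasoning of the second paragraph — it replaces ``a node that set $\Ry^{[2]}=1$ saw $\ge n-t$ Phase~1 matches that had set $\Ry^{[1]}=1$'' by ``a node stuck with $\Ry^{[2]}\ne 0$ has $\ge n-2t$ honest Phase~1 matches that never set $\Ry^{[1]}=0$'' — and it crucially relies on the symmetric link indicators $\Lk_i(j)$, which are populated from $\SYMBOL$ messages alone and so still account for honest nodes that are themselves stuck at $\Ry^{[1]}=\defaultvalue$ (hence silent in the $\SIone$ round).
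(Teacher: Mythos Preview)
Your liveness argument in the second paragraph is the essential new ingredient, and it matches what the paper actually needs: a node $i_p$ with $\Ry_{i_p}^{[2]}\neq 0$ forever must (under liveness) have both $\Ry_{i_p}^{[1]}\neq 0$ and a small honest ``zero-set'' $\Bc^{[1]}\cup\LksetHonest_0^{(i_p)}$, hence a large honest ``one-set.'' Your identification of that reasoning as the only genuinely new step compared to the synchronous proof is exactly right.

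Where your route diverges from the paper is in organization. The paper does not feed two Phase-2 sets $N_1,N_2$ of size $\geq n-2t$ into a single combinatorial lemma. Instead it factors through an intermediate Phase-1 claim: first (Lemma~\ref{lm:OciorACOOLeta12New}) it shows $\gnNozero^{[1]}\le 2$ by a \emph{three}-node argument --- three honest nodes from three distinct groups, all with $\Ry^{[1]}\neq 0$, each satisfy $|\LksetHonest_1^{(i_p)}|\ge n-t-|\Fc|$ (your liveness reasoning, applied at Phase~1), and pairwise bounds $|\LksetHonest_1^{(i_p)}\cap\LksetHonest_1^{(i_q)}|\le k-1$ plus inclusion--exclusion for three sets force a contradiction. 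Only then (Lemma~\ref{lm:OciorACOOL1221gnNozero}), with $\gnNozero^{[1]}=2$, does it run the two-group analysis: a case split on which of $|\Ac_1|,|\Ac_2|$ is smaller, together with the technical bound $|\AsetNOzero_{i,i}^{[1]}|\le 2(k-1)$ (Lemma~\ref{lm:ACOOLboundA22}), shows that every node in the smaller group's $\AsetNOzero_p^{[1]}$ eventually accumulates $|\Ss_0^{[1]}\cup\Lkset_0|>t+1$ and hence sets $\Ry^{[2]}=0$.

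Your one-step route should also work, but note that the ``combinatorial core'' you defer to in Lemmas~\ref{lm:OciorACOOL1221}--\ref{lm:OciorACOOL21} is itself this two-step argument; in particular it needs the $\gn^{[1]}\le 2$ (resp.\ $\gnNozero^{[1]}\le 2$) reduction, which requires the three-way intersection bound and the slightly stronger estimate $|\LksetHonest_1^{(i_p)}|\ge n-t-|\Fc|$ rather than just $|N_p|\ge n-2t$. Your liveness analysis does yield that stronger estimate (since $\Ry_{i_p}^{[1]}\neq 0$ in both of your cases), so the port goes through --- but when you unfold the black box you will be reproducing the paper's Lemmas~\ref{lm:OciorACOOLeta12New} and~\ref{lm:OciorACOOL1221gnNozero} rather than bypassing them.
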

\begin{proof}   
 We first prove the first statement. For this statement, each honest node can terminate at any point in time, and some nodes may not have input their initial messages before termination. 
From Lemma~\ref{lm:OciorACOOLeta12New}, it follows that $\gn^{[1]} \leq 2$. 
Then, from Lemma~\ref{lm:OciorACOOL1221}, if $\gn^{[1]} = 2$, it holds that $\gn^{[2]} \leq 1$. 
If $\gn^{[1]} \leq 1$, it follows immediately that $\gn^{[2]} \leq 1$.

We now prove the second statement. 
From Lemma~\ref{lm:OciorACOOLeta12New}, if all honest nodes eventually input their initial messages and keep running the $\COOLBUA$ protocol, then $\gnNozero^{[1]} \leq 2$. 
Then, from Lemma~\ref{lm:OciorACOOL1221gnNozero}, if all honest nodes eventually input their initial messages and keep running the $\COOLBUA$ protocol, and if $\gnNozero^{[1]} = 2$, it follows that $\gnNozero^{[2]} \leq 1$. 
If $\gnNozero^{[1]} \leq 1$, it follows immediately that $\gnNozero^{[2]} \leq 1$. 
This completes the proof. 
 \end{proof}

\begin{lemma}    \label{lm:OciorACOOLeta12New}
For the $\COOLBUA$ protocol,  it is true that $\gn^{[1]} \leq 2$. Furthermore,  if  all honest nodes eventually  input their initial messages and keep running the $\COOLBUA$ protocol, then it  is also true that $\gnNozero^{[1]} \leq 2$.  
\end{lemma}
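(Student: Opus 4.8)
The plan is to reduce both statements to a single structural estimate: for every honest node $i$ with $\Ry_i^{[1]} \neq 0$ in the relevant execution, the honest link-$1$ neighbour set $\LksetHonest_1^{(i)}$ of \eqref{eq:LksetHonestUseri} satisfies $|\LksetHonest_1^{(i)}| \ge n - t - |\Fc|$. If $i$ eventually sets $\Ry_i^{[1]} = 1$, this is immediate: the triggering condition in Line~\ref{line:COOLBUASISI1oneCond} of Algorithm~\ref{algm:COOLBUA} requires $|\Lkset_1| \ge n - t$ at that instant, at most $|\Fc|$ of those indices are dishonest, and $\Lkset_1$ never shrinks. For the second statement I additionally handle the honest nodes that \emph{never} update $\Ry_i^{[1]}$ (so $\Ry_i^{[1]} = \defaultvalue$ forever, yet $i \in \AsetNOzero_{\Lin}^{[1]}$): under the hypothesis that all honest nodes eventually input their initial messages and keep running $\COOLBUA$, such a node eventually receives a $\SYMBOL$ message from every honest node; since it never meets the condition $|\Lkset_0| \ge t+1$ of Line~\ref{line:COOLBUASI1zeroCond}, at most $t$ honest nodes are mismatched with it, so again $|\LksetHonest_1^{(i)}| \ge (n - |\Fc|) - t = n - t - |\Fc|$.

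With this in hand I would argue by contradiction. Suppose $\gn^{[1]} \ge 3$ (respectively $\gnNozero^{[1]} \ge 3$), and pick honest representatives $i_1, i_2, i_3$ with pairwise distinct initial messages $\Meg_1, \Meg_2, \Meg_3$, taken from $\Ac_1^{[1]}, \Ac_2^{[1]}, \Ac_3^{[1]}$ (resp.\ $\AsetNOzero_1^{[1]}, \AsetNOzero_2^{[1]}, \AsetNOzero_3^{[1]}$); set $N_\Lin := \LksetHonest_1^{(i_\Lin)}$, so $|N_\Lin| \ge n - t - |\Fc|$. The key is the pairwise bound $|N_\Lin \cap N_{\Lin'}| \le k - 1$ for $\Lin \neq \Lin'$: if an honest $j$ lies in $N_\Lin \cap N_{\Lin'}$, then applying the definition of the link indicator \eqref{eq:lkindicator} at $i_\Lin$ and at $i_{\Lin'}$ forces $\hv_j^\T \Meg_\Lin = \hv_j^\T \Me_j = \hv_j^\T \Meg_{\Lin'}$, i.e.\ the $j$-th coded symbol of the nonzero vector $\Meg_\Lin - \Meg_{\Lin'}$ vanishes; by the minimum-distance property of the $(n,k)$ code (the same fact underlying Lemma~\ref{lm:sizeboundMatrix}), a nonzero message has at most $k-1$ zero-valued coded symbols, so at most $k-1$ such $j$ exist.

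Since $N_1 \cup N_2 \cup N_3$ lies inside the $n - |\Fc|$ honest indices, inclusion--exclusion gives
\[
n - |\Fc| \ \ge\ |N_1 \cup N_2 \cup N_3| \ \ge\ 3\bigl(n - t - |\Fc|\bigr) - 3(k-1),
\]
whence $2n \le 3t + 2|\Fc| + 3k - 3 \le 3t + 2t + t - 3 = 6t - 3$ using $|\Fc| \le t$ and $k \le t/3$, contradicting $n \ge 3t + 1$. This proves $\gn^{[1]} \le 2$ in every execution, and, together with the stuck-node refinement of the first paragraph, $\gnNozero^{[1]} \le 2$ under the stated hypothesis. (The first statement alone is essentially the synchronous $\COOL$ estimate and could also be quoted from \cite{ChenOciorCOOL:24}; I reprove it because the second statement reuses the same machinery.)

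I expect the delicate point to be the stuck-node case rather than the counting: one must verify, within the asynchronous semantics, that a permanently-undecided honest node really collects a $\SYMBOL$ message from \emph{every} honest node — this is exactly where the ``keep running'' hypothesis, eventual message delivery, and the monotone growth of $\Lkset_0, \Lkset_1$ are used — and one must convert the local threshold $|\Lkset_0| < t+1$ into the statement ``at most $t$ \emph{honest} mismatches,'' noting that dishonest indices in $\Lkset_0$ only make this slack larger. The minimum-distance bound on $|N_\Lin \cap N_{\Lin'}|$ and the closing arithmetic are then routine given $k \le t/3$ and $n \ge 3t+1$.
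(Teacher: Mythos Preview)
Your proposal is correct and follows essentially the same approach as the paper: both arguments establish the lower bound $|\LksetHonest_1^{(i)}| \ge n - t - |\Fc|$ for every relevant honest node (from $|\Lkset_1| \ge n-t$ when $\Ry_i^{[1]}=1$, and from $|\Lkset_0| \le t$ together with eventual receipt of all honest $\SYMBOL$ messages when $\Ry_i^{[1]}$ stays $\defaultvalue$), the pairwise bound $|\LksetHonest_1^{(i_\Lin)} \cap \LksetHonest_1^{(i_{\Lin'})}| \le k-1$ via the code's minimum distance, and then an inclusion--exclusion estimate on three such sets combined with $n\ge 3t+1$, $|\Fc|\le t$, $k\le t/3$. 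The only cosmetic difference is that the paper fixes two representatives $i_1,i_2$ and bounds $|\LksetHonest_1^{(i_3)}|$ from above to show a third group cannot survive, whereas you assume three survive and reach an arithmetic contradiction; these are contrapositives of one another.
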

\begin{proof}
Recall that $\Lkset_1^{(i)}:=\{j\in [n] \mid   (y_i^{(j)}, y_j^{(j)}) = (y_i^{(i)}, y_j^{(i)})  \}$ and $\Lkset_0^{(i)}:=\{j\in [n] \mid   (y_i^{(j)}, y_j^{(j)}) \neq  (y_i^{(i)}, y_j^{(i)})  \}$ denote the  link indicator sets $\Lkset_1$ and $\Lkset_0$ updated by Node~$i$, as described in Line~\ref{line:ACOOLph1MatchCond} of Algorithm~\ref{algm:COOLBUA} (see \eqref{eq:LksetUseri}). 
Here we define $\LksetHonest_1^{(i)} := \Lkset_1^{(i)} \setminus \Fc$  and $\LksetHonest_0^{(i)} := \Lkset_0^{(i)} \setminus \Fc$ (see \eqref{eq:LksetHonestUseri}).     
   Recall that $\Ac_{\Lin} \defeqnew   \{  i \in[n]\setminus \Fc \mid    \Me_i =  \Meg_{\Lin}  \}$ for $\Lin\in [\gn]$ and for  some  non-empty $\ell$-bit distinct values $\Meg_{1}, \Meg_{2}, \cdots, \Meg_{\gn}$. 
Let us first prove the second statement and then the first of this lemma. 

\vspace{5pt}

 \noindent \emph{$\bullet$ \underline{Proof for    $\gnNozero^{[1]} \leq 2$, if   all honest nodes eventually  input     messages and keep running   $\COOLBUA$:}}  \\  
Let us first assume that    all honest nodes eventually  input their initial messages and    keep running the $\COOLBUA$ protocol, and that  there already exist an honest Node~$\iprime \in \Ac_{1}$ and an honest Node~$\iprimeprime \in \Ac_{2}$ such that they   \emph{never}  set $\Ry_{\iprime}^{[1]} = 0$ and $\Ry_{\iprimeprime}^{[1]} = 0$, respectively.  
We   now argue that each honest Node~$\ithree \in \Ac_{\Lin}$, for all $\Lin \in [3, \gn]$, will eventually set $\Ry_{\ithree}^{[1]} =0$. 

Given   $\Ry_{\iprime}^{[1]} \neq 0$ and $\Ry_{\iprimeprime}^{[1]} \neq 0$, we have the following bounds on $|\LksetHonest_0^{(\iprime)}| $ and $|\LksetHonest_0^{(\iprimeprime)}|$: 
\begin{align}
|\LksetHonest_0^{(\iprime)}|  \leq  t ,    \quad \text{and} \quad 
|\LksetHonest_0^{(\iprimeprime)}|  \leq   t.    \label{eq:Uinequalitylinkzero}
\end{align}
If  all honest nodes eventually  input their initial messages and keep running  the $\COOLBUA$ protocol, then for each $i\in [n]\setminus \Fc$,   eventually the following equality holds true: 
\begin{align}
|\LksetHonest_1^{(i)}|  + |\LksetHonest_0^{(i)}|    = n-|\Fc| .   \label{eq:VVbound11} 
\end{align}
Then, the above results in  \eqref{eq:Uinequalitylinkzero} and  \eqref{eq:VVbound11} imply:    
\begin{align}
|\LksetHonest_1^{(\iprime)}|  \geq  n - t - |\Fc|,    \quad \text{and} \quad 
|\LksetHonest_1^{(\iprimeprime)}|  \geq  n - t - |\Fc|.    \label{eq:Uinequality222}
\end{align}
Furthermore, for each $j \in \LksetHonest_1^{(\iprime)} \cap \LksetHonest_1^{(\iprimeprime)}$, it is true that $\hv_j^\T \Meg_{1} = \hv_j^\T \Me_j$ and $\hv_j^\T \Meg_{2} = \hv_j^\T \Me_j$ (see Line~\ref{line:ACOOLph1MatchCond} of Algorithm~\ref{algm:COOLBUA}), which implies that   $\hv_j^\T \Meg_{1} = \hv_j^\T \Meg_{2}$,   $\forall j \in \LksetHonest_1^{(\iprime)} \cap \LksetHonest_1^{(\iprimeprime)}$.  
Thus,    we conclude:  
\begin{align}
|\LksetHonest_1^{(\iprime)} \cap \LksetHonest_1^{(\iprimeprime)}| \leq k - 1,   \label{eq:VVproperty1}   
\end{align}
otherwise $\Meg_{1} = \Meg_{2}$ due to a property of linear algebra, which contradicts our assumption  $\Meg_{1} \neq \Meg_{2}$. 
Note that if there exists a full-rank matrix $H$ of size  $k\times k$ such that $H \xv =\boldsymbol{0}$, then it follows  that $\xv=\boldsymbol{0}$.

Similarly, we have 
\begin{align}
|\LksetHonest_1^{(\iprime)} \cap \LksetHonest_1^{(\ithree)}| \leq k - 1, \quad \text{and} \quad 
|\LksetHonest_1^{(\iprimeprime)} \cap \LksetHonest_1^{(\ithree)}| \leq k - 1.   \label{eq:VVproperty3}   
\end{align}

In our setting,   the following identities hold true: 
\begin{align} 
|\LksetHonest_1^{(\iprime)} \cup\LksetHonest_1^{(\iprimeprime)}\cup  \LksetHonest_1^{(\ithree)}| \leq  n-|\Fc|,   \label{eq:VVbound1}   
\end{align}
\begin{align} 
|\LksetHonest_1^{(\iprime)} \cup\LksetHonest_1^{(\iprimeprime)}\cup  \LksetHonest_1^{(\ithree)}| \geq |\LksetHonest_1^{(\iprime)}|\!+ |\LksetHonest_2^{(\iprimeprime)}| \!+  |\LksetHonest_1^{(\ithree)}|\!- |\LksetHonest_1^{(\iprime)} \cap \LksetHonest_1^{(\iprimeprime)}| \!-|\LksetHonest_1^{(\iprime)} \cap \LksetHonest_1^{(\ithree)}|\!- |\LksetHonest_1^{(\iprimeprime)} \cap \LksetHonest_1^{(\ithree)}| ,   \label{eq:VVbound2}   
\end{align}
where the first identity follows from the fact $\LksetHonest_1^{(\iprime)} \cup\LksetHonest_1^{(\iprimeprime)}\cup  \LksetHonest_1^{(\ithree)} \subseteq [n] \setminus \Fc$ and the  second identity follows from the inclusion-exclusion identity for the union of three sets.  At this point,  $|\LksetHonest_1^{(\ithree)}|$ can be bounded by: 
\begin{align}
|\LksetHonest_1^{(\ithree)}|  &\leq  |\LksetHonest_1^{(\iprime)} \cup\LksetHonest_1^{(\iprimeprime)}\cup  \LksetHonest_1^{(\ithree)}| -  |\LksetHonest_1^{(\iprime)}|\!-  |\LksetHonest_2^{(\iprimeprime)}| + |\LksetHonest_1^{(\iprime)} \cap \LksetHonest_1^{(\iprimeprime)}| \!+|\LksetHonest_1^{(\iprime)} \cap \LksetHonest_1^{(\ithree)}|\!+ |\LksetHonest_1^{(\iprimeprime)} \cap \LksetHonest_1^{(\ithree)}|    \label{eq:VVbound3}    \\
  &\leq  n-|\Fc|   -2(n - t - |\Fc|)   +3(k-1) \label{eq:VVbound4} \\
    &\leq  n-|\Fc|   -2\cdot ( 3t+1 - t - t)   +3\cdot(t/3-1) \label{eq:VVbound5} \\
        &=   n-|\Fc|   - t  -5 \label{eq:VVbound6} \\
    &<  n-|\Fc|   - t  \label{eq:VVbound7} 
\end{align}
where the first inequality follows from  the identity in \eqref{eq:VVbound2};     the second inequality  follows from the results in \eqref{eq:Uinequality222}-\eqref{eq:VVbound1};   and the third inequality uses  the identities   $n\geq 3t+1$,  $|\Fc| \leq  t$ and $k \leq t/3$.

Since $ \LksetHonest_0^{(i)} \subseteq \Lkset_0^{(i)}$, and from \eqref{eq:VVbound11} and \eqref{eq:VVbound6}, if  all honest nodes eventually  input their initial messages and keep running  the $\COOLBUA$ protocol,   then we have 
\begin{align}
|\Lkset_0^{(\ithree)}|  &\geq |\LksetHonest_0^{(\ithree)}|     \non \\
    &= n-|\Fc| -   |\LksetHonest_1^{(\ithree)}|  \label{eq:VVbound9} \\
    & \geq   n-|\Fc|   - ( n-|\Fc|   - t -5) \label{eq:VVbound610} \\
        & >     t +1 \label{eq:VVbound77} 
\end{align}
where \eqref{eq:VVbound9} follows from \eqref{eq:VVbound11}; and   \eqref{eq:VVbound610} is from \eqref{eq:VVbound6}. 
The result in \eqref{eq:VVbound77} reveals that, if   all honest nodes eventually  input their initial messages and keep running  the $\COOLBUA$ protocol,  then each honest Node~$\ithree \in \Ac_{\Lin}$, for all $\Lin \in [3, \gn]$, will eventually set $\Ry_{\ithree}^{[1]} = 0$.  Thus,  it is true that  $\gnNozero^{[1]} \leq 2$.

 \vspace{5pt}

 \noindent \emph{$\bullet$ \underline{Proof for    $\gn^{[1]} \leq 2$:}}  \\  
 Let us assume that there already exist an honest node~$\iprime \in \Ac_{1}$ and an honest node~$\iprimeprime \in \Ac_{2}$ such that they have set $\Ry_{\iprime}^{[1]} = 1$ and $\Ry_{\iprimeprime}^{[1]} = 1$, respectively, in  $\COOLBUA$ protocol. In this case, we assume that each honest node can terminate at any point in time, and that some nodes may not have input their initial messages before termination. 
We now argue that every honest node~$\ithree \in \Ac_{\Lin}$, for all $\Lin \in [3, \gn]$, will never set $\Ry_{\ithree}^{[1]} = 1$.

Under the   assumption of $\Ry_{\iprime}^{[1]} = 1$ and $\Ry_{\iprimeprime}^{[1]} = 1$, and from the condition in Line~\ref{line:COOLBUASISI1oneCond} of Algorithm~\ref{algm:COOLBUA}, the following inequities hold true: 
\begin{align}
|\Lkset_1^{(\iprime)}|   \geq  n-t,    \quad  \text{and} \quad 
|\Lkset_1^{(\iprimeprime)}|  \geq  n-t. \label{eq:Uinequality2}
\end{align} 
Then, the above result implies:    
\begin{align}
|\LksetHonest_1^{(\iprime)}|  \geq  n - t - |\Fc|,    \quad \text{and} \quad 
|\LksetHonest_1^{(\iprimeprime)}|  \geq  n - t - |\Fc|.    \label{eq:Uinequality22}
\end{align} 
One can check that, under this assumption that   $\Ry_{\iprime}^{[1]} = 1$ and $\Ry_{\iprimeprime}^{[1]} = 1$,  the results in  \eqref{eq:Uinequality222}-\eqref{eq:VVbound7} hold true 
and it is concluded that 
\begin{align}
|\LksetHonest_1^{(\ithree)}|   <  n-|\Fc|   - t  \label{eq:VVbound777} 
\end{align}
(see \eqref{eq:VVbound7}). 
This result   reveals that each honest Node~$\ithree \in \Ac_{\Lin}$, for all $\Lin \in [3, \gn]$, will never set $\Ry_{\ithree}^{[1]} = 1$.  Thus,  it is true that  $\gn^{[1]} \leq 2$.  This completes the proof.  
\end{proof}

\begin{lemma}    \label{lm:OciorACOOL1221gnNozero}
For  the   $\COOLBUA$  protocol with $n\geq 3t + 1$ and $k \le t/3$,   and assuming that  all honest nodes eventually input their initial messages and keep running the $\COOLBUA$ protocol,   if $\gnNozero^{[1]} = 2$  then it holds true that $\gnNozero^{[2]} \leq   1$.   
\end{lemma}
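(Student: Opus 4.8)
The plan is to argue by contradiction, reusing the two-group-collapse counting from the proof of Lemma~\ref{lm:OciorACOOL1221} (equivalently \cite[Lemma~7]{ChenOciorCOOL:24} and \cite[Lemma~16]{ChenOciorCOOL:24}) and from the companion argument in the proof of Lemma~\ref{lm:OciorACOOLeta12New}, but carried out over the ``nonzero'' sets $\AsetNOzero_{\Lin}^{[p]}$ rather than $\Ac_{\Lin}^{[p]}$, and with the synchronous round guarantees replaced by the eventual-saturation facts the standing assumption provides. So suppose, for contradiction, that $\gnNozero^{[2]}\ge 2$. First I would observe that $\Ry_i^{[2]}\neq 0$ forces $\Ry_i^{[1]}\neq 0$: if $\Ry_i^{[1]}=0$, the Phase-2 set-to-$0$ guard of Line~\ref{line:ACOOLph2CSI0Cond} is enabled and, since all honest nodes keep running, node~$i$ eventually sets $\Ry_i^{[2]}=0$, a contradiction. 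Hence every group with a representative in some $\AsetNOzero_{\Lin}^{[2]}$ also has one in $\AsetNOzero_{\Lin}^{[1]}$; since $\gnNozero^{[1]}=2$, these are exactly two groups, which I label $\Meg_1,\Meg_2$, and I fix honest nodes $\iprime$ with $\Me_{\iprime}=\Meg_1$, $\Ry_{\iprime}^{[2]}\neq 0$ and $\iprimeprime$ with $\Me_{\iprimeprime}=\Meg_2$, $\Ry_{\iprimeprime}^{[2]}\neq 0$.

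The second step is the new ingredient: for any honest node~$i$ with $\Ry_i^{[2]}\neq 0$, I claim $|\LksetHonest_1^{(i)}|\ge n-|\Fc|-t$ eventually, regardless of whether $\Ry_i^{[2]}$ becomes~$1$ or stays at~$\defaultvalue$. If $\Ry_i^{[2]}=1$, then at the instant it is set $\Ry_i^{[1]}=1$, so by Line~\ref{line:COOLBUASISI1oneCond} $|\Lkset_1^{(i)}|\ge n-t$ and thus $|\LksetHonest_1^{(i)}|\ge n-|\Fc|-t$; additionally $|\Ss_1^{[1],(i)}\cap\Lkset_1^{(i)}|\ge n-t$ at that instant. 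If $\Ry_i^{[2]}$ stays at $\defaultvalue$, then the set-to-$0$ guard never fires while $\Ry_i^{[2]}=\defaultvalue$, so $\Ry_i^{[1]}\neq 0$ and $|\Ss_0^{[1],(i)}\cup\Lkset_0^{(i)}|\le t$ permanently; combining $|\Lkset_0^{(i)}|\le t$ with the saturation identity $|\LksetHonest_1^{(i)}|+|\LksetHonest_0^{(i)}|=n-|\Fc|$ (eventually, since under the standing assumption node~$i$ receives symbols from every honest node; cf.~\eqref{eq:VVbound11}) again gives $|\LksetHonest_1^{(i)}|\ge n-|\Fc|-t$, and moreover $|\Bc^{[1]}|\le|\Ss_0^{[1],(i)}|\le t$ since every honest node in $\Bc^{[1]}$ eventually reports $\Ry^{[1]}=0$ to~$i$. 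Applied to $\iprime$ and $\iprimeprime$, this yields $|\LksetHonest_1^{(\iprime)}|,|\LksetHonest_1^{(\iprimeprime)}|\ge n-|\Fc|-t$ eventually.

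The third step is the counting, which I expect to go through as in the synchronous case. Since $\iprime\in\Ac_1$ and $\iprimeprime\in\Ac_2$ lie in distinct message groups, the match condition of Line~\ref{line:ACOOLph1MatchCond} gives $\hv_j^{\T}(\Meg_1-\Meg_2)=0$ for every honest $j\in\LksetHonest_1^{(\iprime)}\cap\LksetHonest_1^{(\iprimeprime)}$; as any $k$ distinct encoding vectors are independent, $|\LksetHonest_1^{(\iprime)}\cap\LksetHonest_1^{(\iprimeprime)}|\le k-1$, for otherwise $\Meg_1=\Meg_2$. Now, using that $\gnNozero^{[1]}=2$ forces every honest node outside $\AsetNOzero_1^{[1]}\cup\AsetNOzero_2^{[1]}$ into $\Bc^{[1]}$, I would confine the honest parts of the relevant support/link sets of $\iprime$ and $\iprimeprime$ to $\AsetNOzero_1^{[1]}\cup\AsetNOzero_{2,1}^{[1]}$ and $\AsetNOzero_2^{[1]}\cup\AsetNOzero_{1,2}^{[1]}$ respectively (the cross terms being the ``agreement'' sub-groups), add the two lower bounds, and invoke Lemma~\ref{lm:sizeboundMatrix} together with $n\ge 3t+1$, $|\Fc|\le t$, $k\le t/3$ to count strictly more than $n-|\Fc|$ honest nodes --- impossible. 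Therefore $\gnNozero^{[2]}\le 1$; equivalently, every honest node~$i$ with $\Me_i=\Meg_{\Lin}$ for $\Lin\in[2,\gn]$ eventually sets $\Ry_i^{[2]}=0$.

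The hard part will be the $\Ry_i^{[2]}=\defaultvalue$ case: the synchronous analysis can assume every honest node commits $\Ry_i^{[2]}\in\{0,1\}$ and exploits $|\Ss_1^{[1],(i)}\cap\Lkset_1^{(i)}|\ge n-t$ directly, whereas here a node may remain at $\defaultvalue$ and the only available leverage is the permanent failure of the set-to-$0$ guard. Converting that failure --- via the keep-running saturation --- into the same $|\LksetHonest_1^{(i)}|\ge n-|\Fc|-t$ estimate, and being careful that the link/support cardinalities of $\iprime$ and $\iprimeprime$ may be read at different asynchronous moments so that only their eventual, monotone-stable values may be used, is exactly where the present lemma departs from and extends \cite[Lemma~7]{ChenOciorCOOL:24} and \cite[Lemma~16]{ChenOciorCOOL:24}.
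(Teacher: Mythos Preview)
Your high-level strategy --- pick one honest representative $\iprime\in\AsetNOzero_1^{[2]}$ and one $\iprimeprime\in\AsetNOzero_2^{[2]}$, show each has $|\LksetHonest_1^{(\cdot)}|\ge n-|\Fc|-t$, then do a two-set inclusion--exclusion with $|\LksetHonest_1^{(\iprime)}\cap\LksetHonest_1^{(\iprimeprime)}|\le k-1$ --- does not close. Carry out the arithmetic: the union of the two honest link sets is contained in $[n]\setminus\Fc$, so $n-|\Fc|\ge 2(n-|\Fc|-t)-(k-1)$, i.e.\ $k-1\ge n-|\Fc|-2t$. At the extremal parameters $n=3t+1$, $|\Fc|=t$, $k=t/3$ this reads $t/3-1\ge 1$, which is \emph{satisfied} whenever $t\ge 6$: no contradiction. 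The same slack survives your ``confinement'' refinement: even if you restrict the honest parts of $\Ss_1^{[1],(\iprime)}\cap\Lkset_1^{(\iprime)}$ and $\Ss_1^{[1],(\iprimeprime)}\cap\Lkset_1^{(\iprimeprime)}$ to $\Ac_1^{[1]}\cup\Ac_{2,1}^{[1]}$ and $\Ac_2^{[1]}\cup\Ac_{1,2}^{[1]}$, their union is $\Ac_1^{[1]}\cup\Ac_2^{[1]}\subseteq[n]\setminus\Fc$ and their overlap is bounded by $|\Ac_{1,2}^{[1]}|+|\Ac_{2,1}^{[1]}|\le k-1$, giving the identical inequality. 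The three-set version of this argument (Lemma~\ref{lm:OciorACOOLeta12New}) works because three copies of $n-|\Fc|-t$ minus three intersections of size $\le k-1$ overshoots $n-|\Fc|$; the two-set version simply does not. Also note that your confinement of $\LksetHonest_1^{(\iprime)}$ to $\AsetNOzero_1^{[1]}\cup\AsetNOzero_{2,1}^{[1]}$ is false as stated: honest nodes in $\Bc^{[1]}\cap\Ac_1$ match $\iprime$'s symbols and belong to $\LksetHonest_1^{(\iprime)}$ but not to $\AsetNOzero_1^{[1]}$.

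The missing idea is the asymmetric case split the paper uses. One does not argue symmetrically from two representatives; instead one takes whichever of $\Ac_1,\Ac_2$ is the smaller (say $\Ac_2$, so $|\Ac_2|\le\tfrac12(n-|\Fc|-\sum_{\ell\ge3}|\Ac_\ell|)$) and shows that for \emph{every} $i\in\AsetNOzero_2^{[1]}$ the Phase-2 zero trigger must fire. The key additional estimate is $|\AsetNOzero_{2,2}^{[1]}|\le 2(k-1)$ (Lemma~\ref{lm:ACOOLboundA22}), proved by a double-counting of links from $\AsetNOzero_{2,2}^{[1]}$ into $\cup_{\ell\ge3}\Ac_\ell$. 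Combining this with $|\AsetNOzero_{1,2}^{[1]}|+|\AsetNOzero_{2,1}^{[1]}|\le k-1$ (Lemma~\ref{lm:sizeboundMatrix}) and the containments $\Bc^{[1]}\cup\AsetNOzero_{1,1}^{[1]}\subseteq\Ss_0^{[1],(i)}\cup\Lkset_0^{(i)}$ (eventually, under the keep-running assumption) gives $|\Ss_0^{[1],(i)}\cup\Lkset_0^{(i)}|\ge n-t-3(k-1)>t+1$, which forces $\Ry_i^{[2]}=0$ via Line~\ref{line:ACOOLph2CSI0Cond}. Your sketch never invokes the $2(k-1)$ bound and therefore cannot recover the needed margin.
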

\begin{proof}
This proof  follows closely  that of  \cite[Lemma 7]{ChenOciorCOOL:24}  \cite[Lemma 16]{ChenOciorCOOL:24}.    Here we assume that  all honest nodes eventually input their initial messages and keep running the $\COOLBUA$ protocol.  We also assume that $\gnNozero^{[1]} = 2$. 
Under this assumption,  the definition in \eqref{eq:ACOOLAell00}-\eqref{eq:ACOOLAll22} suggests that 
 \begin{align} 
\AsetNOzero_{1}^{[1]} \defeqnew&   \{  i \in[n]\setminus \Fc \mid     \Me_i =  \Meg_{1},   \  \Ry_i^{[1]} \neq 0 \},    \label{eq:ACOOLANoZeroNew1}   \\
\AsetNOzero_{2}^{[1]} \defeqnew&   \{  i \in[n]\setminus \Fc \mid     \Me_i =  \Meg_{2},   \  \Ry_i^{[1]} \neq 0 \},   \label{eq:ACOOLANoZeroNew2}   \\
\Bc^{[1]} \defeqnew  & \{  i \in[n]\setminus \Fc \mid  \Ry_i^{[1]} =0   \} \label{eq:ACOOLBdef01New2} \\
  \AsetNOzero_{2,1}^{[1]} \defeqnew   & \{  i\in  \AsetNOzero_{2}^{[1]} \mid  \hv_i^\T  \Meg_{2}  = \hv_i^\T  \Meg_1 \}   \label{eq:Alj22New21}  \\     
  \AsetNOzero_{2,2}^{[1]} \defeqnew  &  \AsetNOzero_{2}^{[1]}   \setminus \AsetNOzero_{2,1}^{[1]} =  \{  i\in  \AsetNOzero_{2}^{[1]} \mid  \hv_i^\T  \Meg_{2}  \neq  \hv_i^\T  \Meg_1 \}   \label{eq:Alj22New22} \\
    \AsetNOzero_{1,2}^{[1]} \defeqnew   & \{  i\in  \AsetNOzero_{1}^{[1]} \mid  \hv_i^\T  \Meg_{1}  = \hv_i^\T  \Meg_2 \}   \label{eq:Alj22New12}  \\     
  \AsetNOzero_{1,1}^{[1]} \defeqnew  &  \AsetNOzero_{1}^{[1]}   \setminus \AsetNOzero_{1,2}^{[1]} =  \{  i\in  \AsetNOzero_{1}^{[1]} \mid  \hv_i^\T  \Meg_{1}  \neq  \hv_i^\T  \Meg_2 \}   \label{eq:Alj22New11} 
 \end{align}  
Given the identity    $|\Ac_{1}|+|\Ac_{2}|= n-|\Fc|-\sum_{\Lin=3}^{\gn} |\Ac_{\Lin}|$,  we will consider each of the following two cases: 
\begin{align}
\text{Case~1:} \quad  
|\Ac_{2}| &\leq    \frac{n-|\Fc|-\sum_{\Lin=3}^{\gn} |\Ac_{\Lin}|}{2} ,  \label{eq:ACOOLcase1b} \\
\text{Case~2:} \quad |\Ac_{1}| &\leq  \frac{n-|\Fc|-\sum_{\Lin=3}^{\gn} |\Ac_{\Lin}|}{2}  .   \label{eq:ACOOLcase2a} 
 \end{align}

  \noindent \emph{$\bullet$ \underline{Analysis for Case~1:}}  \\   
We   first consider Case~1.  
Under the assumption that  all honest nodes eventually input their initial messages and keep running the $\COOLBUA$ protocol, for each Node~$i \in \AsetNOzero_{2}^{[1]}$,  it is true that 
 \begin{align}  
\Bc^{[1]}  \in  \Ss_0^{[1]}\cup \Lkset_0 ,        \label{eq:B1inZeroSet} 
\end{align}  
 \begin{align}  
  \AsetNOzero_{1,1}^{[1]}  \in  \Ss_0^{[1]}\cup \Lkset_0   ,      \label{eq:A11inZeroSet} 
\end{align}  
and that 
 \begin{align}  
 |\Ss_0^{[1]}\cup \Lkset_0| &\geq   |\Bc^{[1]} \cup    \AsetNOzero_{1,1}^{[1]}  |  \non\\
& =  n-|\Fc|  - ( |\AsetNOzero_{1,2}^{[1]}|+ |\AsetNOzero_{2}^{[1]}|)\non\\
   & =  n-\underbrace{|\Fc|}_{\leq t}  - (\underbrace{ |\AsetNOzero_{1,2}^{[1]}|  +  |\AsetNOzero_{2,1}^{[1]}|}_{\leq k-1}+  \underbrace{|\AsetNOzero_{2,2}^{[1]}|}_{\leq 2(k-1)})\non\\
 & \geq  n-t-3(k-1)           \label{eq:Ss0Lkset0Bound88} \\
   & >   t+1.            \label{eq:Ss0Lkset0Bound99} 
\end{align}
 Here, \eqref{eq:Ss0Lkset0Bound88} follows from Lemma~\ref{lm:ACOOLboundA22} and the identity that $|\AsetNOzero_{1,2}^{[1]}|  +  |\AsetNOzero_{2,1}^{[1]}| \leq |\Ac_{1,2}|  +  |\Ac_{2,1}| \leq k-1$  (see Lemma~\ref{lm:sizeboundMatrix}), while  \eqref{eq:Ss0Lkset0Bound99} uses the assumption of $n\geq 3t+1$ and $k\leq t/3$.  
The result $|\Ss_0^{[1]}\cup \Lkset_0| > t+1$ in \eqref{eq:Ss0Lkset0Bound99} reveals that  each Node~$i \in \AsetNOzero_{2}^{[1]}$ eventually sets $\Ry_i^{[2]} = 0$, as described in Lines~\ref{line:ACOOLph2CSI0Cond} and \ref{line:ACOOLph2CSI0} of Algorithm~\ref{algm:COOLBUA},  for Case~1.

  \noindent \emph{$\bullet$ \underline{Analysis for Case~2:}}  \\   
  By interchanging the roles of $\Ac_{1}$ and $\Ac_{2}$ and following the proof for Case~1, one can show that each Node $i \in \AsetNOzero_{1}^{[1]}$ eventually sets $\Ry_i^{[2]} = 0$, as described in Line~\ref{line:ACOOLph2CSI0} of Algorithm~\ref{algm:COOLBUA}, for Case~2.
\end{proof}

\begin{lemma}    \label{lm:ACOOLboundA22}
Given   $|\Ac_{i}| \leq    \frac{n-|\Fc|-\sum_{\Lin=3}^{\gn} |\Ac_{\Lin}|}{2}$ and   $\gn\geq 2$, and assuming that  all honest nodes eventually input their initial messages and keep running the $\COOLBUA$ protocol,  it is true that 
\begin{align}
|\AsetNOzero_{i,i}^{[1]}| \leq  2(k-1)    
\end{align}
for $i\in \{1,2\}$.    
\end{lemma}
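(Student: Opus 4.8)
The plan is to follow the proof of \cite[Lemma~16]{ChenOciorCOOL:24}, adapting it from the sets $\Ac_{\Lin}^{[p]}$ to the never-zero sets $\AsetNOzero_{\Lin}^{[p]}$. By swapping the roles of $\Meg_1$ and $\Meg_2$ (and of the two case hypotheses) it suffices to treat $i=2$, so assume $|\Ac_2|\leq (n-|\Fc|-\sum_{\Lin=3}^{\gn}|\Ac_{\Lin}|)/2$. If $\AsetNOzero_2^{[1]}=\emptyset$ --- in particular if $\gnNozero^{[1]}<2$, which by Lemma~\ref{lm:OciorACOOLeta12New} is the only alternative to $\gnNozero^{[1]}=2$ --- then $\AsetNOzero_{2,2}^{[1]}=\emptyset$ and the bound is trivial; so assume $\gnNozero^{[1]}=2$, which also makes $\AsetNOzero_1^{[1]}$ nonempty, and assume $\AsetNOzero_{2,2}^{[1]}\neq\emptyset$, fixing $j\in\AsetNOzero_{2,2}^{[1]}$. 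By definition $\Me_j=\Meg_2$, $\hv_j^\T\Meg_2\neq\hv_j^\T\Meg_1$, and $\Ry_j^{[1]}$ is never set to $0$.

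First I would record the link-set facts. Since all honest nodes eventually input and multicast their symbol pairs and $\Ry_j^{[1]}$ is never set to $0$, the trigger $|\Lkset_0|\geq t+1$ in Line~\ref{line:COOLBUASI1zeroCond} of Algorithm~\ref{algm:COOLBUA} is never met at node~$j$, so $|\LksetHonest_0^{(j)}|\leq|\Lkset_0^{(j)}|\leq t$, and since eventually $\LksetHonest_1^{(j)}\cup\LksetHonest_0^{(j)}=[n]\setminus\Fc$ this gives $|\LksetHonest_1^{(j)}|\geq n-|\Fc|-t$; the same holds for the reference node $j_1\in\AsetNOzero_1^{[1]}$. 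Next I would read off membership: by \eqref{eq:LksetUseri}, a node $m$ lies in $\LksetHonest_1^{(j)}$ iff $\hv_m^\T\Me_m=\hv_m^\T\Meg_2$ and $\hv_j^\T\Me_m=\hv_j^\T\Meg_2$, whence $\Ac_2\subseteq\LksetHonest_1^{(j)}$, no node of $\Ac_1$ lies in $\LksetHonest_1^{(j)}$ (as $\hv_j^\T\Meg_1\neq\hv_j^\T\Meg_2$), and $\LksetHonest_1^{(j)}\cap\Ac_{\Lin}\subseteq\Ac_{\Lin,2}$ for each $\Lin\geq3$; in particular $\Ac_1\subseteq\LksetHonest_0^{(j)}$ together with every honest node of a group $\Lin\geq3$ not link-matched to $j$. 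Symmetric statements hold for $j_1$ with $\Meg_1$ in place of $\Meg_2$, distinguishing $j_1\in\AsetNOzero_{1,1}^{[1]}$ (then $\LksetHonest_1^{(j_1)}\cap\Ac_2=\emptyset$) from $j_1\in\AsetNOzero_{1,2}^{[1]}$ (then $\LksetHonest_1^{(j_1)}\cap\Ac_2=\Ac_{2,1}$, of size at most $k-1$ by Lemma~\ref{lm:sizeboundMatrix}).

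Then I would exploit the hypothesis, which is equivalent to $|\Ac_2|\leq|\Ac_1|$: combined with $\Ac_1\subseteq\LksetHonest_0^{(j)}$ this already yields $|\Ac_1|\leq t$ and hence $|\Ac_2|\leq t$. Finally I would run the counting inside $[n]\setminus\Fc$: the honest nodes of the groups $\Lin\geq3$ split into those link-matched to $j$ (which lie in $\bigcup_{\Lin\geq3}\Ac_{\Lin,2}$) and those that are not (which, together with $\Ac_1$, must fit inside $\LksetHonest_0^{(j)}$, a set of size $\leq t$); meanwhile the inclusion--exclusion bound $|\LksetHonest_1^{(j)}|+|\LksetHonest_1^{(j_1)}|-|\LksetHonest_1^{(j)}\cap\LksetHonest_1^{(j_1)}|\leq n-|\Fc|$ holds, and $\LksetHonest_1^{(j)}\cap\LksetHonest_1^{(j_1)}$ contains only honest $m$ with $\hv_m^\T\Meg_1=\hv_m^\T\Meg_2$, so its cardinality is controlled through $|\Ac_{1,2}|+|\Ac_{2,1}|<k$ and the error-correction property underlying Lemma~\ref{lm:sizeboundMatrix}. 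Feeding $n\geq3t+1$, $|\Fc|\leq t$, and $k\leq t/3$ into these relations should pin $|\AsetNOzero_{2,2}^{[1]}|=|\AsetNOzero_2^{[1]}|-|\AsetNOzero_{2,1}^{[1]}|$ down to at most $2(k-1)$.

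The hard part is precisely this last bookkeeping step. A single reference node only delivers the crude bound $|\AsetNOzero_{2,2}^{[1]}|\leq t$; squeezing it to $2(k-1)\approx 2t/3$ forces one to use simultaneously that $|\Ac_2|$ is the smaller side (so the slack $t-|\Ac_1|$ is available), the precise split of the groups $\Lin\geq3$ between ``link-matched to $j$'' and ``forced into $\LksetHonest_0^{(j)}$,'' and the Lemma~\ref{lm:sizeboundMatrix} caps on $\AsetNOzero_{1,2}^{[1]}$, $\AsetNOzero_{2,1}^{[1]}$, and on $\LksetHonest_1^{(j)}\cap\LksetHonest_1^{(j_1)}$; the factor~$2$ only materialises after all of these are combined with the arithmetic of $n,t,|\Fc|,k$, exactly as in \cite[Lemma~16]{ChenOciorCOOL:24}. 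The degenerate sub-case $\AsetNOzero_{1,1}^{[1]}=\emptyset$ (so $\AsetNOzero_1^{[1]}=\AsetNOzero_{1,2}^{[1]}$ has size $\leq k-1$) should be disposed of separately, using only the single-reference-node argument.
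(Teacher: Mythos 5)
Your proposal assembles the right preliminary facts (the bound $|\LksetHonest_0^{(i)}|\le t$ for every $i\in\AsetNOzero_{2,2}^{[1]}$, hence $|\LksetHonest_1^{(i)}|\ge n-|\Fc|-t$; the fact that $\Lk_i(j)=0$ for $j\in\Ac_1$ because $\hv_i^\T\Meg_1\neq\hv_i^\T\Meg_2$; and the Lemma~\ref{lm:sizeboundMatrix} caps), but it stops exactly where the proof actually happens, and the route you sketch for the final step would not close. Inclusion--exclusion between $\LksetHonest_1^{(j)}$ and $\LksetHonest_1^{(j_1)}$ for one node $j\in\AsetNOzero_{2,2}^{[1]}$ and one reference node $j_1\in\AsetNOzero_1^{[1]}$ only constrains the sizes of those two link sets (it reproduces the kind of computation used to prove $\gn^{[1]}\le 2$ in Lemma~\ref{lm:OciorACOOLeta12New}, and with only two reference groups it yields no contradiction); it says nothing about \emph{how many} nodes belong to $\AsetNOzero_{2,2}^{[1]}$. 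You acknowledge this yourself by conceding that the single-reference argument only gives $|\AsetNOzero_{2,2}^{[1]}|\le t$ and that ``the hard part is precisely this last bookkeeping step'' --- but that step is the lemma, and it is not supplied.

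The missing idea is a double count of the link indicators over the bipartite structure between $\AsetNOzero_{2,2}^{[1]}$ and $\bigcup_{\Lin=3}^{\gn}\Ac_{\Lin}$, with no reference node in $\AsetNOzero_1^{[1]}$ at all. Row-wise, each $i\in\AsetNOzero_{2,2}^{[1]}$ satisfies
$\sum_{j\in\cup_{\Lin\ge3}\Ac_{\Lin}}\Lk_i(j)\ \ge\ n-t-|\Fc|-|\Ac_2|$
(after discarding $\Ac_2$ and using $\Lk_i(j)=0$ on $\Ac_1$). Column-wise, for any fixed $j\in\Ac_{\Lin^\star}$ with $\Lin^\star\ge 3$, only nodes $i\in\Ac_{2,\Lin^\star}^{[1]}$ can have $\Lk_i(j)=1$, so $\sum_{i\in\AsetNOzero_{2,2}^{[1]}}\Lk_i(j)\le k-1$ by Lemma~\ref{lm:sizeboundMatrix}. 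Equating the two counts gives
$|\AsetNOzero_{2,2}^{[1]}|\le \frac{(k-1)\sum_{\Lin\ge3}|\Ac_{\Lin}|}{\,n-t-|\Fc|-|\Ac_2|\,}$,
and substituting the hypothesis $|\Ac_2|\le\frac{n-|\Fc|-\sum_{\Lin\ge3}|\Ac_{\Lin}|}{2}$ into the denominator collapses this to $2(k-1)$; this is where the factor $2$ comes from, not from the arithmetic of $n,t,k$ you list. The degenerate case is also different from the one you flag: it is $\gn=2$, where the column index set is empty, so the row-wise lower bound forces $n-t-|\Fc|-|\Ac_2|\le 0$, contradicting the hypothesis and giving $\AsetNOzero_{2,2}^{[1]}=\emptyset$ outright.
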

\begin{proof}
This proof closely follows  that of  \cite[Lemma 8]{ChenOciorCOOL:24}.  
Here we assume that  all honest nodes eventually input their initial messages and keep running the $\COOLBUA$ protocol.  
Without loss of generality, we  just  focus on the proof of  $|\AsetNOzero_{2,2}^{[1]}| \leq  2(k-1)$, given  the condition $|\Ac_{2}| \leq    \frac{n-|\Fc|-\sum_{\Lin=3}^{\gn} |\Ac_{\Lin}|}{2}$ and   $\gn\geq 2$.  

We first  consider the case where $\gn \geq 3$.  
Recall that $\Lkset_1^{(i)}:=\{j\in [n] \mid   (y_i^{(j)}, y_j^{(j)}) = (y_i^{(i)}, y_j^{(i)})  \}$ and $\Lkset_0^{(i)}:=\{j\in [n] \mid   (y_i^{(j)}, y_j^{(j)}) \neq  (y_i^{(i)}, y_j^{(i)})  \}$ denote the  link indicator sets $\Lkset_1$ and $\Lkset_0$ updated by Node~$i$, as described in Line~\ref{line:ACOOLph1MatchCond}  of Algorithm~\ref{algm:COOLBUA} (see \eqref{eq:LksetUseri}). 
Also recall that $\LksetHonest_1^{(i)} := \Lkset_1^{(i)} \setminus \Fc$  and $\LksetHonest_0^{(i)} := \Lkset_0^{(i)} \setminus \Fc$.    
Here $\Lk_i(j) \in \{0,1\}$   denotes the link indicator between Node~$i$ and Node~$j$, defined in \eqref{eq:lkindicator}.  
Under the assumption  that all honest nodes eventually input their initial messages and keep running the $\COOLBUA$ protocol, since each Node~$i \in \AsetNOzero_{2,2}^{[1]}$  never set   $ \Ry_i^{[1]} = 0$,  it holds true that  
 \begin{align}
|\LksetHonest_0^{(i)}| \leq  t,\quad     \forall i\in \AsetNOzero_{2,2}^{[1]} .    \label{eq:A22conditionNozero99} 
\end{align}
If   all honest nodes eventually input their initial messages and keep running the $\COOLBUA$ protocol, it is true that $|\LksetHonest_1^{(i)}|  + |\LksetHonest_0^{(i)}|    = n-|\Fc|$. 
Thus, from the above identity and from \eqref{eq:A22conditionNozero99},   we have 
 \begin{align}
|\LksetHonest_1^{(i)}| =n-|\Fc| -  |\LksetHonest_0^{(i)}|  \geq  n-|\Fc|  -  t,\quad     \forall i\in \AsetNOzero_{2,2}^{[1]} .    \label{eq:A22conditionNozero88} 
\end{align}
The result in \eqref{eq:A22conditionNozero88} implies that 
 \begin{align}
\sum_{ j \in [n]\setminus \Fc}   \Lk_{i}  (j)  \geq  n- |\Fc| - t,\quad    \forall i\in \AsetNOzero_{2,2}^{[1]} .     \label{eq:A22conditionNozero77} 
\end{align}
and   that 
\begin{align}
\sum_{ j \in [n]\setminus (\Fc\cup \Ac_{2})}   \Lk_{i}  (j)  \geq n-t -|\Fc|-|\Ac_{2}|,\quad      \forall i\in \AsetNOzero_{2,2}^{[1]} .     \label{eq:A22conditionNozero66} 
\end{align}
In this setting, since $\hv_i^\T  \Meg_{1}  \neq  \hv_i^\T  \Meg_2 $ for any  $i\in \AsetNOzero_{2,2}^{[1]}$  (see \eqref{eq:Alj22New22}),   we conclude that $\Lk_{i} (j) =0$,  $\forall i\in \AsetNOzero_{2,2}^{[1]}, \forall j \in \Ac_{1}$. 
This result implies that   the inequality in \eqref{eq:A22conditionNozero66} can be updated as 
 \begin{align}
\sum_{ j \in [n]\setminus (\Fc\cup \Ac_{2}\cup \Ac_{1})}   \Lk_{i} (j)  \geq n-t -|\Fc|-|\Ac_{2}|,\quad     \forall i\in \AsetNOzero_{2,2}^{[1]} .     \label{eq:A22conditionNozero55} 
\end{align}
Given    $[n]\setminus (\Fc\cup \Ac_{2}\cup \Ac_{1}) =\cup_{\Lin=3}^{\gn}\Ac_{\Lin}$,  the result in \eqref{eq:A22conditionNozero55} can be rewritten as  
 \begin{align}
\sum_{ j \in\cup_{\Lin=3}^{\gn}\Ac_{\Lin}}   \Lk_{i} (j)  \geq n-t -|\Fc|-|\Ac_{2}|,\quad     \forall i\in \AsetNOzero_{2,2}^{[1]}   \label{eq:A22conditionNozero44} 
\end{align}
which implies  the following bound 
 \begin{align}
\sum_{ i \in \AsetNOzero_{2,2}^{[1]}}  \sum_{ j \in\cup_{\Lin=3}^{\gn}\Ac_{\Lin}}   \Lk_{i} (j)    \geq (n-t -|\Fc|-|\Ac_{2}|) \cdot |\AsetNOzero_{2,2}^{[1]} | .  \label{eq:A22conditionNozero91875} 
\end{align}

On the other hand,   for  any given $j\in   \Ac_{ \Lin^{\star}}$ and  $ \Lin^{\star} \in [3,\gn]$,     we have   
\begin{align}
\sum_{ i \in \AsetNOzero_{2,2}^{[1]}}   \Lk_{i} (j)  \leq\!  \! \sum_{ i \in \Ac_{2}^{[1]}}   \Lk_{i} (j)  =   \! \sum_{ i \in \Ac_{2,   \Lin^{\star}}^{[1]}}   \Lk_{i} (j)   +\!\!\! \sum_{ i \in \Ac_{2}^{[1]}  \setminus \Ac_{2,   \Lin^{\star}}^{[1]}}   \Lk_{i} (j)      
=  \!\! \! \sum_{ i \in \Ac_{2,   \Lin^{\star}}^{[1]}}   \Lk_{i} (j)    
 \leq     |\Ac_{2,   \Lin^{\star}}^{[1]}|    
 \leq     k-1     \label{eq:A22conditionNozero81857}
\end{align}
where the above result uses the identity that $\hv_i^\T  \Meg_{ \Lin^{\star}}\neq   \hv_i^\T  \Meg_{2}$ for $i \in \Ac_{2}^{[1]}  \setminus \Ac_{2,   \Lin^{\star}}^{[1]}$ (see  \eqref{eq:Alj11}),  and  from Lemma~\ref{lm:sizeboundMatrix}. 
The result in \eqref{eq:A22conditionNozero81857} then implies that  
 \begin{align}
\sum_{ j \in\cup_{\Lin=3}^{\gn}\Ac_{\Lin}} \sum_{ i \in \AsetNOzero_{2,2}^{[1]}}   \Lk_{i} (j)  \leq    (k-1) \cdot \sum_{\Lin=3}^{\gn} |\Ac_{\Lin}|  \label{eq:A22conditionNozero43536} 
\end{align}

From    the inequalities     \eqref{eq:A22conditionNozero91875} and  \eqref{eq:A22conditionNozero43536},  we have $(n-t -|\Fc|-|\Ac_{2}|) \cdot |\AsetNOzero_{2,2}^{[1]} |   \leq    (k-1) \cdot \sum_{\Lin=3}^{\gn} |\Ac_{\Lin}| $ and 
 \begin{align}
|\AsetNOzero_{2,2}^{[1]}| \leq    \frac{(k-1) \cdot \sum_{\Lin=3}^{\gn} |\Ac_{\Lin}|}{n-t- |\Fc| -|\Ac_{2}| }  ,    \label{eq:A22conditionNozero32535}  
\end{align}
where $n-t- |\Fc| -|\Ac_{2}|>0$ is  true  given $|\Ac_{2}| \leq    \frac{n-|\Fc|-\sum_{\Lin=3}^{\gn} |\Ac_{\Lin}|}{2}$  and $n\geq 3t+1$. 
The bound in    \eqref{eq:A22conditionNozero32535} can be further extended as  
\begin{align}
|\AsetNOzero_{2,2}^{[1]}|  
\leq    \frac{(k-1) \cdot \sum_{\Lin=3}^{\gn} |\Ac_{\Lin}|}{n-t- |\Fc| - \frac{n-|\Fc|-\sum_{\Lin=3}^{\gn} |\Ac_{\Lin}|}{2}}    
=    \frac{2(k-1)  }{\frac{n-2t- |\Fc|}{\sum_{\Lin=3}^{\gn} |\Ac_{\Lin}|} + 1}    
\leq     \frac{2(k-1)  }{0 + 1}      
=    2(k-1)       \label{eq:A22bound82755}  
\end{align}
where   the first inequality follows from  the condition $|\Ac_{2}| \leq    \frac{n-|\Fc|-\sum_{\Lin=3}^{\gn} |\Ac_{\Lin}|}{2}$;  and the second inequality  results from  the fact  that   $\frac{n-2t- |\Fc|}{\sum_{\Lin=3}^{\gn} |\Ac_{\Lin}|} > 0$ in this   case with $\gn \geq 3$.  

We now focus on  the case where $\gn = 2$.    
We assume that $|\AsetNOzero_{2,2}^{[1]}| >0$.  Under this assumption,  by following the steps in \eqref{eq:A22conditionNozero99}-\eqref{eq:A22conditionNozero44}, and given $\gn = 2$, we have 
 \begin{align}
0=\sum_{ j \in\cup_{\Lin=3}^{\gn}\Ac_{\Lin}}   \Lk_{i} (j)  \geq n-t -|\Fc|-|\Ac_{2}|,\quad     \forall i\in \AsetNOzero_{2,2}^{[1]} .   \label{eq:A22condition28366BB} 
\end{align}
The bound in  \eqref{eq:A22condition28366BB}   contradicts the condition   $|\Ac_{2}| \leq    \frac{n-|\Fc|-\sum_{\Lin=3}^{\gn} |\Ac_{\Lin}|}{2} < n-t -|\Fc|$. Hence, the assumption   $|\AsetNOzero_{2,2}^{[1]}| >0$ leads to a contradiction; therefore,  it is true that $|\AsetNOzero_{2,2}^{[1]}|=0$   for this case with  $\gn = 2$.
This   completes the proof. 
\end{proof}

\begin{algorithm}
\caption{$\OciorACOOLsmallt$ protocol with identifier $\IDCOOL$ for a small $t$.  Code is shown for    Node~$\thisnodeindex \in [n]$. }  \label{algm:OciorACOOLsmallt}
\begin{algorithmic}[1]

\footnotesize
 
  \Statex   \emph{//   ** This protocol is designed for the case where $t$ is very small compared to $n$ **}

\Statex \emph{// ** It includes  $\OciorACOOL$ within   $\Sc' := [n']$, 
and  a \emph{Strongly-Honest-Majority Distributed Multicast} ($\SHMDM$) from $\Sc'$.  **}

\Statex \emph{// ** Here $n' \defeqnew 3t + 1$. **}

\Statex
\State  Initially set $n' \gets  3t + 1; \kencode \gets      \networkfaultsizet / 3; \OECCorrectSymbolSet\gets \{\}$ 
 
 \Statex

  \Statex   \emph{//   ********************  $\OciorACOOL$********************}

\State {\bf upon} receiving a non-empty  message  input $\Me_{i}$  and $i\in [n']$ {\bf do}:

\Indent  

 \State    $\Pass$ $\Me_{i}$ into  $\OciorACOOL$    as an input value
 \State    run the $\OciorACOOL$ protocol with other nodes within $[n']$ 
\EndIndent

 \Statex
    			
     \Statex   \emph{//   ********************   $\SHMDM$ ********************}  
      			
 \State {\bf upon} outputting a message    $\Me'$  from  $\OciorACOOL$ protocol,   and $i\in [n']$   {\bf do}:       
\Indent  
	\State $[y_1', y_2', \dotsc, y_{n'}'] \gets \ECCEnc(n', \kencode, \Me')$
 	\State $\send$  $\ltuple\SHMDM, \IDCOOL, y_i \rtuple$  to  all nodes within $[n]\setminus [n']$ 	     
	\State $\Output$   $\Me'$ and $\terminate$     		     
\EndIndent

\State {\bf upon} receiving   $\ltuple\SHMDM, \IDCOOL, y_j' \rtuple$  from  Node~$j \in [n']$ for the first time, and  $i\notin [n']$   {\bf do}:   
\Indent  
	\State $\OECCorrectSymbolSet[j] \gets y_j' $    
 	\If  { $|\OECCorrectSymbolSet|\geq  \kencode + \networkfaultsizet  $}      \quad    \quad\quad \quad \quad\quad\quad \quad    \quad\quad \quad \quad\quad\quad \quad   \quad\quad \quad \quad\quad\quad \quad   \quad\quad \quad \quad\quad\quad \quad    \emph{//   online error correcting   }  
			\State   $\MVBAOutputMsg  \gets \ECCDec(n',  \kencode , \OECCorrectSymbolSet)$	     
			\State  $[y_{1}, y_{2}, \cdots, y_{n'}] \gets \ECCEnc (n',  \kencode, \MVBAOutputMsg)$ 
    			\If {at least $\kencode + \networkfaultsizet$ symbols in $[y_{1}, y_{2}, \cdots, y_{n'}]$ match with  those in $\OECCorrectSymbolSet$}
				\State $\Output$   $\MVBAOutputMsg$ and $\terminate$     		     
    			\EndIf    

	\EndIf   
		     
\EndIndent

\end{algorithmic}
\end{algorithm}

\begin{algorithm}
\caption{$\OciorRBA$ protocol with identifier $ \IDCOOL $.  Code is shown for    Node~$\thisnodeindex$ for $ \thisnodeindex \in [n]$. }  \label{algm:OciorRBA}   
\begin{algorithmic}[1]

\footnotesize
 
\State  Initially set   $\kencode \gets      \networkfaultsizet / 3;  \Me^{(i)}\gets \defaultvalue;  \OECSIFinal\gets 0;    \OECCorrectSymbolSet\gets \{\};     \Phthreeindicator\gets 0 $

 \vspace{3pt}

  \Statex   \emph{//   ********************  $\COOLBUA$********************} 
	
\State {\bf upon} receiving a non-empty  message  input $\Me_{i}$ {\bf do}:

\Indent  

 \State    $\Pass$ $\Me_{i}$ into  $\COOLBUA$    as an input value
 
\EndIndent

 \vspace{3pt}

  \Statex   \emph{//   ******************** Binary  $\RBA$  ($\BRBA$)********************}

\State {\bf upon}  delivery of $\Ss_{\Vr^\star}^{[2]}$  from   $\COOLBUA$ such that  $|\Ss_{\Vr^\star}^{[2]}|\geq  n- t$,     for a  $\Vr^\star \in \{1,0\}$,  and $\ltuple\READY, \IDCOOL, * \rtuple$  not yet sent {\bf do}:     
\Indent  
		\State $\send$ $\ltuple\READY, \IDCOOL, \Vr^\star \rtuple$ to  all nodes  	\    \emph{// For some application in \cite{ChenOciorMVBA:24},  the value of  $\Vr^\star$ is  delivered   to the protocol invoking $\OciorRBA$}       
\EndIndent

\State {\bf upon} receiving   $\networkfaultsizet+1$  $\ltuple \READY, \IDCOOL, \Vr  \rtuple$ messages  from different   nodes for the same $\Vr$, and $\ltuple\READY, \IDCOOL, * \rtuple$  not yet sent {\bf do}:          
\Indent  
		\State $\send$ $\ltuple\READY, \IDCOOL, \Vr \rtuple$ to  all nodes	   
\EndIndent

\State {\bf upon} receiving   $2t+1$  $\ltuple \READY, \IDCOOL, \Vr  \rtuple$ messages  from different nodes   for the same $\Vr$ {\bf do}:    
\Indent  
 	\State  set $\VrOutput\gets \Vr$            
     \IfThenElse {$\VrOutput =0$}  {$\Output$   $\Me^{(i)}=\defaultvalue$ and $\terminate$ } {set $\Phthreeindicator\gets 1$ }     
     
\EndIndent

 \vspace{3pt}
  
  \Statex   \emph{//   ********************  $\COOLHMDM$ ********************}

\State {\bf upon} $\Phthreeindicator= 1$ {\bf do}:            \quad    \emph{// For some application in \cite{ChenOciorMVBA:24}, $\Phthreeindicator$ may be set $\Phthreeindicator\gets 1$ by receiving  a binary  input  $\ABAoutput =1$  (other than   message  input) } 
\Indent  
\If { $\COOLBUA$ has delivered $[\Me^{(i)}, \Ry_{i}^{[2]},   \Vr_i ]$ with $\Ry_i^{[2]} = 1$}         
	\State $\Output$   $\Me^{(i)}$ and $\terminate$    
		 
\Else
 
			\State  $\wait$ until $\COOLBUA$ delivering at least  $\networkfaultsizet +1$ $\ltuple\SYMBOL, \IDCOOL, (y_i^{(j)},  *) \rtuple$, $\forall j \in   \Ss_1^{[2]}$, for the same    $y_i^{(j)} =y^{\star}$,  for some   $y^{\star}$   
			\State  $y_i^{(i)} \gets y^{\star}$      \quad\quad\quad\quad \quad\quad\quad\quad\quad\quad\quad\quad\quad\quad\quad\quad   \quad\quad\quad\quad\quad\quad\quad\quad \quad\quad \emph{// update coded symbol based on  majority rule}  
	\State   $\send$   $\ltuple \CORRECTSYMBOL, \IDCOOL, y_i^{(i)}  \rtuple$  to  all nodes     
			\State  $\wait$ until  $\OECSIFinal=1$ 	
			\State $\Output$   $\Me^{(i)}$ and $\terminate$     
	
\EndIf

\EndIndent

\State {\bf upon} receiving   $\ltuple\CORRECTSYMBOL, \IDCOOL, y_j^{(j)} \rtuple$  from  Node~$j$ for the first time,   $j\notin \OECCorrectSymbolSet$, and $\OECSIFinal=0$   {\bf do}:  
\Indent  
	\State $\OECCorrectSymbolSet[j] \gets y_j^{(j)}$    
 	\If  { $|\OECCorrectSymbolSet|\geq  \kencode + \networkfaultsizet  $}     \label{line:RBAOECbegin}    \quad    \quad\quad \quad \quad\quad\quad \quad    \quad\quad \quad \quad\quad\quad \quad   \quad\quad \quad \quad\quad\quad \quad   \quad\quad \quad \quad\quad\quad \quad    \emph{//   online error correcting   }  
			\State   $\MVBAOutputMsg  \gets \ECCDec(n,  \kencode , \OECCorrectSymbolSet)$	     
			\State  $[y_{1}, y_{2}, \cdots, y_{n}] \gets \ECCEnc (n,  \kencode, \MVBAOutputMsg)$ 
    			\If {at least $\kencode + \networkfaultsizet$ symbols in $[y_{1}, y_{2}, \cdots, y_{n}]$ match with  those in $\OECCorrectSymbolSet$}
 				set  $ \Me^{(i)} \gets \MVBAOutputMsg$ and  $\OECSIFinal\gets 1$     \label{line:RBAOECend}	 
    			\EndIf    

	\EndIf   
		     
\EndIndent

\State {\bf upon} $\COOLBUA$ having  delivered  $\ltuple\SYMBOL, \IDCOOL, (*, y_j^{(j)})\rtuple$  and $\Ss_1^{[2]}$  such that   $j\in \Ss_1^{[2]}$, and $j\notin \OECCorrectSymbolSet$, and  $\OECSIFinal=0$   {\bf do}:    
\Indent  
	\State $\OECCorrectSymbolSet[j] \gets y_j^{(j)}$     
	\State run the OEC steps as in Lines~\ref{line:RBAOECbegin}-\ref{line:RBAOECend}       
\EndIndent

\end{algorithmic}
\end{algorithm}

\begin{algorithm}
\caption{$\OciorRBC$ protocol with identifier $ \IDCOOL $.  Code is shown for    Node~$\thisnodeindex$ for $ \thisnodeindex \in [n]$. }  \label{algm:OciorRBC}  
\begin{algorithmic}[1]

\footnotesize

\Statex \emph{// ** $\OciorRBC$ is a balanced $\RBC$ protocol that balances communication between the leader and the other nodes. **}

\Statex \emph{// ** Without balancing, in the initial phase the leader simply broadcasts the entire message to each node (see Lines~\ref{algm:OciorRBAWBCbegin}-\ref{algm:OciorRBAWBCEnd}). **}

  \Statex
\State  Initially set   $\kencode \gets      \networkfaultsizet / 3; \Me_{i}\gets \defaultvalue;   \OECSI \gets 0;    \OECsymbolset \gets  \{\};       \Phthreeindicator\gets 0 $

 \vspace{5pt}

  \Statex   \emph{//   ********************  Initial Phase  (With Balanced  Communication) ********************}

\State {\bf upon} receiving a non-empty  message  input $\wv$, and if this node is the leader {\bf do}:   
\Indent  
		\State  $[\OECsymbol_1, \OECsymbol_2, \cdots, \OECsymbol_{n}]\gets \ECCEnc(n, k, \wv)$
		\State   $\send$ $\ltuple   \LEADER, \IDCOOL,  \OECsymbol_{j} \rtuple $ to  Node~$j$,    $\forall j \in  [n]$       
 
\EndIndent

\State {\bf upon} receiving   $\ltuple   \LEADER, \IDCOOL,  \OECsymbol_{i} \rtuple $  from  the leader for the first time {\bf do}:  
\Indent  
		\State   $\send$ $\ltuple   \INITIAL, \IDCOOL,   \OECsymbol_{i} \rtuple $ to  all nodes             \quad\quad \quad \quad\quad\quad \quad   \quad\quad \quad \quad\quad\quad \quad   \quad\quad \quad \quad\quad\quad \quad \quad   \quad\quad \quad   \emph{// echo coded  symbol}   		
\EndIndent

\State {\bf upon} receiving message  $\ltuple   \INITIAL, \IDCOOL,  \OECsymbol_{j} \rtuple $ from Node~$j$  for the first time, and $\OECSI=0$   {\bf do}:   
\Indent  

		\State $\OECsymbolsetInitial[j] \gets \OECsymbol_{j}$	
 		\If  { $|\OECsymbolsetInitial|\geq  k + t  $}      \quad    \quad\quad \quad \quad\quad\quad \quad    \quad\quad \quad \quad\quad\quad \quad   \quad\quad \quad \quad\quad\quad \quad   \quad\quad \quad \quad\quad\quad \quad    \emph{//   online error correcting  (OEC)  }  
			\State   $\tilde{\wv}  \gets \ECCDec(n,  k, \OECsymbolsetInitial)$	
			\State  $ [\OECsymbol_{1}', \OECsymbol_{2}', \cdots, \OECsymbol_{n}'] \gets \ECCEnc (n,  k, \tilde{\wv})$ 
    			\If {at least $k + t$ symbols in $ [\OECsymbol_{1}', \OECsymbol_{2}', \cdots, \OECsymbol_{n}'] $ match with  those in $\OECsymbolsetInitial$, and $\tilde{\wv}$ is non-empty}
 				\State  $\Me_{i} \gets \tilde{\wv}; \OECSI\gets 1$   	 
    			\EndIf    
		\EndIf   
	    		
\EndIndent

 	  \vspace{5pt}
 
   \Statex   \emph{//   ********************  Initial Phase  (Only for the Case Without Balanced Communication) ********************}

\State {\bf upon} receiving a non-empty  message  input $\wv$, and if this node is the leader {\bf do}:    \label{algm:OciorRBAWBCbegin}    
\Indent  
		\State   $\send$ $\ltuple   \LEADERMESSAGE, \IDCOOL,  \wv \rtuple $ to  all nodes        
 
\EndIndent

\State {\bf upon} receiving   $\ltuple   \LEADERMESSAGE, \IDCOOL,  \wv \rtuple $  from  the leader for the first time {\bf do}:  
\Indent  
 				\State  $\Me_{i} \gets \tilde{\wv}$   	    \label{algm:OciorRBAWBCEnd}
\EndIndent

\Statex
  
  \Statex   \emph{//   ********************  $\OciorRBA$********************} 
	
\State {\bf upon} $\Me_{i}\neq \defaultvalue$ {\bf do}:

\Indent  

 \State    $\Pass$ $\Me_{i}$ into  $\OciorRBA$    as an input value
 
\EndIndent

\State {\bf upon}  delivery of  the output value $\Me^{(i)}$  from   $\OciorRBA$  {\bf do}:     
\Indent  
			\State $\Output$   $\Me^{(i)}$ and $\terminate$     
\EndIndent

\end{algorithmic}
\end{algorithm}

%


\end{document}